\documentclass{article}

\usepackage[vmargin=1.18in,hmargin=1.1in]{geometry}
\usepackage{amsmath,amssymb,amsfonts,graphicx,amsthm,nicefrac,mathtools,bm}
\usepackage{hyperref}
\usepackage[numbers]{natbib}
\usepackage[english]{babel}
\usepackage[T1]{fontenc}
\usepackage{bbm,mdframed}
\usepackage[dvipsnames]{xcolor}
\usepackage{xspace}
\usepackage{rotating,multirow}
\usepackage{enumerate,graphics,enumitem}
\usepackage[algo2e]{algorithm2e}
\usepackage{booktabs}



\setlist[itemize]{noitemsep, topsep=0pt}
\setlist[enumerate]{itemsep=5pt, topsep=5pt, leftmargin=25pt}

\newtheorem{theorem}{Theorem}

\definecolor{verylightblue}{rgb}{0.7,0.8,1}
  {\begin{mdframed}[backgroundcolor=verylightblue]\begin{theorem}}%
  {\end{theorem}\end{mdframed}}

\definecolor{verylightgray}{gray}{0.95}
  {\begin{mdframed}[backgroundcolor=verylightgray]\begin{proof}}%
  {\end{proof}\end{mdframed}}

\newtheorem{lemma}{Lemma}
\definecolor{verylightred}{rgb}{1,0.8,0.8}
  {\begin{mdframed}[backgroundcolor=verylightred]\begin{lemma}}%
  {\end{lemma}\end{mdframed}}

\newtheorem{proposition}{Proposition}
  {\begin{mdframed}[backgroundcolor=verylightblue]\begin{proposition}}%
  {\end{proposition}\end{mdframed}}

\newtheorem{fact}{Fact}
\theoremstyle{definition}
\newtheorem{definition}{Definition}
\theoremstyle{remark}

\makeatletter

\newtheorem*{rep@theorem}{\rep@title}
\newcommand{\newreptheorem}[2]
{\newenvironment{rep#1}[1]
{\def\rep@title{#2 \ref{##1}} \begin{rep@theorem}}%
 {\end{rep@theorem}}}
\makeatother
\newreptheorem{theorem}{Theorem}
\newreptheorem{lemma}{Lemma}
\newreptheorem{corollary}{Corollary}
\newreptheorem{proposition}{Proposition}


\newcommand{\figref}[1]{Figure~\ref{fig:#1}}

\newcommand{\secref}[1]{Section~\ref{sec:#1}}

\newcommand{\defref}[1]{Definition~\ref{def:#1}}

\newcommand{\lemref}[1]{Lemma~\ref{lem:#1}}

\newcommand{\factref}[1]{Fact~\ref{fact:#1}}
\newcommand{\thmref}[1]{Theorem~\ref{thm:#1}}

\newcommand{\tabref}[1]{Table~\ref{tab:#1}}

\newcommand{\eqnref}[1]{\eqref{eqn:#1}}

\setlength{\headheight}{35pt}


\DeclareMathOperator*{\argmax}{arg\,max}




\newcommand{\PP}[1]{\mathbb{P}\!\left\{{#1}\right\}} 
\newcommand{\EE}[1]{\mathbb{E}\left[{#1}\right]} 

\newcommand{\EEst}[2]{\mathbb{E}\left[{#1}\ \middle| \ {#2}\right]} 
\newcommand{\PPst}[2]{\mathbb{P}\!\left\{{#1}\ \middle| \ {#2}\right\}} 

\def\R{\mathbb{R}}

\newcommand{\ignore}[1]{}


\usepackage{tikz}
\usetikzlibrary{arrows}
\usetikzlibrary{shapes}




\newcommand{\theauthor}{Tijana Zrnic\textsuperscript{1} ~ ~ ~ Daniel L. Jiang\textsuperscript{2} ~ ~ ~Aaditya Ramdas\textsuperscript{3} ~ ~ ~ Michael I. Jordan\textsuperscript{4}\\ \\
Department of Electrical Engineering and Computer Sciences, UC Berkeley\textsuperscript{1,4}\\
Amazon\textsuperscript{2}\\
Department of Statistics and Data Science, Carnegie Mellon University\textsuperscript{3}\\
Department of Statistics, UC Berkeley\textsuperscript{4}\\
{\small \texttt{\{tijana\textsuperscript{1},jordan\textsuperscript{4}\}@eecs.berkeley.edu, jiadanie@amazon.com\textsuperscript{2}, aramdas@cmu.edu\textsuperscript{3}} }
}
\newcommand{\thetitle}{The Power of Batching in Multiple Hypothesis Testing}

\date{\vspace{-1ex}}
\author{\theauthor}
\title{\thetitle}


\usepackage[mmddyy]{datetime}


\newcommand{\nulls}{\mathcal{H}^0}

\newcommand{\fdp}{\textnormal{FDP}}
\newcommand{\fdr}{\textnormal{FDR}}

\newcommand{\mfdr}{\textnormal{mFDR}}

\newcommand{\fdphat}{\widehat{\fdp}}

\newcommand{\bh}{\textnormal{BH}}

\newcommand{\stbh}{\textnormal{St-BH}}
\newcommand{\One}[1]{{\bf{1}}\left\{{#1}\right\}}

\newcommand{\pih}{\widehat{\pi}}

\def\N{\mathbb N}

\def\F{\mathcal{F}}

\def\cR{\mathcal{R}}

\newcommand{\batchbh}{\text{Batch}_{\text{BH}}}
\newcommand{\batchsbh}{\text{Batch}_{\text{St-BH}}}

\makeatletter
\newcommand{\dotfrac}[2]{
\mathchoice
{\ooalign{$\genfrac{}{}{0pt}{0}{#1}{#2}$\cr\leavevmode\cleaders\hb@xt@ .22em{\hss $\displaystyle\cdot$\hss}\hfill\kern\z@\cr}}
{\ooalign{$\genfrac{}{}{0pt}{1}{#1}{#2}$\cr\leavevmode\cleaders\hb@xt@ .22em{\hss $\textstyle\cdot$\hss}\hfill\kern\z@\cr}}
{\ooalign{$\genfrac{}{}{0pt}{2}{#1}{#2}$\cr\leavevmode\cleaders\hb@xt@ .22em{\hss $\scriptstyle\cdot$\hss}\hfill\kern\z@\cr}}
{\ooalign{$\genfrac{}{}{0pt}{3}{#1}{#2}$\cr\leavevmode\cleaders\hb@xt@ .22em{\hss $\scriptscriptstyle\cdot$\hss}\hfill\kern\z@\cr}}
}
\makeatother

\usepackage{algorithm,algorithmic}

\newcommand{\defn}{\ensuremath{:\, =}}


%
\makeatletter
\long\def\@makecaption#1#2{
        \vskip 0.8ex
        \setbox\@tempboxa\hbox{\small {\bf #1:} #2}
        \parindent 1.5em  
        \dimen0=\hsize
        \advance\dimen0 by -3em
        \ifdim \wd\@tempboxa >\dimen0
                \hbox to \hsize{
                        \parindent 0em
                        \hfil 
                        \parbox{\dimen0}{\def\baselinestretch{0.96}\small
                                {\bf #1.} #2
                                } 
                        \hfil}
        \else \hbox to \hsize{\hfil \box\@tempboxa \hfil}
        \fi
        }
\makeatother


\begin{document}

 \maketitle


\begin{abstract}
    One important partition of algorithms for controlling the false discovery rate (FDR) in multiple testing is into \emph{offline} and \emph{online} algorithms. The first generally achieve significantly higher power of discovery, while the latter allow making decisions sequentially as well as adaptively formulating hypotheses based on past observations. Using existing methodology, it is unclear how one could trade off the benefits of these two broad families of algorithms, all the while preserving their formal FDR guarantees. To this end, we introduce $\batchbh$ and $\batchsbh$, algorithms for controlling the FDR when a possibly infinite sequence of batches of hypotheses is tested by repeated application of one of the most widely used offline algorithms, the Benjamini-Hochberg (BH) method or Storey's improvement of the BH method. We show that our algorithms interpolate between existing online and offline methodology, thus trading off the best of both worlds.
 \end{abstract}

\section{Introduction}
Consider the setting in which a large number of decisions need to be made (e.g., hypotheses to be tested), and one wishes to achieve some form of aggregate control over the quality of these decisions. For binary decisions, a seminal line of research has cast this problem in terms of an error metric known as the \emph{false discovery rate} (FDR)~\citep{BH95}. The FDR has a Bayesian flavor, conditioning on the decision to reject (i.e., conditioning on a ``discovery'') and computing the fraction of discoveries that are false.  This should be contrasted with traditional metrics---such as sensitivity, specificity, Type I and Type II errors---where one conditions not on the decision but rather on the hypothesis---whether the null or the alternative is true.  The scope of research on FDR control has exploded in recent years, with progress on problems such as dependencies, domain-specific constraints, and contextual information.

Classical methods for FDR control are ``offline'' or ``batch'' methods, taking in a single batch of data and outputting a set of decisions for all hypotheses at once. This is a serious limitation in the setting of emerging applications at planetary scale, such as A/B testing in the IT industry \citep{kohavi2017online}, and researchers have responded by developing a range of \emph{online FDR control} methods~\citep{foster2008alpha, aharoni2014generalized, javanmard2016online,ramdas2018saffron,tian2019addis}.  In the online setting, a decision is made at every time step with no knowledge of future tests, and with possibly infinitely many tests to be conducted overall. By construction, online FDR algorithms guarantee that the FDR is controlled during the whole sequence of tests, and not merely at the end. 

Online and offline FDR methods both have their pros and cons. Online methods allow the testing of infinitely many hypotheses, and require less coordination in the setting of multiple decision-makers.  Also, perhaps most importantly, they allow the scientist to choose new hypotheses adaptively, depending on the results of previous tests. On the other hand, offline FDR methods tend to make significantly more discoveries due to the fact that they have access to \emph{all} test statistics before making decisions, and not just to the ones from past tests. That is, online methods are myopic, and this can lead to a loss of statistical power.  Moreover, the decisions of offline algorithms are \emph{stable}, in the sense that they are invariant to any implicit ordering of hypotheses; this is not true of online algorithms, whose discovery set can vary drastically depending on the ordering of hypotheses \citep{foster2008alpha}.

By analogy with batch and online methods in gradient-based optimization, these considerations suggest investigating an intermediate notion of ``mini-batch,'' hoping to exploit and manage some form of tradeoff between methods that are purely batch or purely online.

Managing such a tradeoff is, however, more challenging in the setting of false-discovery-rate control than in the optimization setting. Indeed, consider a naive approach that would run offline algorithms on different batches of hypotheses in an online fashion. Unfortunately, such a method violates the assumptions behind FDR control, yielding uncontrolled, possibly meaningless FDR guarantees. To illustrate this point, \figref{naive} plots the  performance of the Benjamini-Hochberg (BH) algorithm \cite{BH95} and Storey's improved version of the BH algorithm (Storey-BH) \cite{Storey02, Storey04}, run repeatedly under the same FDR level $0.05$ on different batches of hypotheses. We observe that the FDR can be much higher than the nominal value.

\begin{figure}
  \centering
  \includegraphics[width=0.49\columnwidth]{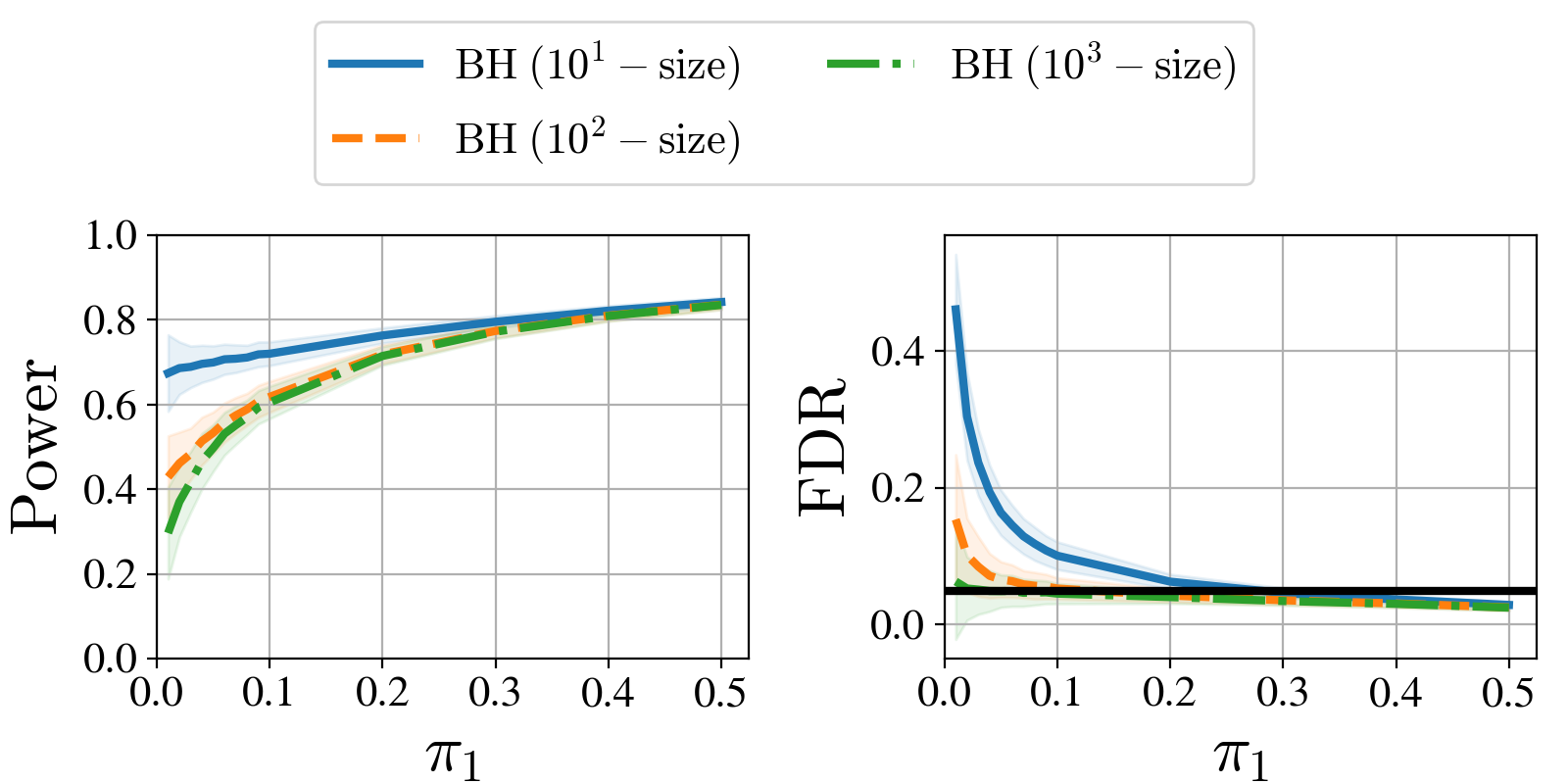}
  \includegraphics[width=0.49\columnwidth]{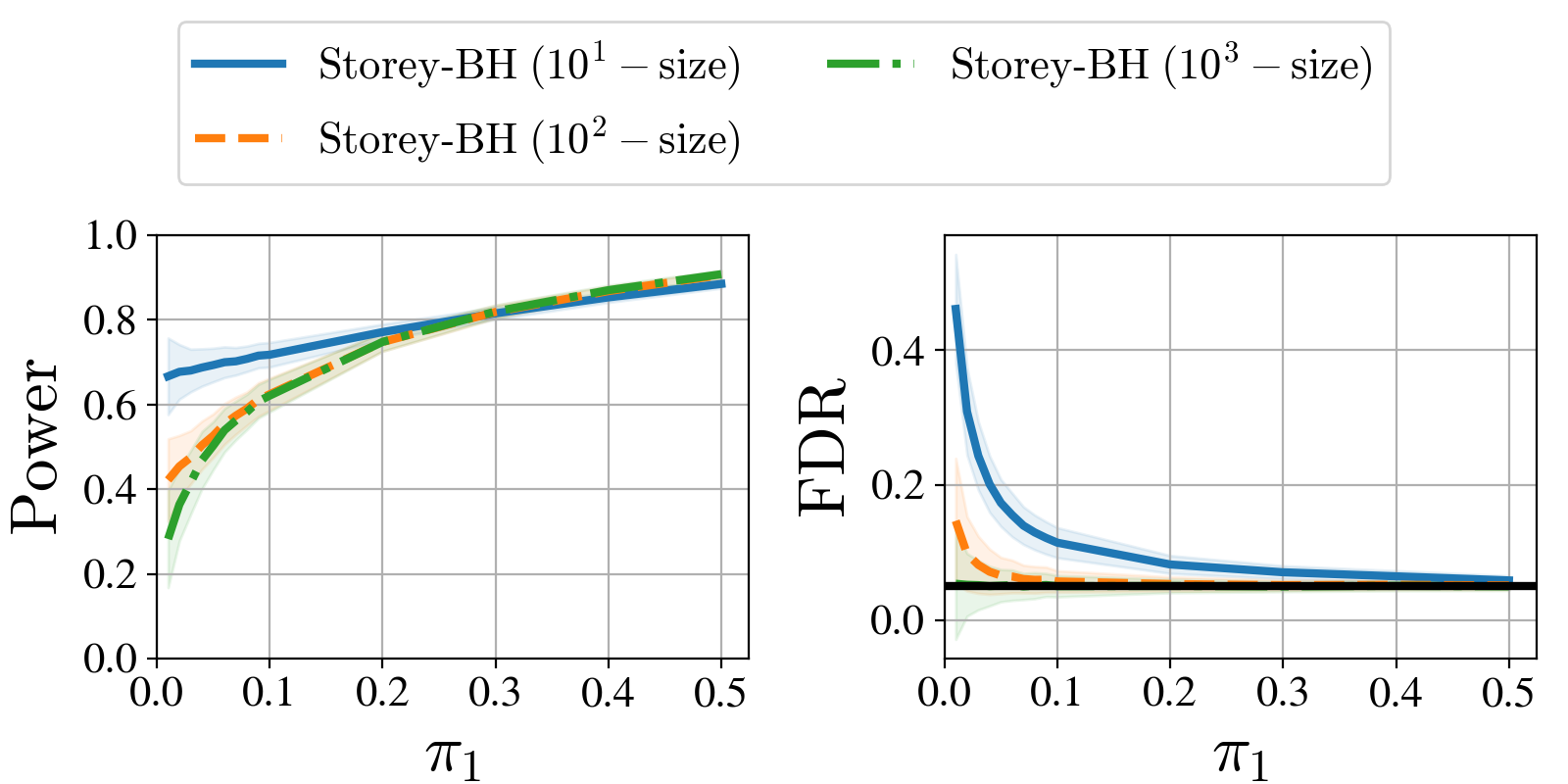}
  \caption{Statistical power and FDR versus probability of non-null hypotheses, $\pi_1$, for naively composed BH (left) and Storey-BH (right), at batch sizes 10, 100, and 1000. The total number of hypotheses is 3000, and the target FDR is 0.05.}
  \label{fig:naive}
\end{figure}

In this paper, we develop FDR procedures which are appropriate for multiple batches of tests. We allow testing of possibly infinitely many batches in an online fashion. We refer to this setting as \emph{online batch testing}. More precisely, we improve the widely-used BH algorithm \citep{BH95} and a variant that we refer to Storey-BH \citep{Storey02,Storey04}, such that their repeated composition does not violate the desired FDR guarantees. We refer to these sequential, FDR-preserving versions of BH and Storey-BH as $\batchbh$ and $\batchsbh$\footnote{Both $\batchbh$ and $\batchsbh$ have been incorporated into the \href{https://dsrobertson.github.io/onlineFDR}{onlineFDR package} \cite{onlineFDR}.}, respectively. As is the case for state-of-the-art online algorithms, our procedures allow testing an infinite sequence of batches of adaptively chosen hypotheses, but they also enjoy a higher power of discovery than those algorithms. Finally, since they consist of compositions of offline FDR algorithms with provable guarantees, they immediately imply FDR control over each constituent batch, and not just over the whole sequence of tests. This property has value in settings with natural groupings of hypotheses, where the scientist might be interested in the overall FDR, but also the FDR over certain subgroups of hypotheses.

\subsection{Outline}

In \secref{notation}, we present preliminaries and sketch the main ideas behind our proofs. In \secref{bh}, we define the $\batchbh$ family of algorithms and state its FDR guarantees. In \secref{sbh}, we do the same for $\batchsbh$ algorithms. 
While $\batchbh$ and $\batchsbh$ make conservative dependence assumptions on the $p$-value sequence, in \secref{prds} we describe algorithms for online batch testing of positively dependent $p$-values. In \secref{experiments}, we demonstrate the performance of our methods on synthetic data. In \secref{credit}, we demonstrate the performance of our methods on a real fraud detection data set. In the Appendix, we give a short overview of some related work, provide all the proofs, as well as additional experimental results.


\section{Preliminaries}
\label{sec:notation}

We introduce a formal description of the testing process, together with some preliminaries.

At every time $t\in\N$, a batch of $n_t$ hypotheses is tested using a pre-specified offline FDR procedure. We consider two such procedures, the BH and Storey-BH procedures, which we review in the Appendix for the reader's convenience. The batches arrive sequentially, in a stream; at the time of testing the $t$-th batch, no information about future batches needs to be available, such as their size or their number. For each hypothesis, there is unknown ground truth that says whether the hypothesis is null or non-null. Denote the set of hypotheses in the $t$-th batch by $\mathbf{H}_t \defn \{H_{t,1},\dots,H_{t,n_t}\}$. Each hypothesis has a $p$-value associated with it. Let $\mathbf{P}_t$ denote the $p$-values corresponding to the $t$-th batch of hypotheses, given by $\mathbf{P}_t \defn \{P_{t,1},\dots,P_{t,n_t}\}$, where $P_{t,j}$ is the $j$-th $p$-value in batch $t$. Denote by $\nulls_t$ the indices corresponding to null hypotheses in batch $t$, and let $\cR_t$ denote the indices of rejections, or \emph{discoveries}, in batch $t$:
\begin{gather*}
    \nulls_t \defn \{i: H_{t,i} \text{ is null}\},~ ~  \cR_t \defn \{i: H_{t,i} \text{ is rejected}\}.
\end{gather*}
We will also informally say that a $p$-value is rejected, if its corresponding hypothesis is rejected.

We now define the \emph{false discovery rate (FDR) up to time} $t$:
$$\fdr(t) \defn \EE{\fdp(t)}\defn \EE{\frac{\sum_{s=1}^t|\nulls_s \cap \cR_s|}{(\sum_{s=1}^t|\cR_s|)\vee 1}},$$
where $\fdp(t)$ denotes a random quantity called the \emph{false discovery proportion} up to time $t$. To simplify notation, we also define $R_t \defn |\cR_t|$. In real applications, it does not suffice to merely control the FDR (which we can do by making no discoveries, which results in $\fdr = 0$); rather, we also need to achieve high statistical \emph{power}:
$$\text{Power}(t) \defn \EE{\frac{\sum_{s=1}^t|([n_s]\setminus \nulls_s) \cap \cR_s|}{\sum_{s=1}^t|([n_s]\setminus \nulls_s)|}},$$
where $[n_s]\setminus \nulls_s$ are the non-null hypotheses in batch $s$.

The goal of the $\batchbh$ procedure is to achieve high power, while guaranteeing $\fdr(t)\leq \alpha$ for a pre-specified level $\alpha\in(0,1)$ and for all $t\in\N$. To do so, the algorithm adaptively determines a \textit{test level} $\alpha_t$ based on information about past batches of tests, and tests $\mathbf{P}_t$ under FDR level $\alpha_t$ using the standard BH method. The $\batchsbh$ method operates in a similar way, the difference being that it uses the Storey-BH method for every batch, as opposed to BH.

Define $R_t^+$ to be the maximum ``augmented'' number of rejections in batch $t$, if one $p$-value in $\mathbf{P}_t$ is ``hallucinated'' to be equal to zero, and all other $p$-values and level $\alpha_t$ are held fixed; the maximum is taken over the choice of the $p$-value which is set to zero. More formally, let $\mathcal{A}_t$ denote a map from a set of $p$-values $\mathbf{P}_t$ (and implicitly, a level $\alpha_t$) to a set of rejections $\cR_t$. Hence, $R_t = |\mathcal{A}_t(\mathbf{P}_t)|$. In our setting, $\mathcal{A}_t$ will be the BH algorithm in the case of $\batchbh$ and Storey-BH algorithm in the case of $\batchsbh$. Then, $R_t^+$ is defined as 
\begin{equation}
    R_t^+ \defn \max_{i\in[n_t]} |\mathcal{A}_t(\mathbf{P}_t\setminus P_{t,i}\cup 0)|.
\end{equation}
Note that $R_t^+$ could be as large as $n_t$ in general. For an extreme example, let $n_t = 3$, $\mathbf{P}_t:=\{2\alpha/3, \alpha, 4\alpha/3\}$, and consider $\mathcal{A}_t$ being the BH procedure. Then $R_t=0$, while $R_t^+=3$. However, such ``adversarial'' p-values are unlikely to be encountered in practice and we typically expect $R_t^+$ to be roughly equal to $R_t + 1$. In other words, we expect that when an unrejected $p$-value is set to 0, it will be a new rejection, but typically will not result in other rejections as well. This intuition is confirmed by our experiments, where we plot $R_t^+ - R_t$ for $\batchbh$ with different batch sizes and observe that this quantity concentrates around 1. These plots are available in \figref{rdiff} in the Appendix.

Let the natural filtration induced by the testing process be denoted
\[
\F^t \defn \sigma(\mathbf{P}_1,\dots,\mathbf{P}_t),
\] 
which is the $\sigma$-field of all previously observed $p$-values.
Naturally, we require $\alpha_t$ to be $\F^{t-1}$-measurable; the test level at time $t$ is only allowed to depend on information seen before $t$. It is worth pointing out that this filtration is different from the corresponding filtration in prior online FDR work, which was typically of the form $\sigma(R_1,\dots,R_t)$. The benefits of this latter, smaller filtration arise when proving \emph{modified} FDR ($\mfdr$) guarantees, which we do not consider in this paper. Moreover, a richer filtration allows more freedom in choosing $\alpha_t$, making our choice of $\F^t$ a natural one.

For the formal guarantees of $\batchbh$ and $\batchsbh$, we will require the procedures to be \textit{monotone}. Let $(\{P_{1,1},\dots,P_{1,n_1}\},\dots,\{P_{t,1},\dots,P_{t,n_t}\})$ and $(\{\tilde P_{1,1},\dots,\tilde P_{1,n_1}\},\dots,\{\tilde P_{t,1},\dots,\tilde P_{t,n_t}\})$ be two sequences of $p$-value batches,
which are identical in all entries but $(s,i)$, for some $s\leq t$: $\tilde P_{s,i}<P_{s,i}$. Then,  \[
\text{a procedure is monotone if } \sum_{r=s+1}^t R_r \leq \sum_{r=s+1}^t \tilde R_r.\]
Intuitively, this condition says that making any of the tested $p$-values smaller can only make the overall number of rejections larger. A similar assumption appears in online FDR literature \citep{javanmard2016online,ramdas2018saffron, zrnic2018asynchronous, tian2019addis}. In general, whether or not a procedure is monotone is a property of the $p$-value distribution; notice, however, that monotonicity can be assessed empirically (it does not depend on the unknown ground truth). One way to ensure monotonicity is to make $\alpha_t$ a coordinate-wise non-increasing function of $(P_{1,1},\dots,P_{1,n_1},P_{2,1},\dots,P_{t-1,n_{t-1}})$. In the Appendix, we give examples of monotone strategies.

Finally, we review a basic property of null $p$-values. If a hypothesis $H_{t,i}$ is truly null, then the corresponding $p$-value $P_{t,i}$ stochastically dominates the uniform distribution, or is \emph{super-uniformly distributed}, meaning that:
\begin{gather*}
    \text{If the hypothesis } H_{t,i} \text{ is null},   \text{ then } \PP{P_{t,i} \leq u} \leq u \text{ for all } u \in [0, 1].
\end{gather*}

\subsection{Algorithms via Empirical FDP Estimates}

We build on Storey's interpretation of the BH procedure \citep{Storey02} as an empirical Bayesian procedure, based on empirical estimates of the false discovery proportion. In this section, we give a sketch of this idea, as it is at the core of our algorithmic constructions. The steps presented below are not fully rigorous, but are simply meant to develop intuition.

When an algorithm decides to reject a hypothesis, there is generally no way of knowing if the rejected hypothesis is null or non-null. Consequently, it is impossible for the scientist to know the achieved FDP. However, by exploiting the super-uniformity of null $p$-values, it is possible to estimate the behavior of the FDP \emph{on average}. More explicitly, there are tools that utilize only the information available to the scientist to upper bound the average FDP, that is the FDR.

We sketch this argument for the $\batchbh$ procedure here, formalizing the argument in \thmref{fdrbatchbh}.  \thmref{fdrbatchsbh} gives an analogous proof for the $\batchsbh$ procedure.

By definition, the FDR is equal to
$$\EE{\frac{\sum_{s=1}^t|\nulls_s \cap \cR_s|}{(\sum_{r=1}^t|\cR_r|)\vee 1}} = \sum_{s=1}^t \EE{\frac{ \sum_{i\in\nulls_s} \One{P_{s,i}\leq \frac{\alpha_s}{n_s}R_s}}{(\sum_{r=1}^t|\cR_r|)\vee 1}},$$
where we use the definition of the BH procedure. If the $p$-values are independent, we will show that it is valid to upper bound this expression by inserting an expectation in the numerator, approximately as
$$\sum_{s=1}^t \EE{\frac{ \sum_{i\in\nulls_s} \PPst{P_{s,i}\leq \frac{\alpha_s}{n_s}R_s}{\alpha_s,R_s}}{(\sum_{r=1}^t|\cR_r|)\vee 1}}.$$
Invoking the super-uniformity of null $p$-values (and temporarily ignoring dependence between $P_{s,i}$ and $R_s$), we get
$$\sum_{s=1}^t \EE{\frac{ |\nulls_s| \frac{\alpha_s}{n_s}R_s}{(\sum_{r=1}^t|\cR_r|)\vee 1}} \leq \EE{\frac{ \sum_{s=1}^t \alpha_s R_s}{(\sum_{r=1}^t|\cR_r|)\vee 1}}.$$
Suppose we define $\fdphat_{\batchbh}(t)\approx \frac{ \sum_{s=1}^t \alpha_s R_s}{(\sum_{r=1}^t|\cR_r|)\vee 1}$. This quantity is purely \emph{empirical}; each term is known to the scientist. Hence, by an appropriate choice of $\alpha_s$ at each step, one can ensure that $\fdphat_{\batchbh}(t)\leq \alpha$ for all $t$.  But by the sketch given above, this would immediately imply $\fdr\leq\alpha$, as desired. This proof sketch is the core idea behind our algorithms.

It is important to point out that there is not a single way of ensuring $\fdphat_{\batchbh}(t)\leq\alpha$; this approach gives rise to a whole family of algorithms. Naturally, the choice of $\alpha_s$ can be guided by prior knowledge or importance of a given batch, as long as the empirical estimate is controlled under $\alpha$.

%
%

\section{Online Batch FDR Control via \texorpdfstring{$\batchbh$}{Batch BH}}
\label{sec:bh}

In this section, we define the $\batchbh$ class of algorithms and state our main technical result regarding its FDR guarantees. 

\begin{definition}[$\batchbh$]
The $\batchbh$ procedure is any rule for assigning test levels $\alpha_s$ such that
$$\fdphat_{\batchbh}(t) \defn \sum_{s\leq t} \alpha_s \frac{R_s^+}{R_s^+ + \sum_{r\leq t, r\neq s} R_r}$$
is always controlled under a pre-determined level $\alpha$.
\end{definition}

Note that if we were to approximate $R_s^+$ by $R_s$, we would arrive exactly at the estimate derived in the proof sketch of the previous section.

This way of controlling $\fdphat_{\batchbh}(t)$ interpolates between prior offline and online FDR approaches. First, suppose that there is only one batch. Then, the user is free to pick $\alpha_1$ to be any level less than or equal to $\alpha$, in which case it makes sense to simply pick $\alpha$. On the other hand, if every batch is of size one we have $R_s^+ = 1$, hence the FDP estimate reduces to
\begin{align*}
    \fdphat_{\batchbh}(t) &= \sum_{s\leq t} \frac{\alpha_s}{1 + \sum_{r\leq t,r\neq s} R_r} \leq \frac{\sum_{s\leq t} \alpha_s}{\sum_{r\leq t} R_r} \defn \fdphat_{\text{LORD}}(t),
\end{align*}
where the intermediate inequality is almost an equality whenever the total number of rejections is non-negligible. The quantity $\fdphat_{\text{LORD}}(t)$ is an estimate of FDP that is implicitly used in an existing online algorithm known as LORD \citep{javanmard2016online}, as detailed in Ramdas et al.~\cite{RYWJ17}. Thus, $\batchbh$ can be seen as a generalization of both BH and LORD, simultaneously allowing arbitrary batch sizes (like BH) and an arbitrary number of batches (like LORD).


%
%
%
%

We now state our main formal result regarding FDR control of $\batchbh$. As suggested in \secref{notation}, together with the requirement that $\fdphat_{\batchbh}(t)\leq\alpha$ for all $t\in\N$, we also need to guarantee that the procedure is monotone. We present multiple such procedures in the Appendix, and discuss some useful heuristics in \secref{experiments}.

\begin{theorem}
\label{thm:fdrbatchbh}
If all null $p$-values in the sequence are independent of each other and the non-nulls, and the $\batchbh$ procedure is monotone, then it provides anytime $\fdr$ control: for every $t\in\N$, $\fdr(t)\leq\alpha$.
\end{theorem}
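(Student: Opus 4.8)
The plan is to turn the heuristic derivation of \secref{notation} into a rigorous argument; the two ingredients that make it work are a leave-one-out identity for the BH procedure and the monotonicity hypothesis. Starting from the definitions and using that BH rejects $H_{s,i}$ exactly when $P_{s,i}\le\frac{\alpha_s}{n_s}R_s$, linearity of expectation gives
\[
\fdr(t)\;=\;\sum_{s\le t}\ \sum_{i\in\nulls_s}\EE{\frac{\One{P_{s,i}\le \tfrac{\alpha_s}{n_s}R_s}}{\left(\sum_{r\le t}R_r\right)\vee 1}}.
\]
The difficulty in bounding a single summand is that both the rejection event and the denominator depend on the null $p$-value $P_{s,i}$, so its super-uniformity cannot be exploited directly.

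First I would eliminate the dependence inside the indicator and in $R_s$. Let $R_s^{-i,0}\defn|\mathcal{A}_s(\mathbf{P}_s\setminus P_{s,i}\cup 0)|$ be the number of BH rejections when $P_{s,i}$ is hallucinated to $0$; this quantity is measurable with respect to $\F^{s-1}$ and the other $p$-values of batch $s$, does not depend on $P_{s,i}$, and satisfies $1\le R_s^{-i,0}\le R_s^+$ (the hallucinated $0$ is always rejected). The standard self-consistency argument for BH shows that $\{H_{s,i}\text{ rejected}\}=\{P_{s,i}\le\frac{\alpha_s}{n_s}R_s^{-i,0}\}$ and that $R_s=R_s^{-i,0}$ on this event: if $P_{s,i}\le\frac{\alpha_s}{n_s}R_s^{-i,0}$, replacing the hallucinated $0$ by $P_{s,i}$ leaves $R_s^{-i,0}$ $p$-values at or below the threshold $\frac{\alpha_s}{n_s}R_s^{-i,0}$, forcing $R_s\ge R_s^{-i,0}$ and rejecting $H_{s,i}$; the reverse inclusion and the equality of counts follow in the same way. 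Writing $\tau\defn\frac{\alpha_s}{n_s}R_s^{-i,0}$ and using $R_s\ge1$ on the event (which discards the $\vee 1$), the summand equals $\EE{\One{P_{s,i}\le\tau}\big/\big(R_s^{-i,0}+\sum_{r\le t,r\ne s}R_r\big)}$.

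Next I would use monotonicity to handle the denominator. Condition on the $\sigma$-field $\G$ generated by all $p$-values other than $P_{s,i}$; then $\tau$ and $R_s^{-i,0}+\sum_{r<s}R_r$ are constants, $\alpha_s$ is $\F^{s-1}$-measurable hence also constant, and by the independence hypothesis $P_{s,i}$ is conditionally super-uniform. Monotonicity makes $\sum_{r>s}R_r$ a non-increasing function of $P_{s,i}$, so $h(\cdot)\defn\big(R_s^{-i,0}+\sum_{r\le t,r\ne s}R_r\big)^{-1}$ is non-decreasing in $P_{s,i}$. Since $\One{p\le\tau}-\tau$ and $h(p)-h(\tau)$ have opposite signs for every $p\in[0,1]$ (here $\tau\le\alpha_s<1$), their product is $\le0$; taking $\EEst{\cdot}{\G}$, using $h(\tau)\ge0$ and $\PPst{P_{s,i}\le\tau}{\G}\le\tau$, gives $\EEst{\One{P_{s,i}\le\tau}\,h(P_{s,i})}{\G}\le\tau\,\EEst{h(P_{s,i})}{\G}$. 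Hence each summand is at most $\EE{\frac{\alpha_s}{n_s}\cdot R_s^{-i,0}\big/\big(R_s^{-i,0}+\sum_{r\le t,r\ne s}R_r\big)}$. Using $R_s^{-i,0}\le R_s^+$ together with the monotonicity of $x\mapsto x/(x+c)$ to replace $R_s^{-i,0}$ by $R_s^+$, then summing over $i\in\nulls_s$ (there are $|\nulls_s|\le n_s$ of them and the bound no longer depends on $i$) and over $s\le t$, I obtain $\fdr(t)\le\EE{\fdphat_{\batchbh}(t)}\le\alpha$, the last step because the $\batchbh$ rule enforces $\fdphat_{\batchbh}(t)\le\alpha$ surely.

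The step I expect to be the main obstacle is the careful use of monotonicity in the presence of adaptively chosen hypotheses: changing $P_{s,i}$ changes which hypotheses populate batches $s+1,\dots,t$ and hence changes the future $p$-values themselves, so the ``all coordinates but $(s,i)$ fixed'' form of the monotonicity assumption (and of the independence/super-uniformity of the null $P_{s,i}$) must be applied through a coupling in the underlying primitive-randomness model, as in prior online-FDR analyses; once this is in place, $\sum_{r>s}R_r$ is stochastically non-increasing in $P_{s,i}$ given $\G$ and the argument above is unaffected. The two smaller points to verify are the leave-one-out identity for BH (in particular $R_s^+\ge1$, so every denominator is strictly positive) and the bookkeeping that $\alpha_s$ is $\F^{s-1}$-measurable so it may be treated as a constant under the conditioning used above.
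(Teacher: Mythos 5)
Your proof is correct and follows the same skeleton as the paper's: the same per-null decomposition of the FDR, the same leave-one-out quantity $R_s^{(-i)}$ (your $R_s^{-i,0}$) obtained by hallucinating $P_{s,i}=0$ within batch $s$, and the same closing steps ($R_s^{(-i)}\le R_s^+$, monotonicity of $x\mapsto x/(x+a)$, $|\nulls_s|\le n_s$, and $\EE{\fdphat_{\batchbh}(t)}\le\alpha$ by construction). Where you genuinely diverge is in decoupling the null $p$-value from the \emph{future} rejection counts in the denominator. The paper introduces a second hallucinated object, $\tilde R^{(-r,i)}_s$ (future rejections computed on the sequence with $P_{r,i}$ set to $0$), notes it coincides with $R_s$ on the rejection event, uses its independence from $P_{r,i}$ to factor the conditional expectation, and only at the end invokes monotonicity via $\tilde R^{(-r,i)}_s\ge R_s$ to return to the actual counts. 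You instead condition on everything except $P_{s,i}$ and apply a one-dimensional correlation inequality: the indicator $\One{P_{s,i}\le\tau}$ and $h(P_{s,i})=\bigl(R_s^{-i,0}+\sum_{r\ne s}R_r\bigr)^{-1}$ are oppositely monotone in $P_{s,i}$, and your pointwise inequality $(\One{p\le\tau}-\tau)(h(p)-h(\tau))\le 0$ combined with super-uniformity yields $\EE{\One{P_{s,i}\le\tau}\,h(P_{s,i})}\le\tau\,\EE{h(P_{s,i})}$ directly. This buys a slightly cleaner argument (no second hallucinated sequence, and no need for $P_{s,i}$ to be exactly uniform for the decoupling step), at the cost of requiring the pointwise deterministic monotonicity of $h$ in $P_{s,i}$ given everything else --- which is exactly the paper's monotonicity assumption, so nothing extra is assumed. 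Your flagged concern about adaptively chosen future hypotheses applies equally to the paper's proof, which likewise takes the fixed-coordinates form of monotonicity as primitive; it is not a gap relative to the paper.
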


We defer the proof of \thmref{fdrbatchbh} to the Appendix.



\section{Online Batch FDR Control via \texorpdfstring{$\batchsbh$}{Batch St-BH}}
\label{sec:sbh}

In addition to the FDR level $\alpha$, the Storey-BH algorithm also requires a user-chosen constant $\lambda\in(0,1)$ as a parameter. This extra parameter allows the algorithm to be more adaptive to the data at hand, constructing a better FDP estimate \citep{Storey02}. We revisit this estimate in the Appendix.

Thus, our extension of Storey-BH, $\batchsbh$, requires a user-chosen constant $\lambda_t\in(0,1)$ as an input to the algorithm at time $t\in\N$. Unless there is prior knowledge of the $p$-value distribution, it is a reasonable heuristic to simply set $\lambda_t = 0.5$ for all $t$. If one suspects, however, that there are many non-nulls which yield strong signal, a smaller value of $\lambda_t$ is appropriate.

Denote by $\max_t\defn \argmax_i\{P_{t,i}:i\in[n_t]\}$ the index corresponding to the maximum $p$-value in batch $t$. With this,
we can define the $\batchsbh$ family of algorithms as follows.

\begin{definition}
\label{def:sbh}
The $\batchsbh$ procedure is any rule for assigning test levels $\alpha_s$, such that
$$\fdphat_{\batchsbh}(t)\defn \sum_{s \leq t} \frac{ \alpha_s R_s^+ \One{P_{s,\max_s}>\lambda_s}}{R_s^+ + \sum_{r\leq t,r\neq s} R_r}$$
is controlled under a pre-determined level $\alpha$.
\end{definition}

Just like $\batchbh$, $\batchsbh$ likewise interpolates between existing offline and online FDR procedures. If there is a single batch of tests, the user can pick the test level $\alpha_1$ to be at most $\alpha$, in which case it makes sense to simply pick $\alpha$. On the other end of the spectrum, in the fully online setting, $\batchsbh$ reduces to the SAFFRON procedure \citep{ramdas2018saffron}. Indeed, the FDP estimate reduces to:
\begin{align*}
	\fdphat_{\batchsbh}(t) &= \sum_{s \leq t} \frac{ \alpha_s \One{P_{s,1}>\lambda_s}}{1 + \sum_{r\leq t,r\neq s} R_r} \leq \frac{ \sum_{s \leq t} \alpha_s \One{P_{s,1}>\lambda_s}}{\sum_{r\leq t} R_r} \defn \fdphat_{\text{SAFFRON}}(t).
	\end{align*}
Since in the original paper, SAFFRON's FDP estimate was written in a slightly different, albeit equivalent form, we point out a subtle difference in the meaning of ``$\alpha_s$'' for Storey-BH and SAFFRON. For SAFFRON, $\alpha_s$ denotes the decision threshold for $P_{s,1}$, while in the batch setting, $\alpha_s$ is the Storey-BH level. If Storey-BH is applied to a single $p$-value under level $\alpha_s$, then it is rejected if and only if $P_{s,1} \leq (1-\lambda_s)\alpha_s$. This difference should be kept in mind when comparing $\fdphat_{\batchsbh}(t)$ to the usual form of $\fdphat_{\text{SAFFRON}}(t)$ (which we review in the Appendix).


%
%

We are now ready to state our main result for $\batchsbh$. Just like $\batchbh$, the $\batchsbh$ procedure requires monotonicity to control the FDR (as per the argument outlined in \secref{notation}). We describe multiple monotone versions of $\batchsbh$ in the Appendix, and discuss some useful heuristics in \secref{experiments}.

\begin{theorem}
\label{thm:fdrbatchsbh}
If the null $p$-values in the sequence are independent of each other and the non-nulls, and the $\batchsbh$ procedure is monotone, then it provides anytime $\fdr$ control: for every $t\in\N$, $\fdr(t)\leq\alpha$.
\end{theorem}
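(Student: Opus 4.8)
The plan is to prove \thmref{fdrbatchsbh} along the same lines as \thmref{fdrbatchbh}: it suffices to establish the single inequality $\fdr(t)\le\EE{\fdphat_{\batchsbh}(t)}$ for every $t\in\N$, since by the defining property of $\batchsbh$ the right-hand side never exceeds $\alpha$. I would begin from the decomposition
$$\fdr(t)=\sum_{s\le t}\sum_{i\in\nulls_s}\EE{\frac{\One{P_{s,i}\le\widehat{t}_s}}{(\sum_{r\le t}R_r)\vee 1}},$$
where $\widehat{t}_s$ is the data-dependent rejection threshold that Storey-BH selects for batch $s$ at level $\alpha_s$ and parameter $\lambda_s$; by the definition of Storey-BH one has $\widehat{t}_s\le\lambda_s$ and $\widehat{t}_s\le\frac{\alpha_s(1-\lambda_s)R_s}{1+|\{j\in[n_s]:P_{s,j}>\lambda_s\}|}$.

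The Storey-specific ingredient is to localize the $\widehat{\pi}_0$-estimate around the $p$-value $P_{s,i}$ under scrutiny. On the event $\{P_{s,i}\le\widehat{t}_s\}$ we have $P_{s,i}\le\widehat{t}_s\le\lambda_s$, so $P_{s,i}$ is not a candidate and the count entering the threshold equals $c_s^{-i}\defn 1+|\{j\ne i:P_{s,j}>\lambda_s\}|$, which depends only on $\{P_{s,j}:j\ne i\}$. Hence $\One{P_{s,i}\le\widehat{t}_s}\le\One{P_{s,i}\le \alpha_s(1-\lambda_s)R_s/c_s^{-i}}$, i.e.\ the effective per-coordinate level is $\alpha_s(1-\lambda_s)/c_s^{-i}$ times $R_s$ — the analogue of $\alpha_s/n_s$ in the $\batchbh$ analysis.

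From here I would reuse the leave-one-out argument of \thmref{fdrbatchbh}: condition on $\F^{s-1}$ together with $\{P_{s,j}:j\ne i\}$, so that $\alpha_s,\lambda_s,n_s,c_s^{-i}$ and $\sum_{r<s}R_r$ become fixed while $P_{s,i}$ and the later batches remain random with the later rejection counts monotone in $P_{s,i}$. Monotonicity is used to pass to the hallucinated count via $R_s/(R_s+D)\le R_s^+/(R_s^++D)$ with $D\defn\sum_{r\le t,r\ne s}R_r$, and super-uniformity of the null $P_{s,i}$ is used in the Storey form $\PP{P_{s,i}\le u}\le u\le\frac{u}{1-\lambda_s}\PP{P_{s,i}>\lambda_s}$ to turn $\One{P_{s,i}\le\cdot}$ into a conditional probability — the factor $1-\lambda_s$ cancelling the one in the Storey-BH threshold. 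The upshot is a per-null bound of the form $\EE{\frac{\One{P_{s,i}\le\widehat{t}_s}}{(\sum_r R_r)\vee 1}}\le\EE{\frac{\alpha_s\One{P_{s,i}>\lambda_s}}{c_s^{-i}}\cdot\frac{R_s^+}{R_s^++D}}$. Summing over $i\in\nulls_s$, pulling out the common factor $\alpha_s R_s^+/(R_s^++D)$, and invoking the elementary inequality $\sum_{i\in\nulls_s}\One{P_{s,i}>\lambda_s}/c_s^{-i}\le\One{P_{s,\max_s}>\lambda_s}$ (both sides vanish when every $p$-value of batch $s$ is at most $\lambda_s$; otherwise each nonzero summand equals $1/m$ with $m=|\{j:P_{s,j}>\lambda_s\}|\ge1$ and at most $m$ of them occur) recovers exactly the $s$-th term of $\fdphat_{\batchsbh}(t)$; summing over $s\le t$ then gives $\fdr(t)\le\EE{\fdphat_{\batchsbh}(t)}\le\alpha$.

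The hard part is the same one that must already be handled in \thmref{fdrbatchbh}, only with the extra $\lambda_s$-layer to track: batch $s$ is coupled to every future batch because the levels $\alpha_r,\lambda_r$ for $r>s$ are measurable with respect to past data, so the future counts $R_r$ cannot simply be conditioned away. Making the leave-one-out conditioning compatible, simultaneously, with the monotone dependence of $R_s^+$ and of the future counts on $P_{s,i}$, with the ratio $R_s^+/(R_s^++D)$, and with the frozen candidate count $c_s^{-i}$ — and only then applying super-uniformity — is exactly where the monotonicity hypothesis is consumed and where the bookkeeping is delicate. A smaller point that must be pinned down is that the threshold Storey-BH actually selects satisfies $\widehat{t}_s\le\lambda_s$ (equivalently, that one restricts candidate thresholds to $[0,\lambda_s]$), since the localization step above relies on it.
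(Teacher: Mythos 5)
Your plan follows essentially the same route as the paper's proof: the leave-one-out hallucination $P_{s,i}\mapsto 0$, super-uniformity applied in the Storey form to introduce $\One{P_{s,i}>\lambda_s}/(1-\lambda_s)$, monotonicity to pass to the augmented count $R_s^+$ and back to the unmodified future rejection counts, and the final combinatorial identity $\sum_{i\in\nulls_s}\One{P_{s,i}>\lambda_s}/c_s^{-i}\le\One{P_{s,\max_s}>\lambda_s}$ all appear there in the same roles. The one device you leave implicit---decoupling $\One{P_{s,i}>\lambda_s}$ from the future rejection counts before taking its expectation---is supplied in the paper by a second hallucination that sets $P_{s,i}$ to $1$: on $\{P_{s,i}>\lambda_s\}$ the true and modified future counts coincide, independence lets the factor $\EE{\One{P_{s,i}>\lambda_s}/(1-\lambda_s)}\ge 1$ be pulled out, and monotonicity (setting a $p$-value to $1$ only decreases rejections) returns the bound to the original counts, which resolves exactly the delicate step you flag.
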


The proof of \thmref{fdrbatchsbh} is presented in the Appendix.


\section{Online Batch FDR Control under Positive Dependence}
\label{sec:prds}

The guarantees of $\batchbh$ and $\batchsbh$ presented thus far relied on independence between $p$-values. In this section we generalize $\batchbh$ to one natural form of dependence, namely \emph{positive dependence} \citep{BY01}. We call this modification $\batchbh^{\text{PRDS}}$, and it controls FDR when the $p$-values in one batch are positively dependent, and independent across batches. Such a setting might occur in multi-armed clinical trials where different treatments are tested against a common control arm \citep{robertson2018online}.

First we establish the definition of positive dependence we consider.

\begin{definition}
	Let $\mathcal{D}\subseteq [0,1]^n$ be any non-decreasing set, meaning that $x\in\mathcal{D}$ implies $y\in\mathcal{D}$, for all $y$ such that $y_i\geq x_i$, for all $i\in[n]$. We say that a vector of $p$-values $\mathbf{P}=(P_1,\dots,P_n)$ satisfies positive regression dependency on a subset (PRDS), or positive dependence for short, if for any null index $i\in\nulls$ and arbitrary non-decreasing set $\mathcal{D}\subseteq [0,1]^n$, the function $t\mapsto \PPst{P\in\mathcal{D}}{P_i\leq t}$ is non-decreasing over $t\in(0,1]$.
	\end{definition}
	
This definition has been a common formulation of positive dependence in prior FDR works, e.g. \citep{BY01, blanchard2008two, ramdas2019unified}. Clearly, independent $p$-values satisfy PRDS. A non-trivial example is given for Gaussian observations. Suppose $\mathbf{P} = (\Phi(Z_1),\dots,\Phi(Z_n))$, where $(Z_1,\dots,Z_n)$ is a multivariate Gaussian vector with covariance matrix $\Sigma$. Then, $\mathbf{P}$ satisfies PRDS if and only if $\Sigma_{i,j}\geq 0$ for all $i\in\nulls$ and $j\in[n]$.

 Now we are ready to define the FDP estimate of $\batchbh^{\text{PRDS}}$.

\begin{definition}
\label{def:batchbhprds}
The $\batchbh^{\text{PRDS}}$ procedure is any rule for assigning test levels $\alpha_t$ such that
$$\fdphat_{\batchbh^{\text{PRDS}}}(t) = \sum_{s\leq t}\alpha_s \frac{n_s}{n_s + \sum_{r<s}R_r}$$
is controlled under $\alpha$ for all $t\in\N$.
\end{definition}

Below is an example update rule that satisfies \defref{batchbhprds}.

\begin{algorithm}[H]
\SetAlgoLined
\SetKwInOut{Input}{Input}
\Input{FDR level $\alpha$, non-negative sequence $\{\gamma_s\}_{s=1}^\infty$ such that $\sum_{s=1}^\infty \gamma_s = 1$.}
Set $\alpha_1 = \gamma_1 \alpha$;\newline
 \For{$t=1,2,\dots$}{
 Run the BH procedure under level $\alpha_t$ on batch $\mathbf{P}_t$;\newline
Set $\alpha_{t+1} = \alpha \frac{\gamma_{t+1}}{n_{t+1}} (n_{t+1} + \sum_{s=1}^t R_s)$;
 }
 \caption{The $\batchbh^{\text{PRDS}}$ algorithm}
 \label{alg:batchprds}
\end{algorithm}

We state our main FDR guarantees for $\batchbh^{\text{PRDS}}$ below. Our proof relies on a ``super-uniformity lemma'', similar to several lemmas in prior work that consider PRDS $p$-values \citep{blanchard2008two,BY01,ramdas2019unified}. We prove both this lemma and \thmref{fdrPRDS} later in the Appendix.

\begin{theorem}
\label{thm:fdrPRDS}
Suppose that every batch of $p$-values $\mathbf{P}_t$ satisfies PRDS, and additionally that $P_{t,i}$ and $\{\mathbf{P}_s:s\in \mathcal{I}\}$ are independent whenever $t\not\in\mathcal{I}$, for all $i\in\nulls_t$. Then, the $\batchbh^{\text{PRDS}}$ procedure provides anytime $\fdr$ control: for every $t\in\N$, $\fdr(t)\leq\alpha$.
\end{theorem}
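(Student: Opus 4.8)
The plan is to reproduce the classical analysis of BH under PRDS one batch at a time, isolating the contribution of batch $s$ by conditioning on $\F^{s-1}=\sigma(\mathbf{P}_1,\dots,\mathbf{P}_{s-1})$. Write $D_s\defn\sum_{r<s}R_r$, and note that both $\alpha_s$ and $D_s$ are $\F^{s-1}$-measurable. The first move is the crude denominator bound $\sum_{r\le t}R_r\ge\sum_{r\le s}R_r=R_s+D_s$, which (discarding all batches after $s$, and noting that whenever $|\nulls_s\cap\cR_s|>0$ we have $R_s\ge1$, so the $\vee 1$ is harmless) gives
\[
\fdp(t)\;\le\;\sum_{s\le t}\frac{|\nulls_s\cap\cR_s|}{(R_s+D_s)\vee 1}\;=\;\sum_{s\le t}\sum_{i\in\nulls_s}\frac{\One{P_{s,i}\le\frac{\alpha_s}{n_s}R_s}}{(R_s+D_s)\vee 1}.
\]
By the tower property it then suffices to show $\EEst{|\nulls_s\cap\cR_s|/(R_s+D_s)}{\F^{s-1}}\le\alpha_s\,n_s/(n_s+D_s)$ for each $s$; summing over $s$ and taking expectations recovers $\EE{\fdphat_{\batchbh^{\text{PRDS}}}(t)}\le\alpha$ by the defining property of the procedure.

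For this per-batch bound I would condition on $\F^{s-1}$ — which, in the adaptive setting, also fixes the hypotheses tested in batch $s$ — so that $\mathbf{P}_s$ is, by assumption, a PRDS vector whose null coordinates are super-uniform; the hypothesis that the null $p$-values of batch $s$ are independent of $\{\mathbf{P}_r:r<s\}$ is precisely what keeps this within-batch structure intact under the conditioning. I would then set $h(\mathbf{P}_s)\defn(R_s+D_s)/(n_s+D_s)$. Since shrinking any $p$-value can only increase $R_s$ and $r\mapsto(r+D_s)/(n_s+D_s)$ is increasing, $h$ is coordinate-wise non-increasing. Two elementary facts finish the job: because $R_s\le n_s$ we have $R_s/n_s\le(R_s+D_s)/(n_s+D_s)=h$, so the BH rejection event $\{P_{s,i}\le\frac{\alpha_s}{n_s}R_s\}$ is contained in $\{P_{s,i}\le\alpha_s h\}$; and $R_s+D_s=(n_s+D_s)\,h$ by definition. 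Hence $\One{P_{s,i}\le\frac{\alpha_s}{n_s}R_s}/(R_s+D_s)\le(n_s+D_s)^{-1}\,\One{P_{s,i}\le\alpha_s h}/h$, and applying the super-uniformity lemma for PRDS $p$-values (conditionally on $\F^{s-1}$, with the coordinate-wise non-increasing function $h$, and the convention $0/0:=0$) bounds $\EEst{\One{P_{s,i}\le\alpha_s h}/h}{\F^{s-1}}\le\alpha_s$. So each null index of batch $s$ contributes at most $\alpha_s/(n_s+D_s)$; summing over the at most $n_s$ null indices yields $\EEst{|\nulls_s\cap\cR_s|/(R_s+D_s)}{\F^{s-1}}\le\alpha_s\,n_s/(n_s+D_s)$, as required. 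Chaining back, $\fdr(t)\le\sum_{s\le t}\EE{\alpha_s\,n_s/(n_s+\sum_{r<s}R_r)}=\EE{\fdphat_{\batchbh^{\text{PRDS}}}(t)}\le\alpha$.

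The one genuinely delicate step is the denominator manipulation: one must resist keeping the full denominator $\sum_{r\le t}R_r$ and instead throw away every batch after $s$ and rewrite $\sum_{r\le s}R_r$ as $(n_s+D_s)h$ — the slack $R_s\le n_s$ absorbed into the $n_s$ is exactly what makes the super-uniformity bound land on the FDP-estimate term $\alpha_s n_s/(n_s+D_s)$; a more aggressive denominator (e.g.\ with $R_s^+ + D_s$ in place of $n_s+D_s$) would not survive this argument without extra structure. Note that, unlike \thmref{fdrbatchbh} and \thmref{fdrbatchsbh}, no monotonicity of the level-assignment rule is needed here: across batches we use only past rejection counts (through $\F^{s-1}$-measurability of $\alpha_s$ and $D_s$), and within a batch we use only the intrinsic monotonicity of the BH map, which holds automatically. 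A secondary bookkeeping point is the interplay with adaptively chosen hypotheses and the cross-batch independence hypothesis, which together guarantee that conditioning on $\F^{s-1}$ leaves the within-batch PRDS-plus-super-uniformity structure intact so that the lemma applies.
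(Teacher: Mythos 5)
Your proof is correct and follows the same overall skeleton as the paper's: drop all rejections after batch $s$ from the denominator, condition on $\F^{s-1}$, and reduce each null index's contribution to the term $\alpha_s n_s/(n_s+\sum_{r<s}R_r)$ appearing in $\fdphat_{\batchbh^{\text{PRDS}}}$. The one place where you genuinely diverge is the super-uniformity step. The paper proves a bespoke generalization of the Blanchard--Roquain lemma (\lemref{PRDSlemma}) that directly handles the shifted denominator, namely $\EE{\One{U\leq cV}/(V+a)}\leq cn/(n+a)$ for integer-valued $V\leq n$, via an Abel-summation/telescoping argument on $\PPst{V\leq i}{U\leq ci}$. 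You instead absorb the shift $a=D_s$ into a new coordinate-wise non-increasing statistic $h=(R_s+D_s)/(n_s+D_s)$, use $R_s/n_s\leq h$ (valid since $R_s\leq n_s$) to enlarge the rejection event to $\{P_{s,i}\leq\alpha_s h\}$, and then invoke only the standard $a=0$ lemma $\EE{\One{P_i\leq c\,f(\mathbf{P})}/f(\mathbf{P})}\leq c$. This buys you a proof with no new lemma at the cost of applying the known lemma to a non-integer-valued threshold function (which the Blanchard--Roquain form permits); the paper's route isolates the combinatorial content in a reusable lemma but has to prove it. Both arguments rest on the same slightly informal step of applying the PRDS property conditionally on $\F^{s-1}$, justified by the cross-batch independence of the null $p$-values; you flag this explicitly, which is to your credit, and your observation that no monotonicity of the level-assignment rule is needed matches the paper.
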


In other words, $\batchbh^{\text{PRDS}}$ ensures FDR control when $p$-values are independent across different batches, and positively dependent within each batch. \thmref{fdrPRDS} is a generalization of an earlier result which states that the BH algorithm controls FDR under PRDS \cite{BY01}.

In online FDR control, handling dependence has generally proved challenging. \citet{javanmard2016online} have proposed procedures which control the FDR under arbitrary dependence, however their updates imply an essentially alpha-spending (online Bonferroni) type correction which controls a more stringent criterion called the family-wise error rate \citep{gordon1983discrete}. Their earlier algorithm called LOND \cite{javanmard2015online} was recently proved to control the FDR under PRDS \cite{zrnic2018asynchronous}, and is a more powerful alternative for the fully online setting than the arbitrary depenence procedure. Indeed, $\batchbh^{\text{PRDS}}$ is a minibatch generalization of the LOND algorithm. Finally, it is worth pointing out that the notion of positive dependence we consider in this paper resembles local dependence proposed by \citet{zrnic2018asynchronous}, although their solutions only control modified FDR (mFDR).


\section{Numerical Experiments}
\label{sec:experiments}

We compare the performance of $\batchbh$ and $\batchsbh$ with two state-of-the-art online FDR algorithms: LORD \citep{javanmard2016online, RYWJ17} and SAFFRON \citep{ramdas2018saffron}. Specifically, we analyze the achieved power and FDR of the compared methods. In this section, we run the procedures on synthetic data, while in \secref{credit} we study a real fraud detection data set.

As explained in prior literature \citep{ramdas2018saffron}, LORD and BH are non-adaptive methods, while SAFFRON and Storey-BH adapt to the tested $p$-values through the parameter $\lambda_t$. For this reason, we keep comparisons fair by comparing $\batchbh$ with LORD, and $\batchsbh$ with SAFFRON.

The choice of $\lambda_t$ should generally depend on the number and strength of non-null $p$-values the analyst expects to see in the sequence. As suggested in previous works on similar adaptive methods \citep{Storey02,Storey04, ramdas2018saffron}, it is reasonable to set $\lambda_t \equiv 0.5$ if no prior knowledge is assumed.

As discussed in \secref{notation}, there are different ways of assigning $\alpha_i$ such that the appropriate FDP estimate is controlled under $\alpha$. Moreover, as we argued in \secref{bh} and \secref{sbh}, this needs to be done in a monotone way to guarantee FDR control for an arbitrary $p$-value distribution. In the experimental sections of this paper, however, we resort to a heuristic. Enforcing monotonicity uniformly across all distributions diminishes the power of FDR methods. Hence, we apply algorithms which control the corresponding FDP estimates and are expected to be monotone under natural $p$-value distributions, however possibly not for adversarially chosen ones. In the Appendix we test the monotonicity of these procedures empirically, and demonstrate that it is indeed satisfied with overwhelming probability. We now present the specific algorithms that we studied.

\begin{algorithm}[H]
\SetAlgoLined
\SetKwInOut{Input}{Input}
\Input{FDR level $\alpha$, non-negative sequence $\{\gamma_s\}_{s=1}^\infty$ such that $\sum_{s=1}^\infty \gamma_s = 1$.}
Set $\alpha_1 = \gamma_1 \alpha$;\newline
 \For{$t=1,2,\dots$}{
 Run the BH method at level $\alpha_t$ on batch $\mathbf{P}_t$;\newline
  Set $\alpha_{t+1} = \left(\sum_{s\leq t+1}\gamma_s \alpha - \sum_{s\leq t} \alpha_s \frac{R_s^+}{R_s^+ + \sum_{r\neq s, r\leq t} R_r}\right)\frac{n_{t+1} + \sum_{s\leq t} R_s}{n_{t+1}}$;
 }
 \caption{The $\batchbh$ algorithm}
\label{alg:defaultbatchbh}
\end{algorithm}

\begin{algorithm}[H]
\SetAlgoLined
\SetKwInOut{Input}{Input}
\Input{FDR level $\alpha$, non-negative sequence $\{\gamma_s\}_{s=1}^\infty$ such that $\sum_{s=1}^\infty \gamma_s = 1$.}
Set $\alpha_1 = \gamma_1 \alpha$;\newline
 \For{$t=1,2,\dots$}{
 Run the Storey-BH procedure at level $\alpha_t$ with parameter $\lambda_t$ on batch $\mathbf{P}_t$;\newline
  Set $\alpha_{t+1} = \left(\sum_{s\leq t+1} \gamma_s \alpha - \sum_{s\leq t} \alpha_s\One{P_{s,\max_s}>\lambda_s} \frac{R_s^+}{R_s^+ + \sum_{r\neq s, r\leq t} R_r}\right)\frac{n_{t+1} + \sum_{s\leq t} R_s}{n_{t+1}}$;
 }
 \caption{The $\batchsbh$ algorithm}
 \label{alg:defaultbatchsbh}
\end{algorithm}

The reason why we add a sequence $\{\gamma_s\}_{s=1}^\infty$ as a hyperparameter is to prevent $\alpha_t$ from vanishing. If we immediately invest the whole error budget $\alpha$, i.e. we set $\gamma_1 = 1$ and $\gamma_s = 0, s\neq 1$, then $\alpha_t$ might be close to 0 for small batches, given that $R_t^+$ could be close to $n_t$. For this reason, for the smallest batch size we consider (which is 10), we pick $\gamma_s \propto s^{-2}$. Similar error budget investment strategies have been considered in prior work \cite{ramdas2018saffron, tian2019addis}. For larger batch sizes, $R_t^+$ is generally much smaller than $n_t$, so for all other batch sizes we invest more aggressively by picking $\gamma_1 = \gamma_2 = \frac{1}{2}$, $\gamma_s = 0$, $s\not\in \{1,2\}$. This is analogous to the default choice of ``initial wealth'' for LORD and SAFFRON of $\frac{\alpha}{2}$, which we also use in our experiments. We only adapt our choice of $\{\gamma_s\}_{s=1}^\infty$ to the batch size, as that is information available to the scientist. In general, one can achieve better power if $\{\gamma_s\}_{s=1}^\infty$ is tailored to parameters such as the number of non-nulls and their strength, but given that such information is typically unknown, we keep our hyperparameters agnostic to such specifics.

In the Appendix we prove \factref{defaultbatchbhvalid}, which states the Algorithm \ref{alg:defaultbatchbh} controls the appropriate FDP estimate. We omit the analogous proof for Algorithm \ref{alg:defaultbatchsbh} due to the similarity of the two proofs.

\begin{fact}
\label{fact:defaultbatchbhvalid}
The algorithm given in Algorithm \ref{alg:defaultbatchbh} maintains $\fdphat_{\batchbh}(t)\leq\alpha$.
\end{fact}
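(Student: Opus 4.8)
The plan is to prove, by induction on $t$, the slightly stronger bound $\fdphat_{\batchbh}(t)\le\sum_{s\le t}\gamma_s\alpha$; since $\gamma_s\ge 0$ and $\sum_{s=1}^\infty\gamma_s=1$, every partial sum is at most $\alpha$, so this implies the fact. The base case $t=1$ is immediate, as $\fdphat_{\batchbh}(1)=\alpha_1 R_1^+/R_1^+=\alpha_1=\gamma_1\alpha$ (with the convention that the term is $0$ when $R_1^+=0$; in fact hallucinating a $p$-value to $0$ always forces at least one BH rejection, so $R_1^+\ge 1$).

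For the inductive step, the first move is to rewrite the update in Algorithm~\ref{alg:defaultbatchbh} in its equivalent form
\[
\sum_{s\le t}\alpha_s\,\frac{R_s^+}{R_s^+ + \sum_{r\le t,\,r\ne s}R_r}\;+\;\alpha_{t+1}\,\frac{n_{t+1}}{n_{t+1}+\sum_{s\le t}R_s}\;=\;\sum_{s\le t+1}\gamma_s\alpha .
\]
The induction hypothesis bounds the first sum by $\sum_{s\le t}\gamma_s\alpha$, so the left-hand side shows that $\alpha_{t+1}$ multiplied by the strictly positive quantity $n_{t+1}/(n_{t+1}+\sum_{s\le t}R_s)$ equals a nonnegative number; hence $\alpha_{t+1}\ge 0$, and applying the same argument down the recursion gives $\alpha_s\ge 0$ for all $s$. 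Now I compare $\fdphat_{\batchbh}(t+1)$ with the left-hand side above, term by term. For each $s\le t$, the denominator of the $s$-th summand of $\fdphat_{\batchbh}(t+1)$ exceeds the one in the display by exactly $R_{t+1}\ge 0$, and since $x\mapsto x/(x+c)$ is nonincreasing in $c\ge 0$ while $\alpha_s\ge 0$, that summand is at most the $s$-th term of the display. For $s=t+1$, the denominator of the new summand is $R_{t+1}^+ + \sum_{r\le t}R_r$; using $R_{t+1}^+\le n_{t+1}$, monotonicity of $x\mapsto x/(x+c)$ in $x\ge 0$, and $\alpha_{t+1}\ge 0$, this summand is at most $\alpha_{t+1}n_{t+1}/(n_{t+1}+\sum_{r\le t}R_r)$. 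Summing the two bounds and invoking the rewritten update rule yields $\fdphat_{\batchbh}(t+1)\le\sum_{s\le t+1}\gamma_s\alpha$, which closes the induction.

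The one point that needs care — the ``main obstacle,'' such as it is — is that the summands of $\fdphat_{\batchbh}(\cdot)$ indexed by earlier batches are not frozen after they are created: their denominators keep growing as later batches generate rejections, so the quantity the algorithm actually keeps under control at step $t$ is at least the ``head'' (the $s\le t$ part) of $\fdphat_{\batchbh}(t+1)$, which is precisely the monotonicity exploited above. The second ingredient worth isolating is the bound $R_{t+1}^+\le n_{t+1}$, which is what justifies using the conservative placeholder $n_{t+1}$ in place of the true augmented rejection count when the algorithm must commit to $\alpha_{t+1}$ before seeing batch $t+1$. Everything else is routine rearrangement.
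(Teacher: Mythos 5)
Your proof is correct and follows essentially the same route as the paper's: both arguments bound the new term via $R_{t+1}^+\le n_{t+1}$ and the monotonicity of $x\mapsto x/(x+c)$, substitute the update rule for $\alpha_{t+1}$, and observe that the earlier summands only shrink when $R_{t+1}$ enters their denominators. Your explicit induction with the invariant $\fdphat_{\batchbh}(t)\le\sum_{s\le t}\gamma_s\alpha$ and the derivation of $\alpha_s\ge 0$ merely makes explicit a sign condition the paper's one-step computation uses implicitly.
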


We test for the means of a sequence of $T=3000$ independent Gaussian observations. Under the null, the mean is $\mu_0=0$. Under the alternative, the mean is $\mu_1$, whose distribution differs in two settings that we studied. For each index $i\in\{1,\dotsc,T\}$, the observation $Z_i$ is distributed according to
\begin{align*}
    Z_i \sim
    \begin{cases}
        N(\mu_0, 1), \text{with probability } 1-\pi_1, \\
        N(\mu_1, 1), \text{with probability } \pi_1.
    \end{cases}
\end{align*}

In all experiments we set $\alpha=0.05$. All plots display the average and one standard deviation around the average of power or FDR, against $\pi_1\in\{0.01,0.02,\dots,0.09\}\cup\{0.1,0.2,0.3,0.4,0.5\}$ (interpolated for in-between values). All quantities are averaged over 500 independent trials.

\subsection{Constant Gaussian Means}
\label{sec:mean3}

In this setting, we choose the mean under the alternative to be constant, $\mu_1=3$. Each observation is converted to a one-sided $p$-value as $P_i=\Phi(-Z_i)$, where $\Phi$ is the standard Gaussian CDF.

\paragraph{Non-adaptive procedures.}

\figref{mean3_bbh_pi1s} (left) compares the statistical power and FDR of $\batchbh$ and LORD as functions of $\pi_1$. Across almost all values of $\pi_1$, the online batch procedures outperform LORD, with the exception of $\batchbh$ with the smallest considered batch size, for small values of $\pi_1$.

\paragraph{Adaptive procedures.}

\figref{mean3_bbh_pi1s} (right) compares the statistical power and FDR of $\batchsbh$ and SAFFRON as functions of $\pi_1$. The online batch procedures dominate SAFFRON for all values of $\pi_1$. The difference in power is especially significant for $\pi_1 \leq 0.1$, which is a reasonable range for the non-null proportion in most real-world applications.

\begin{figure}[h]
  \centering
  \includegraphics[width=0.49\columnwidth]{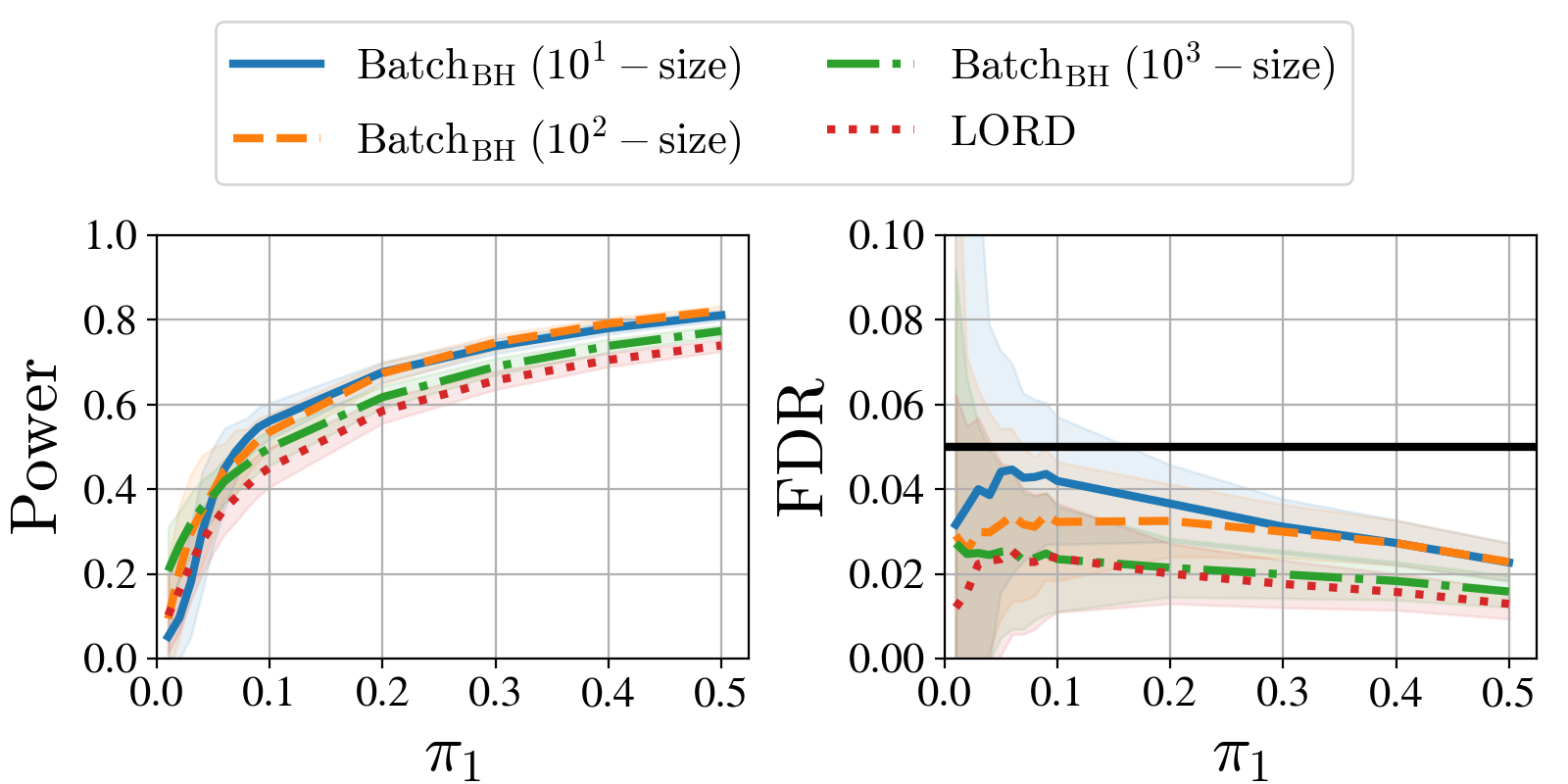}
  \includegraphics[width=0.49\columnwidth]{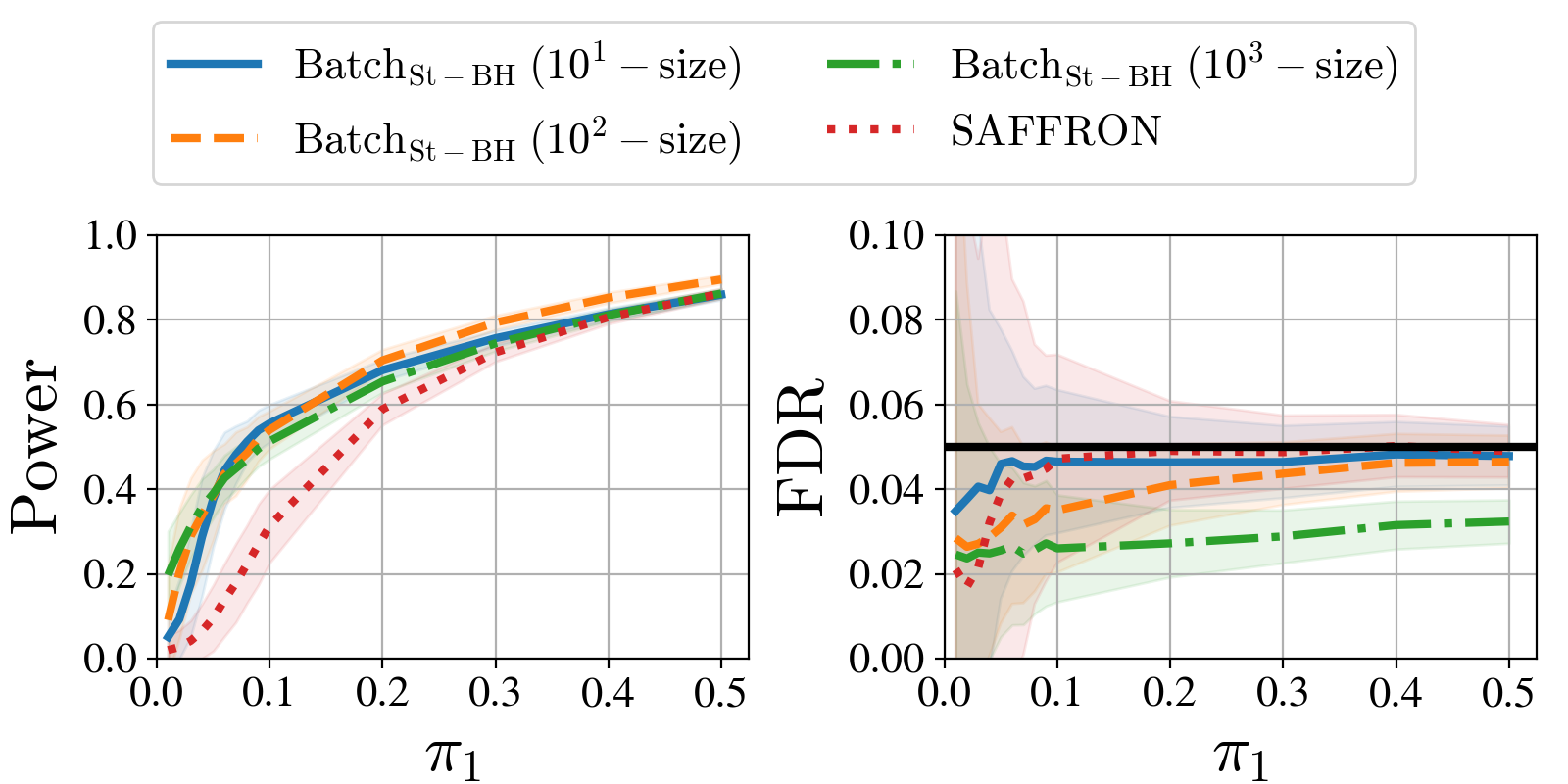}
  \caption{Statistical power and FDR versus probability of non-null hypotheses $\pi_1$ for $\batchbh$ (at batch sizes 10, 100, and 1000) and LORD (left), and $\batchsbh$ (at batch sizes 10, 100, and 1000) and SAFFRON (right). The observations under the null are $N(0,1)$, and the observations under the alternative are $N(3,1)$.}
  \label{fig:mean3_bbh_pi1s}
\end{figure}

\paragraph{Naively composed procedures.} \figref{mean3_bh_pi1s} shows the statistical power and FDR versus $\pi_1$ for BH (left) and Storey-BH (right) naively run in a batch setting where each individual batch is run using test level $\alpha = 0.05$. Although there is a significant boost in power, the FDR is generally much higher than the desired value for reasonably small $\pi_1$; this is not true of batch size 1000 because only 3 batches are composed, where we know that in the worst case $\fdr\leq 3\alpha$.

\begin{figure}[h]
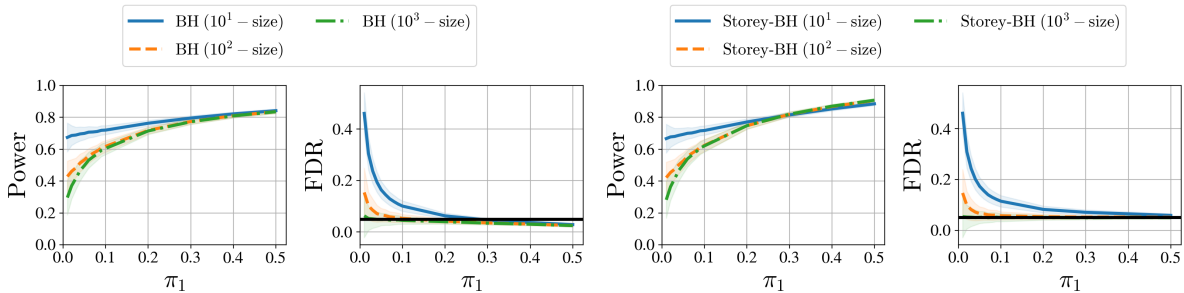

  \centering
  \includegraphics[width=0.49\columnwidth]{imgs/mean3_bh_pi1s.png}
  \includegraphics[width=0.49\columnwidth]{imgs/mean3_sbh_pi1s.png}
  \caption{Statistical power and FDR versus probability of non-null hypotheses $\pi_1$ for naively composed BH and Storey-BH (at batch sizes 10, 100, and 1000). The observations under the null are $N(0,1)$, and the observations under the alternative are $N(3,1)$.}
  \label{fig:mean3_bh_pi1s}
\end{figure}

\subsection{Random Gaussian Alternative Means}
\label{sec:mean0}

Now we consider random alternative means; we let $\mu_1\sim N(0, 2\log T)$. Unlike the previous setting, this is a hard testing problem in which non-nulls are barely detectable \citep{javanmard2016online}. Each observation is converted to a two-sided $p$-value as $P_i=2\Phi(-|Z_i|)$, where $\Phi$ is the standard Gaussian CDF.

\paragraph{Non-adaptive procedures.}

\figref{mean0_bbh_pi1s} (left) compares the statistical power and FDR of $\batchbh$ and LORD as functions of $\pi_1$. Again, for most values of $\pi_1$ all batch procedures outperform LORD.

\paragraph{Adaptive procedures.}

\figref{mean0_bbh_pi1s} (right) compares the statistical power and FDR of $\batchsbh$ and SAFFRON as functions of $\pi_1$. For high values of $\pi_1$, all procedures behave similarly, while for small values of $\pi_1$ the batch procedures dominate.

\begin{figure}[h]
  \centering
  \includegraphics[width=0.49\columnwidth]{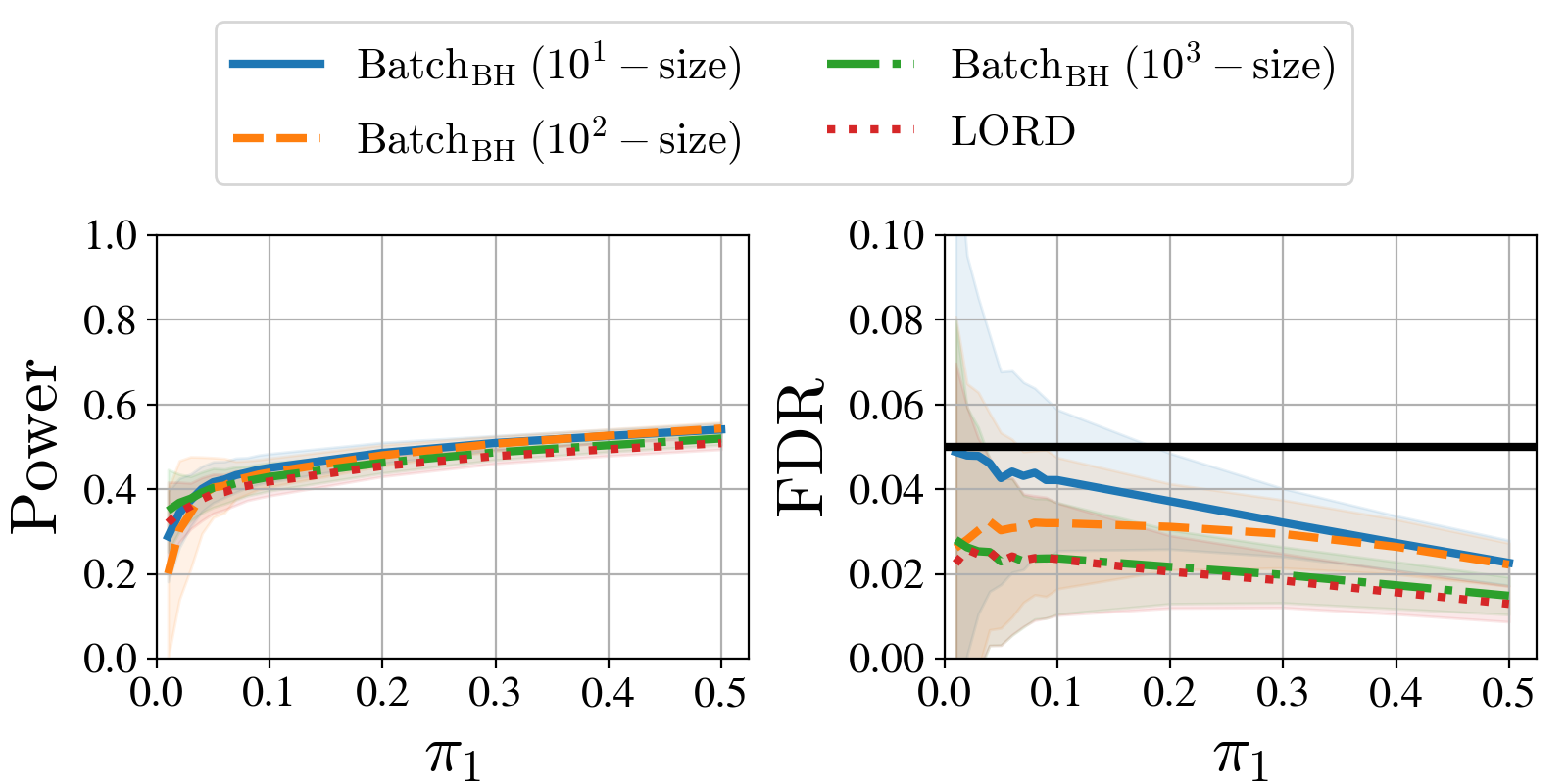}
  \includegraphics[width=0.49\columnwidth]{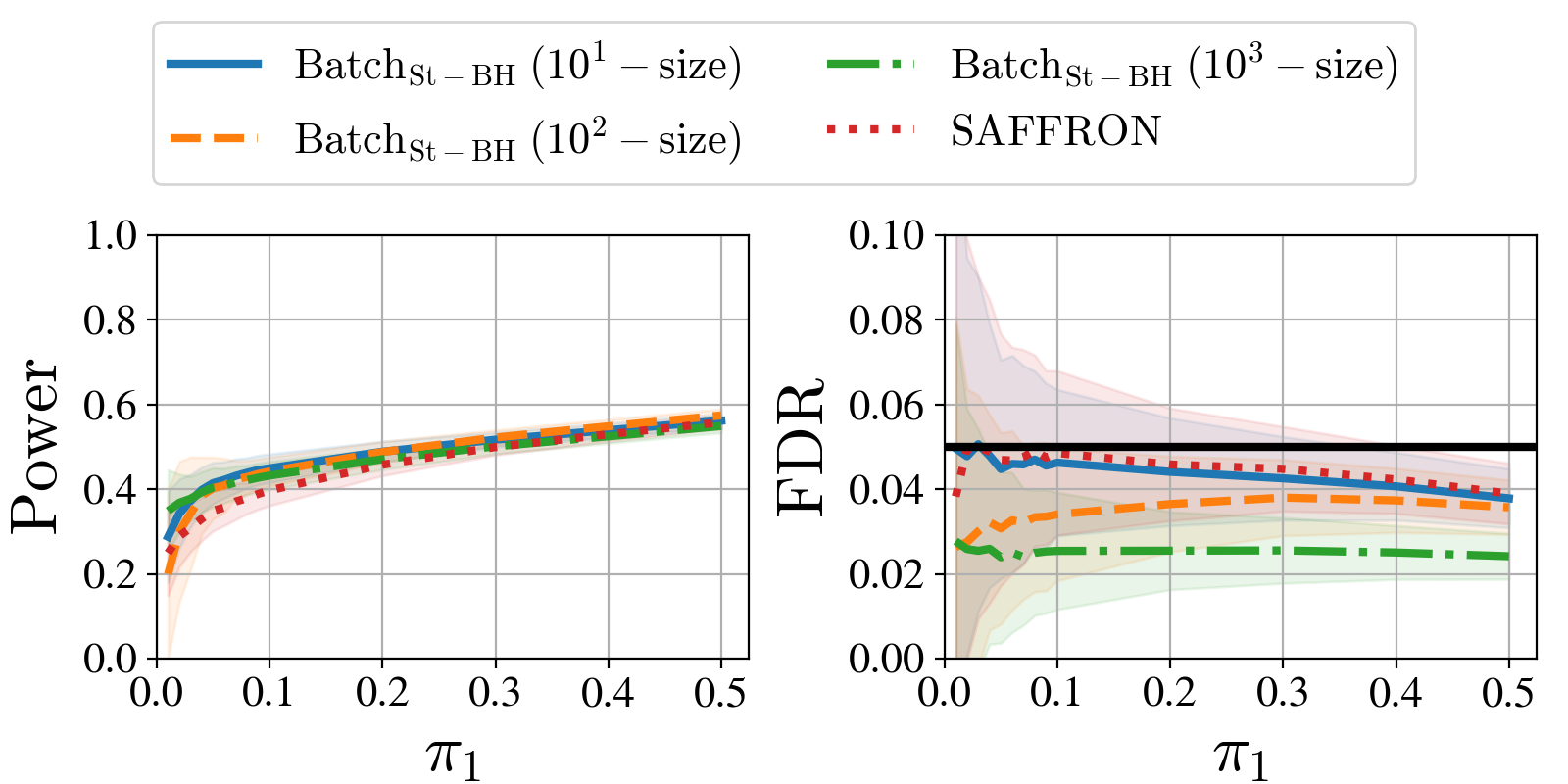}
  \caption{Statistical power and FDR versus probability of non-null hypotheses $\pi_1$ for $\batchbh$ (at batch sizes 10, 100, and 1000) and LORD (left), and $\batchsbh$ (at batch sizes 10, 100, and 1000) and SAFFRON (right). The observations under the null are $N(0,1)$, and the observations under the alternative are $N(\mu_1,1)$ where $\mu_1\sim N(0,2\log T)$.}
  \label{fig:mean0_bbh_pi1s}
\end{figure}

\paragraph{Naively composed procedures.} \figref{mean0_bh_pi1s} shows the statistical power and FDR versus $\pi_1$ for BH (left) and Storey-BH (right) naively run in a batch setting where each individual batch is run using test level $\alpha = 0.05$. In this hard testing problem, there is not as much gain in power, and the FDR is extremely high, as expected.

\begin{figure}[h]
  \centering
  \includegraphics[width=0.49\columnwidth]{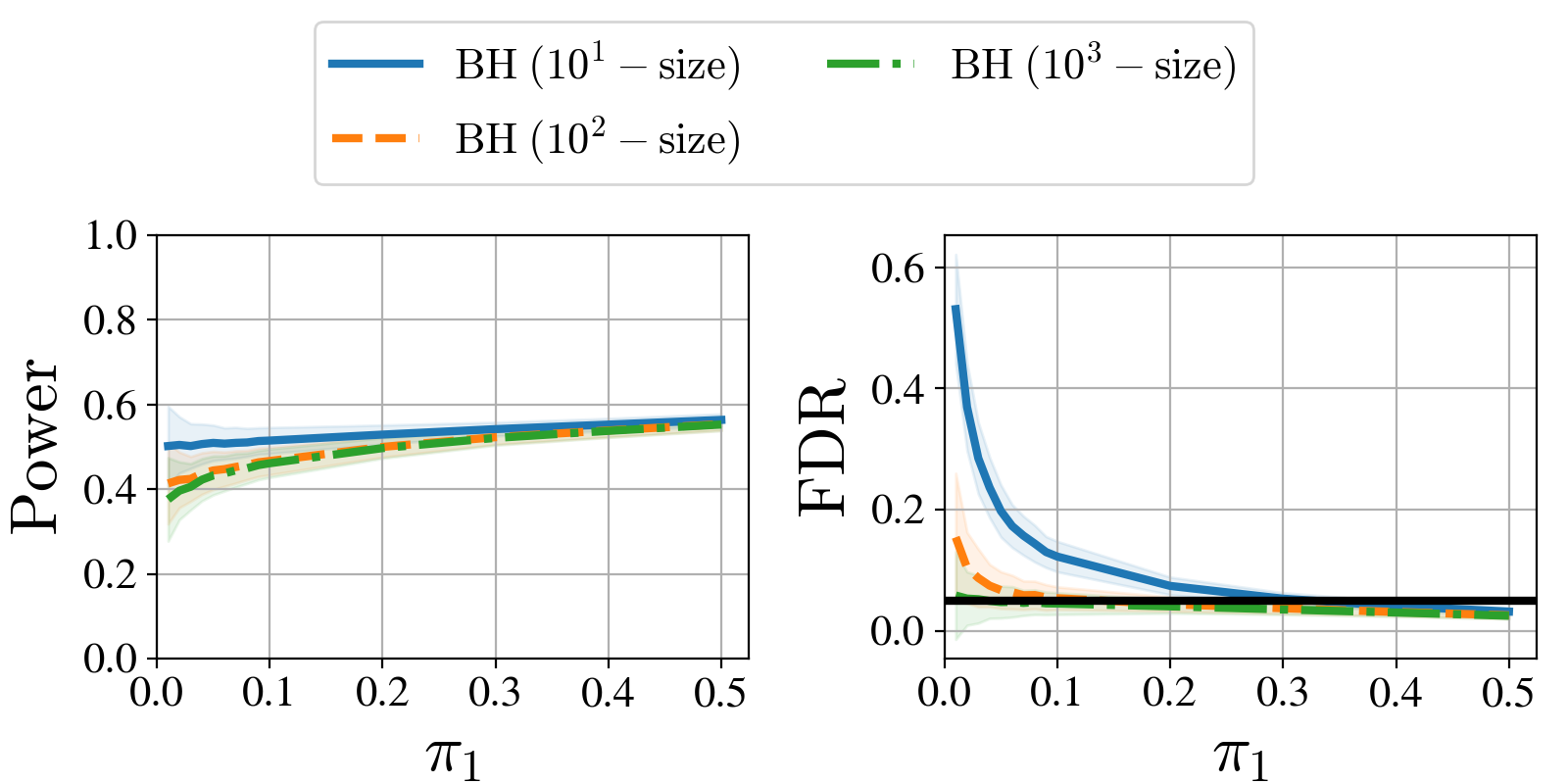}
  \includegraphics[width=0.49\columnwidth]{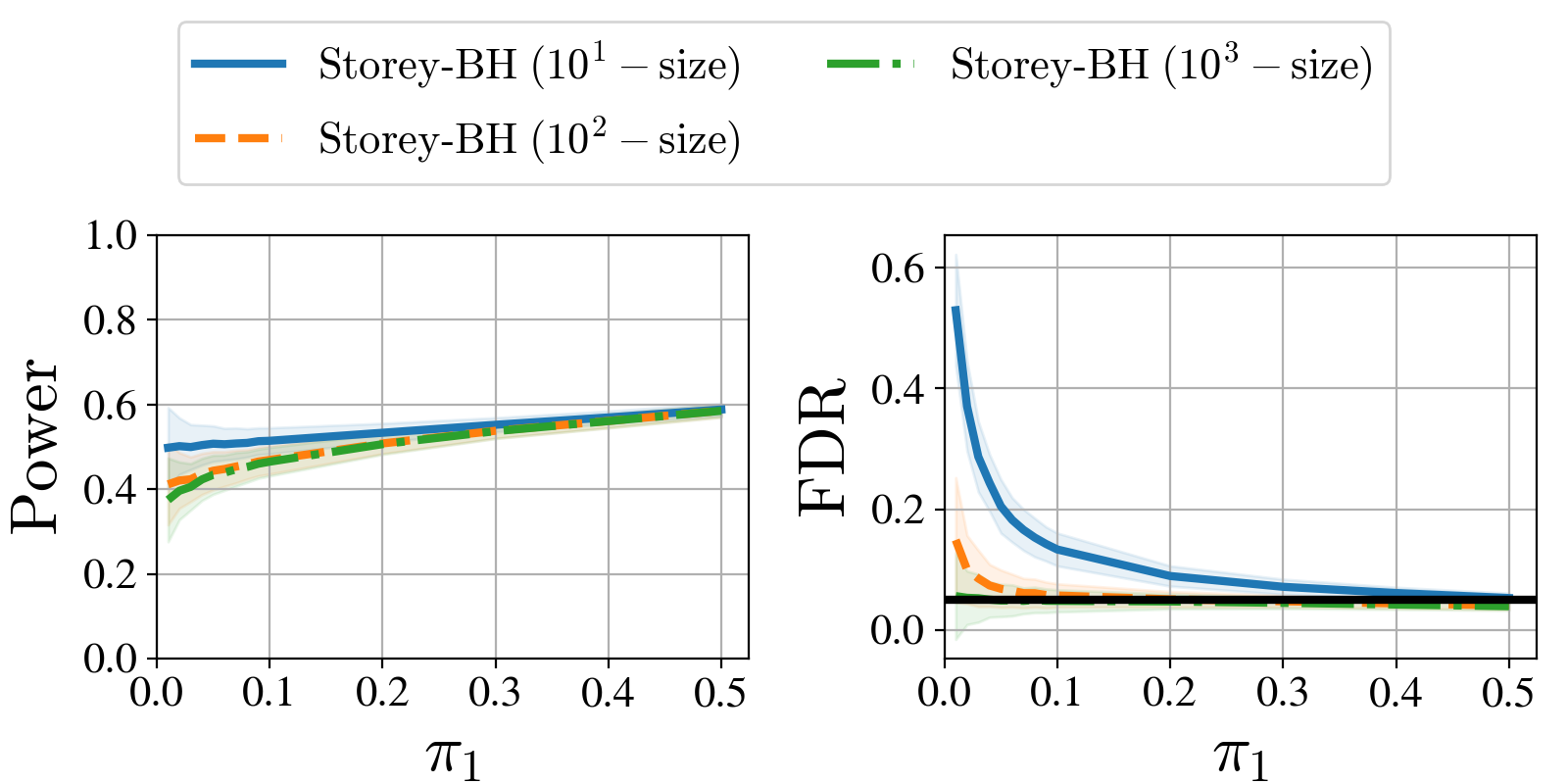}
  \caption{Statistical power and FDR versus probability of non-null hypotheses $\pi_1$ for naively composed BH and Storey-BH (at batch sizes 10, 100, and 1000). The observations under the null are $N(0,1)$, and the observations under the alternative are $N(\mu_1,1)$ where $\mu_1\sim N(0,2\log T)$.
  }
  \label{fig:mean0_bh_pi1s}
\end{figure}

\section{Detecting Credit Card Fraud}
\label{sec:credit}

We apply our algorithms to real credit card transaction data. Credit card companies test for whether transactions are fraudulent; if the transactions are deemed to be fraudulent, they are denied. However, it is important to control the proportion of transactions that are falsely identified as fraudulent, as these false identifications inconvenience users by declining legitimate transactions.

We use a dataset released by the Machine Learning Group of Universit\'e Libre de Bruxelles for a Kaggle competition\footnote{https://www.kaggle.com/mlg-ulb/creditcardfraud}\citep{pozzolo2015fraud}. The dataset comprises 492 fraudulent transactions and 284,315 legitimate transactions. For each transaction, the null hypothesis is that the transaction is not fraudulent, which means that the proportion of non-nulls $\pi_1$ is approximately 0.173\%. Such asymmetry between the proportion of nulls and non-nulls is typical in applications of FDR methods.

Each transaction in the dataset has 28 principal component analysis (PCA) features, the monetary value of the transaction, and a binary label indicating whether the transaction is fraudulent. The PCA features are provided instead of the original features for confidentiality. For each transaction $i$, let $y_i\in\{0, 1\}$ denote whether the transaction is fraudulent ($y_i=1$ denotes a fraudulent transaction) and let $x_i\in \R^{29}$ denote the vector of the transaction's PCA features and monetary value.

In a similar fashion to \citet{javanmard2016online}, we randomly partition the transactions into the subsets Train1 (60\% of the transactions), Train2 (20\% of the transactions), and Test (20\% of the transactions). We then fit a logistic regression model to Train1. In particular, for $i$ in Train1, we model the probability that transaction $i$ is fraudulent as
\begin{align*}
    \PP{Y_i=1 | X_i=x_i} &= \sigma(\theta^T x_i),
\end{align*}
where $\sigma(x)=\frac{1}{1+\text{e}^{-x}}$.

For each $i$ in Train2 and each $j$ in Test, we compute $q_i=\sigma(\theta^T x_i)$ and $q_j^{\text{Test}}=\sigma(\theta^T x_j)$. Let $T_0$ denote the subset of Train2 that are non-fraudulent transactions. We construct the $p$-value $P_j$ as
\begin{align*}
    P_j = \frac{1}{n_0} |\{ i\in T_0 : q_i > q_j^{\text{Test}} \}|.
\end{align*}

We compare hypothesis testing procedures on the $p$-values of the Test subset. We set $\alpha=0.1$ and we set all other hyperparameters the same way as in previous experiments. We use 100 random splits of the transactions into Train1, Train2, and Test in order to compute the average and one standard deviation around the average of power and FDR.

For both the non-adaptive and the adaptive methods, we observe higher power for online batch procedures than for standard online procedures, across several batch sizes of different orders of magnitude. Our findings are summarized in \tabref{fraud_bbh} and \tabref{fraud_bsbh}. However, as observed in experiments on synthetic data as well, we do not observe a monotone relationship between batch size and power; we return to this point in the discussion below.


\begin{table}[h]
\centering
\caption{Non-Adaptive Algorithms on Real Data}
\vspace{.1in}
\begin{tabular}{@{}lll@{}}
	\toprule
	{} &              Power &                FDR \\
	\midrule
	$\batchbh$ ($10^1$-size) &  0.242 $\pm$ 0.053 &  0.126 $\pm$ 0.075 \\
	$\batchbh$ ($10^2$-size) &  0.299 $\pm$ 0.100 &  0.102 $\pm$ 0.067 \\
	$\batchbh$ ($10^3$-size) &  0.260 $\pm$ 0.086 &  0.082 $\pm$ 0.064 \\
	LORD                     &  0.231 $\pm$ 0.051 &  0.082 $\pm$ 0.067 \\
	\bottomrule
\end{tabular}
\label{tab:fraud_bbh}
\end{table}


\begin{table}[h]
\centering
\caption{Adaptive Algorithms on Real Data}
\vspace{.1in}
\begin{tabular}{@{}lll@{}}
	\toprule
	{} &              Power &                FDR \\
	\midrule
	$\batchsbh$ ($10^1$-size) &  0.240 $\pm$ 0.052 &  0.125 $\pm$ 0.081 \\
	$\batchsbh$ ($10^2$-size) &  0.291 $\pm$ 0.098 &  0.096 $\pm$ 0.065 \\
	$\batchsbh$ ($10^3$-size) &  0.246 $\pm$ 0.075 &  0.074 $\pm$ 0.063 \\
	SAFFRON                   &  0.211 $\pm$ 0.041 &  0.137 $\pm$ 0.086 \\
	\bottomrule
\end{tabular}
\label{tab:fraud_bsbh}
\end{table}


\section{Discussion}
\label{sec:discussion}

In this paper, we have presented algorithms for FDR control in online batch settings; at every time step, a batch of decisions is made via the BH or Storey-BH algorithm, and batches arrive sequentially, in a stream. We discuss several possible extensions of this framework.

\paragraph{Alpha-investing version of $\batchsbh$.} In the definition of $\batchsbh$, we considered deterministic values of $\lambda_t$ for simplicity. By imposing a monotonicity constraint on $\lambda_t$ \citep{ramdas2018saffron}, one could generalize $\batchsbh$ to handle random $\lambda_t$ as well. In particular, this would lead to a batch generalization of alpha-investing \citep{foster2008alpha}, in which $\lambda_t = \alpha_t$.

\paragraph{Asynchronous online batch testing.} \citet{zrnic2018asynchronous} consider the setting of asynchronous online testing, in which one conducts a possibly infinite number of sequential experiments which could, importantly, be running in parallel. They generalize multiple online FDR algorithms to handle this so-called \emph{asynchronous testing} problem. Using their technical tools, namely the idea of conflict sets, one can adjust $\batchbh$ and $\batchsbh$ to operate in an asynchronous manner.

\paragraph{ADDIS algorithm.} \citet{tian2019addis} have presented an adaptive online FDR algorithm called ADDIS that
was designed with the goal of improving the power of online FDR methods when the null $p$-values are conservative. The same paper also gives the offline analog of ADDIS. Using our proof technique, one can design online batch corrections for the offline counterpart of ADDIS, thus interpolating between the two algorithms of Tian and Ramdas.

\paragraph{Batch size versus power.} As our experiments indicate, it is not clear that bigger batch sizes give better power. Intuitively, if a batch is very large, say of size $n$, the slope $\alpha/n$ of the BH procedure is very conservative, and it might be better to split up the tests into multiple batches. It would be of great importance for the practitioner to conduct a rigorous analysis of the relationship between batch size and power.

\paragraph{mFDR control.} Many treatments of online FDR have focused on mFDR guarantees (together with FDR guarantees), mostly due to simplicity of the proofs, but also because mFDR can be a reasonable error metric in some settings. Indeed, in the online batch setting, mFDR is potentially a reasonable target measure, because mFDR, unlike FDR, is preserved under composition; if two disjoint batches of tests are guaranteed to achieve $\mfdr\leq \alpha$, pooling their results also ensures $\mfdr\leq \alpha$. This favorable property has been recognized in prior work \citep{van2008controlling}. Unfortunately, the BH algorithm controls $\mfdr$ only asymptotically \citep{genovese2002operating, sun2007oracle}. Moreover, how closely it controls mFDR depends on its ``stability,'' as we show in the Appendix. In fact it has been noted that BH is not stable \citep{gordon2007control}, making FDR our preferred choice of metric.

\subsection*{Acknowledgments}
AR thanks Adel Javanmard for a discussion during the early phases of this work. We thank David Robertson for pointing out a simplification of the $\batchsbh$ FDP estimate from an earlier version of this manuscript.

\bibliography{FDR}

\begin{thebibliography}{23}
\providecommand{\natexlab}[1]{#1}
\providecommand{\url}[1]{\texttt{#1}}
\expandafter\ifx\csname urlstyle\endcsname\relax
  \providecommand{\doi}[1]{doi: #1}\else
  \providecommand{\doi}{doi: \begingroup \urlstyle{rm}\Url}\fi

\bibitem[Aharoni and Rosset(2014)]{aharoni2014generalized}
Ehud Aharoni and Saharon Rosset.
\newblock Generalized $\alpha$-investing: definitions, optimality results and
  application to public databases.
\newblock \emph{Journal of the Royal Statistical Society, Series B (Statistical
  Methodology)}, 76\penalty0 (4):\penalty0 771--794, 2014.

\bibitem[Benjamini and Hochberg(1995)]{BH95}
Yoav Benjamini and Yosef Hochberg.
\newblock Controlling the false discovery rate: a practical and powerful
  approach to multiple testing.
\newblock \emph{Journal of the Royal Statistical Society, Series B
  (Methodological)}, 57\penalty0 (1):\penalty0 289--300, 1995.

\bibitem[Benjamini and Yekutieli(2001)]{BY01}
Yoav Benjamini and Daniel Yekutieli.
\newblock The control of the false discovery rate in multiple testing under
  dependency.
\newblock \emph{The Annals of Statistics}, 29\penalty0 (4):\penalty0
  1165--1188, 2001.

\bibitem[Blanchard and Roquain(2008)]{blanchard2008two}
Gilles Blanchard and Etienne Roquain.
\newblock Two simple sufficient conditions for {FDR} control.
\newblock \emph{Electronic journal of Statistics}, 2:\penalty0 963--992, 2008.

\bibitem[Dal~Pozzolo et~al.(2015)Dal~Pozzolo, Caelen, Johnson, and
  Bontempi]{pozzolo2015fraud}
Andrea Dal~Pozzolo, Olivier Caelen, Reid~A Johnson, and Gianluca Bontempi.
\newblock Calibrating probability with undersampling for unbalanced
  classification.
\newblock In \emph{2015 IEEE Symposium Series on Computational Intelligence},
  pages 159--166, 2015.

\bibitem[Foster and Stine(2008)]{foster2008alpha}
Dean Foster and Robert Stine.
\newblock $\alpha$-investing: a procedure for sequential control of expected
  false discoveries.
\newblock \emph{Journal of the Royal Statistical Society, Series B (Statistical
  Methodology)}, 70\penalty0 (2):\penalty0 429--444, 2008.

\bibitem[Genovese and Wasserman(2002)]{genovese2002operating}
Christopher Genovese and Larry Wasserman.
\newblock Operating characteristics and extensions of the false discovery rate
  procedure.
\newblock \emph{Journal of the Royal Statistical Society: Series B (Statistical
  Methodology)}, 64\penalty0 (3):\penalty0 499--517, 2002.

\bibitem[Gordon et~al.(2007)Gordon, Glazko, Qiu, and
  Yakovlev]{gordon2007control}
Alexander Gordon, Galina Glazko, Xing Qiu, and Andrei Yakovlev.
\newblock Control of the mean number of false discoveries, bonferroni and
  stability of multiple testing.
\newblock \emph{The Annals of Applied Statistics}, 1\penalty0 (1):\penalty0
  179--190, 2007.

\bibitem[Gordon~Lan and DeMets(1983)]{gordon1983discrete}
KK~Gordon~Lan and David~L DeMets.
\newblock Discrete sequential boundaries for clinical trials.
\newblock \emph{Biometrika}, 70\penalty0 (3):\penalty0 659--663, 1983.

\bibitem[Javanmard and Montanari(2015)]{javanmard2015online}
Adel Javanmard and Andrea Montanari.
\newblock On online control of false discovery rate.
\newblock \emph{arXiv preprint arXiv:1502.06197}, 2015.

\bibitem[Javanmard and Montanari(2018)]{javanmard2016online}
Adel Javanmard and Andrea Montanari.
\newblock Online rules for control of false discovery rate and false discovery
  exceedance.
\newblock \emph{The Annals of Statistics}, 46\penalty0 (2):\penalty0 526--554,
  2018.

\bibitem[Kohavi and Longbotham(2017)]{kohavi2017online}
Ron Kohavi and Roger Longbotham.
\newblock Online controlled experiments and a/b testing.
\newblock \emph{Encyclopedia of machine learning and data mining}, pages
  922--929, 2017.

\bibitem[Ramdas et~al.(2017)Ramdas, Yang, Wainwright, and Jordan]{RYWJ17}
Aaditya Ramdas, Fanny Yang, Martin Wainwright, and Michael Jordan.
\newblock Online control of the false discovery rate with decaying memory.
\newblock In \emph{Advances In Neural Information Processing Systems}, pages
  5655--5664, 2017.

\bibitem[Ramdas et~al.(2018)Ramdas, Zrnic, Wainwright, and
  Jordan]{ramdas2018saffron}
Aaditya Ramdas, Tijana Zrnic, Martin Wainwright, and Michael Jordan.
\newblock {SAFFRON}: an adaptive algorithm for online control of the false
  discovery rate.
\newblock In \emph{Proceedings of the 35th International Conference on Machine
  Learning}, pages 4286--4294, 2018.

\bibitem[Ramdas et~al.(2019)Ramdas, Barber, Wainwright, and
  Jordan]{ramdas2019unified}
Aaditya~K Ramdas, Rina~F Barber, Martin~J Wainwright, and Michael~I Jordan.
\newblock A unified treatment of multiple testing with prior knowledge using
  the p-filter.
\newblock \emph{The Annals of Statistics}, 47\penalty0 (5):\penalty0
  2790--2821, 2019.

\bibitem[Robertson and Wason(2018)]{robertson2018online}
David~S Robertson and James Wason.
\newblock Online control of the false discovery rate in biomedical research.
\newblock \emph{arXiv preprint arXiv:1809.07292v1}, 2018.

\bibitem[Robertson et~al.(2019)Robertson, Liou, Ramdas, Javanmard, Tian, Zrnic,
  and Karp]{onlineFDR}
David~S. Robertson, Lathan Liou, Aaditya Ramdas, Adel Javanmard, Jinjin Tian,
  Tijana Zrnic, and Natasha~A. Karp.
\newblock \emph{onlineFDR: Online error control}, 2019.
\newblock R package 1.9.2.

\bibitem[Storey(2002)]{Storey02}
John Storey.
\newblock A direct approach to false discovery rates.
\newblock \emph{Journal of the Royal Statistical Society, Series B (Statistical
  Methodology)}, 64\penalty0 (3):\penalty0 479--498, 2002.

\bibitem[Storey et~al.(2004)Storey, Taylor, and Siegmund]{Storey04}
John Storey, Jonathan Taylor, and David Siegmund.
\newblock Strong control, conservative point estimation and simultaneous
  conservative consistency of false discovery rates: a unified approach.
\newblock \emph{Journal of the Royal Statistical Society, Series B (Statistical
  Methodology)}, 66\penalty0 (1):\penalty0 187--205, 2004.

\bibitem[Sun and Cai(2007)]{sun2007oracle}
Wenguang Sun and T~Tony Cai.
\newblock Oracle and adaptive compound decision rules for false discovery rate
  control.
\newblock \emph{Journal of the American Statistical Association}, 102\penalty0
  (479):\penalty0 901--912, 2007.

\bibitem[Tian and Ramdas(2019)]{tian2019addis}
Jinjin Tian and Aaditya Ramdas.
\newblock {ADDIS}: adaptive algorithms for online {FDR} control with
  conservative nulls.
\newblock \emph{Advances in Neural Information Processing Systems}, 2019.

\bibitem[van~den Oord(2008)]{van2008controlling}
Edwin~JCG van~den Oord.
\newblock Controlling false discoveries in genetic studies.
\newblock \emph{American Journal of Medical Genetics Part B: Neuropsychiatric
  Genetics}, 147\penalty0 (5):\penalty0 637--644, 2008.

\bibitem[Zrnic et~al.(2018)Zrnic, Ramdas, and Jordan]{zrnic2018asynchronous}
Tijana Zrnic, Aaditya Ramdas, and Michael~I Jordan.
\newblock Asynchronous online testing of multiple hypotheses.
\newblock \emph{arXiv preprint arXiv:1812.05068}, 2018.

\end{thebibliography}
\bibliographystyle{plainnat}


\section{Empirical FDP Estimates in Prior Work}

We give a brief overview of BH and Storey-BH, and we do so in the FDP estimation spirit of \secref{notation}. These derivations were first stated by \citet{Storey02}. Let $\mathbf{P} = \{P_1,\dots,P_n\}$ be a set of tested $p$-values, and $\alpha$ be the target FDR level.
For any threshold $c$, Storey defined
$$\fdphat_{\text{BH}} \defn \frac{nc}{\sum_{i=1}^n \One{P_i\leq c}}.$$
Picking the maximum $c$ such that $\fdphat_{\text{BH}}\leq \alpha$, and rejecting all $p$-values less than such $c$, is a succinct statement of the BH procedure. This is a rederivation of the equivalent rule given by Benjamini and Hochberg, who suggested finding
$$k^* = \max\left\{i\in[n]: P_{(i)}\leq \frac{\alpha}{n}i\right\},$$ where $P_{(i)}$ denotes the $i$-th order statistic of $\mathbf{P}$ in non-decreasing order, and rejecting $P_{(1)},\dots,P_{(k^*)}$. This interpretation inspired Storey to improve upon the BH procedure by defining
$$\fdphat_{\stbh}\defn \frac{ns\pih_0}{\sum_{i=1}^n \One{P_i\leq c}},$$
where $\pih_0 = \frac{1 + \sum_{i=1}^n \One{P_i > \lambda}}{n(1-\lambda)}$, for a user-chosen parameter $\lambda\in(0,1)$. Storey-BH finds the maximum $c$ such that 
$\fdphat_{\stbh}\leq \alpha$, and rejects all $p$-values less than such $c$. The motivation for using Storey-BH is the observation that BH might be overly conservative when there are many non-nulls with a strong signal, because it essentially assumes that $\pih_0 \approx 1$, where $\pih_0$ acts as an estimate of the proportion of nulls in the $p$-value set.

The FDP estimate approach was also taken in more recent, online FDR work \citep{RYWJ17,ramdas2018saffron,zrnic2018asynchronous,tian2019addis}. It started with \citet{RYWJ17} who rederived and improved upon the LORD algorithm \citep{javanmard2016online} by noticing that it implicitly controls
$$\fdphat_{\text{LORD}}(t) = \frac{\sum_{j=1}^t \alpha_j}{\sum_{i=1}^t \One{P_i \leq \alpha_i}},$$
where $P_i$ is a single $p$-value observed at time $i$, and $\alpha_i$ is its corresponding test level.
Inspired by Storey's idea of making the BH procedure less conservative, the SAFFRON algorithm was derived as a rule for controlling the estimate
$$\fdphat_{\text{SAFFRON}}(t) = \frac{\sum_{j=1}^t \frac{\alpha_j}{1-\lambda_j}\One{P_j>\lambda_j}}{\sum_{i=1}^t \One{P_i \leq \alpha_i}},$$
for some sequence of user-chosen parameters $\{\lambda_t\}$. Several different update rules for $\alpha_t$ have been proposed for LORD and SAFFRON, all of which control the respective FDP estimates under the target FDR level $\alpha$; for more details, see the respective papers \citep{javanmard2016online,ramdas2018saffron}.

\section{Proof of \thmref{fdrbatchbh}}

First we introduce some additional notation necessary to state the proof. Let $P^{(-i)}_{s,1},\dots,P^{(-i)}_{s,n_s}$ be a sequence of $p$-values that is identical to $P_{s,1},\dots, P_{s,n_s}$, \emph{except} for $P^{(-i)}_{s,i}$, which is set to 0. Let also $R_s^{(-i)}$ denote the number of rejections had BH under level $\alpha_s$ been run on $P^{(-i)}_{s,1},\dots,P^{(-i)}_{s,n_s}$.

Fix the number of tested batches $t$, and suppose $\fdphat_{\batchbh}(t)\leq \alpha$. We prove that this implies $\fdr(t)\leq\alpha$. Starting by definition,
\begin{align*}
    \fdr(t) &= \EE{\frac{\sum_{r\leq t} |\cR_r \cap \nulls_r|}{1\vee \sum_{s\leq t} R_s}}\\
    &= \sum_{r \leq t} \sum_{i\in \nulls_r} \EE{\frac{\One{P_{r,i} \leq \frac{\alpha_r}{n_r} R_r}}{1\vee \sum_{s\leq t} R_s}}\\
    &= \sum_{r \leq t} \sum_{i\in \nulls_r} \EE{\frac{\One{P_{r,i} \leq \frac{\alpha_r}{n_r} R_r}}{R_r^{(-i)} + \sum_{s\leq t,s\neq r} R_s}},
\end{align*}
where the second equality follows by definition of the BH procedure and the third equality follows by observing that, on the event $\{P_{r,i} \leq \frac{\alpha_r}{n_r}R_r\}$, $R_r = R_r^{(-i)}$.

Now we focus on a fixed index $i\in\nulls_r$, for a fixed batch $r$. Imagine a sequence of batches of $p$-values identical to the original one, only with $P_{r,i}$ deterministically set to 0. Denote the set of rejections in batch $s\in\N$ in this slightly modified sequence by $\tilde R^{(-r,i)}_s$. Notice that $R_s = \tilde R^{(-r,i)}_s$ for all $s<r$, and for $s\geq r$ we have $R_s = \tilde R^{(-r,i)}_s$ if $P_{r,i}$ from the original sequence is rejected. Therefore, on the event $\{P_{r,i} \leq \frac{\alpha_r}{n_r} R_r\}$, $\tilde R^{(-r,i)}_s = R_s$ for all $s\in\N$. This implies
\begin{align*}
    \fdr(t) &= \sum_{r \leq t} \sum_{i\in \nulls_r} \EE{\frac{\One{P_{r,i} \leq \frac{\alpha_r}{n_r} R_r}}{R_r^{(-i)} + \sum_{s\leq t,s\neq r} \tilde R^{(-r,i)}_s}}\\
    &\leq \sum_{r \leq t} \sum_{i\in \nulls_r} \EE{\frac{\One{P_{r,i} \leq \frac{\alpha_r}{n_r} R_r^{(-i)}}}{R_r^{(-i)} + \sum_{s\leq t,s\neq r} \tilde R^{(-r,i)}_s}},
\end{align*}
where the final inequality uses the fact that $R_r^{(-i)}\geq R_r$. Conditional on $\F^{r-1}$, $P_{r,i}$ is independent of all other random variables in the final term, namely $\alpha_r$, $R_r^{(-i)}$ and $\tilde R^{(-r,i)}_s, s\in[t], s\neq r$. This allows us to exploit the super-uniformity of $P_{r,i}$ to obtain
\begin{align*}
    \fdr(t) &\leq \sum_{r \leq t} \sum_{i\in \nulls_r} \EE{\EEst{\frac{\One{P_{r,i} \leq \frac{\alpha_r}{n_r} R_r^{(-i)}}}{R_r^{(-i)} + \sum_{s\leq t,s\neq r} \tilde R^{(-r,i)}_s}}{\F^{r-1}, R_r^{(-i)}}}\\
    &= \sum_{r \leq t} \sum_{i\in \nulls_r} \EE{\EEst{\One{P_{r,i} \leq \frac{\alpha_r}{n_r} R_r^{(-i)}}}{\F^{r-1}, R_r^{(-i)}} \EEst{\frac{1}{R_r^{(-i)} + \sum_{s\leq t,s\neq r} \tilde R^{(-r,i)}_s}}{\F^{r-1}, R_r^{(-i)}}}\\
    &= \sum_{r \leq t} \sum_{i\in \nulls_r} \EE{\frac{\alpha_r}{n_r}\frac{R_r^{(-i)}}{R_r^{(-i)} + \sum_{s\leq t,s\neq r} \tilde R^{(-r,i)}_s}}.
\end{align*}
Since the update rule for $\alpha_r$ is monotone by assumption, we have
\begin{align*}
    \fdr(t) &\leq \sum_{r \leq t} \sum_{i\in \nulls_r} \EE{\frac{\alpha_r}{n_r}\frac{R_r^{(-i)}}{R_r^{(-i)} + \sum_{s\leq t,s\neq r} R_s}}.
\end{align*}
Finally, we use the fact that the function $f(x)=\frac{x}{x+a}$ is a non-decreasing function for $a\geq 0$ to conclude
\begin{align*}
    \fdr(t) &\leq \sum_{r \leq t} \sum_{i\in \nulls_r} \EE{\frac{\alpha_r}{n_r}\frac{R_r^+}{R_r^+ + \sum_{s\leq t,s\neq r} R_s}}\\
    &\leq \sum_{r \leq t} \EE{\alpha_r\frac{R_r^+}{R_r^+ + \sum_{s\leq t,s\neq r} R_s}}\\
    &= \EE{\fdphat_{\batchbh}}\\
    &\leq \alpha,
\end{align*}
where the last inequality is deterministic, by design of the algorithm. This concludes the proof.


\section{Proof of \thmref{fdrbatchsbh}}

As in the proof of \thmref{fdrbatchbh}, we introduce some additional notation necessary to state the proof. Recall that the Storey-BH procedure uses a null proportion estimate of the form
$$\pih_{0,s} = \frac{1 + \sum_{j=1}^{n_s} P_{s,j}}{n_s(1-\lambda_s)}.$$
Let $P^{(-i)}_{s,1},\dots,P^{(-i)}_{s,n_s}$ be a sequence of $p$-values that is identical to $P_{s,1},\dots, P_{s,n_s}$, \emph{except} for $P^{(-i)}_{s,i}$, which is set to 0. Let also $R_s^{(-i)}$ denote the number of rejections had Storey-BH under level $\alpha_s$ been run on $P^{(-i)}_{s,1},\dots,P^{(-i)}_{s,n_s}$. With this, define the ``hallucinated'' null proportion as
$$\pih_{0,s}^{(-i)} = \frac{1 + \sum_{j=1}^{n_s} P^{(-i)}_{s,j}}{n_s(1-\lambda_s)}.$$

Fix the number of tested batches $t$, and suppose $\fdphat_{\batchsbh}(t)\leq \alpha$. We prove that this condition implies $\fdr(t)\leq \alpha$. Starting by definition,
\begin{align*}
    \fdr(t) &= \EE{\frac{\sum_{r\leq t} |\cR_r\cap \nulls_r|}{1\vee \sum_{s\leq t} R_s}}\\
    &= \sum_{r \leq t} \sum_{i\in \nulls_r} \EE{\frac{\One{P_{r,i} \leq \frac{\alpha_r}{\pih_{0,r} n_r} R_r}}{1\vee \sum_{s\leq t} R_s}}\\
    &= \sum_{r \leq t} \sum_{i\in \nulls_r} \EE{\frac{\One{P_{r,i} \leq \frac{\alpha_r}{\pih_{0,r} n_r} R_r}}{R_r^{(-i)} + \sum_{s\leq t,s\neq r} R_s}},
\end{align*}
where the second equality follows by definition of the Storey-BH procedure and the third equality follows by observing that, on the event $\{P_{r,i} \leq \frac{\alpha_r}{n_r\pih_{0,r}}R_r\}$, $R_r = R_r^{(-i)}$.

Now we focus on a fixed $i\in\nulls_r$, for a fixed batch $r$. Imagine a sequence of batches of $p$-values identical to the original one, only with $P_{r,i}$ deterministically set to 0. Denote the set of rejections in batch $s\in\N$ in this slightly modified sequence by $\tilde R^{(-r,i)}_s$. Notice that $R_s = \tilde R^{(-r,i)}_s$ for all $s<r$, and for $s\geq r$ we have $R_s = \tilde R^{(-r,i)}_s$ if $P_{r,i}$ from the original sequence in rejected. Therefore, on the event $\{P_{r,i} \leq \frac{\alpha_r}{n_r\pih_{0,r}} R_r\}$, $\tilde R^{(-r,i)}_s = R_s$ for all $s\in\N$. This implies
\begin{align*}
    \fdr(t) &= \sum_{r \leq t} \sum_{i\in \nulls_r} \EE{\frac{\One{P_{r,i} \leq \frac{\alpha_r}{n_r \pih_{0,r}} R_r}}{R_r^{(-i)} + \sum_{s\leq t,s\neq r} \tilde R^{(-r,i)}_s}}\\
    &\leq \sum_{r \leq t} \sum_{i\in \nulls_r} \EE{\frac{\One{P_{r,i} \leq \frac{\alpha_r}{n_r \pih_{0,r}^{(-i)}} R_r^{(-i)}}}{R_r^{(-i)} + \sum_{s\leq t,s\neq r} \tilde R^{(-r,i)}_s}},
\end{align*}
where the final inequality uses the fact that $R_r^{(-i)}\geq R_r$ and $\pih_{0,r}\geq \pih_{0,r}^{(-i)}$. Note that $P_{r,i}$ is independent of all other random variables in the final term, namely $\alpha_r, R_r^{(-i)}$, $\pih_{0,r}^{(-i)}$ and $\tilde R_s, s\in[t], s\neq r$. This allows us to exploit the super-uniformity of $P_{r,i}$ to obtain
\begin{align*}
    \fdr(t) &\leq \sum_{r \leq t} \sum_{i\in \nulls_r} \EE{\EEst{\frac{\One{P_{r,i} \leq \frac{\alpha_r}{n_r\pih_{0,r}^{(-i)}} R_r^{(-i)}}}{R_r^{(-i)} + \sum_{s\leq t,s\neq r} \tilde R^{(-r,i)}_s}}{\F^{r-1}, R_r^{(i)}, \pih_{0,r}^{(-i)}}}\\
    &= \sum_{r \leq t} \sum_{i\in \nulls_r} \EE{\EEst{\One{P_{r,i} \leq \frac{\alpha_r}{n_r\pih_{0,r}^{(-i)}} R_r^{(-i)}}}{\F^{r-1}, R_r^{(i)}, \pih_{0,r}^{(-i)}} \EEst{\frac{1}{R_r^{(i)} + \sum_{s\leq t,s\neq r} \tilde R^{(-r,i)}_s}}{\F^{r-1}, R_r^{(-i)}, \pih_{0,r}^{(-i)}}}\\
    &= \sum_{r \leq t} \sum_{i\in \nulls_r} \EE{\frac{\alpha_r}{n_r\pih_{0,r}^{(-i)}}\frac{R_r^{(i)}}{R_r^{(-i)} + \sum_{s\leq t,s\neq r} \tilde R^{(-r,i)}_s}}.
\end{align*}
Since the update for $\alpha_r$ is monotone, and since setting a $p$-value to 0 can only increase the number of rejections in a given batch, we have
\begin{align*}
    \fdr(t) \leq \sum_{r \leq t} \sum_{i\in \nulls_r} \EE{\frac{\alpha_r}{n_r \pih_{0,r}^{(-i)}}\frac{R_r^{(-i)}}{R_r^{(-i)} + \sum_{s\leq t,s\neq r} R_s}}.
\end{align*}
Now we use a similar trick of ignoring one $p$-value as given above. Imagine a sequence of $p$-values identical to the original one, however with $P_{r,i}$ deterministically set to 1. Denote the set of rejections in batch $s\in\N$ in this modified sequence by $\tilde R_s^{(+r,i)}$. We have $R_s = R_s^{(+r,i)}$ for $s<r$, and the same holds for $s\geq r$ on the event $\{P_{r,i}>\lambda_r\}$. From this, we can conclude the following
\begin{align*}
    \EE{ \frac{\One{P_{r,i}>\lambda_r}}{(1-\lambda_r)} \frac{\alpha_r}{n_r \pih_{0,r}^{(-i)}}\frac{R_r^{(-i)}}{R_r^{(-i)} + \sum_{s\leq t,s\neq l} R_s}} &=\EE{ \frac{\One{P_{r,i}>\lambda_r}}{(1-\lambda_r)} \frac{\alpha_r}{n_r \pih_{0,r}^{(-i)}}\frac{R_r^{(-i)}}{R_r^{(-i)} + \sum_{s\leq t,s\neq r} \tilde R^{(+r,i)}_s}}\\
    &=\EE{\frac{\One{P_{r,i}>\lambda_r}}{(1-\lambda_r)}}  \EE{\frac{\alpha_r}{n_r \pih_{0,r}^{(-i)}}\frac{R_r^{(-i)}}{R_r^{(-i)} + \sum_{s\leq t,s\neq r} \tilde R^{(+r,i)}_s}}\\
    &\geq \EE{\frac{\alpha_r}{n_r \pih_{0,r}^{(-i)}}\frac{R_r^{(-i)}}{R_r^{(-i)} + \sum_{s\leq t,s\neq r} \tilde R^{(+r,i)}_s}}\\
    &\geq \EE{\frac{\alpha_r}{n_r \pih_{0,r}^{(-i)}}\frac{R_r^{(-i)}}{R_r^{(-i)} + \sum_{s\leq t,s\neq r} R_s}},
\end{align*}
where the first inequality uses super-uniformity of null $p$-values, and the second inequality uses monotonicity of the test level update rule. Therefore, we can write
$$\fdr(t) \leq \sum_{r \leq t} \sum_{i\in \nulls_r} \EE{ \frac{\One{P_{r,i}>\lambda_r}}{(1-\lambda_r)} \frac{\alpha_r}{n_r \pih_{0,r}^{(-i)}}\frac{R_r^{(-i)}}{R_r^{(-i)} + \sum_{s\leq t,s\neq r} R_s}}.$$
Finally, we use the fact that the function $f(x)=\frac{x}{x+a}$ is a non-decreasing function for $a\geq 0$ to conclude
\begin{align*}
    \fdr(t) &\leq \sum_{r \leq t} \sum_{i\in \nulls_r} \EE{ \frac{\One{P_{r,i}>\lambda_r}}{(1-\lambda_r)} \frac{\alpha_r}{n_r \min_j \pih_{0,r}^{(-j)}}\frac{R_r^+}{R_r^+ + \sum_{s\leq t,s\neq r} R_s}}\\
    &\leq \sum_{r \leq t} \EE{\frac{\sum_{i\leq n_r} \One{P_{r,i} > \lambda_r}}{1 + \sum_{j\leq n_r, j\neq\max_r} \One{P_{r,j} > \lambda_r}}\frac{\alpha_r R_r^+}{R_r^+ + \sum_{s\leq t,s\neq r} R_s}}\\
    &= \sum_{r \leq t} \EE{\frac{\alpha_r  R_r^+ \One{P_{r,\max_r} > \lambda_r}}{R_r^+ + \sum_{s\leq t,s\neq r} R_s}}\\
    &= \EE{\fdphat_{\batchsbh}}\\
    &\leq \alpha,
\end{align*}
where once again the last inequality is deterministic by design of the algorithm, thus completing the proof of the theorem.

\section{\texorpdfstring{$\batchbh^{\text{PRDS}}$}{PRDS} Proofs}

To facilitate the proof of FDR control, we prove a a ``super-uniformity lemma'', similar to several lemmas in prior work that consider PRDS $p$-values \citep{blanchard2008two,BY01,ramdas2019unified}.

\begin{lemma}
\label{lem:PRDSlemma}
Let $U\in[0,1]$ and $V\in\N\cup\{0\}$ be random variables that satisfy the following:
\begin{itemize}
    \item $U$ is super-uniform, i.e. $\PP{U\leq u}\leq u$ for $u\in[0,1]$.
    \item $\PPst{V\leq r}{U \leq u}$ is non-decreasing in $u$, for every fixed $r>0$.
    \item $V\leq n$ almost surely.
\end{itemize}
Then, for every $a\geq 0, c>0$, $\EE{\frac{\One{U\leq cV}}{V+a}} \leq \frac{cn}{n+a}$.
\end{lemma}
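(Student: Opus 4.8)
The plan is to discretize the expectation over the integer values of $V$, turning the claim into a deterministic inequality between finite sums, and then to control that inequality by combining the super-uniformity of $U$ with the PRDS-type monotonicity hypothesis through a telescoping (Abel summation) argument.

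First I would observe that $\PP{U\le 0}=0$ forces $\One{U\le cV}=0$ almost surely on $\{V=0\}$, so only $V\in\{1,\dots,n\}$ contributes, and since $\{U\le cV\}=\{U\le ck\}$ on $\{V=k\}$,
$$\EE{\frac{\One{U\le cV}}{V+a}}=\sum_{k=1}^{n}\frac{\PP{U\le ck,\,V=k}}{k+a}.$$
Writing $\beta_r(t)\defn\PPst{V\le r}{U\le t}$ — exactly the object appearing in the PRDS hypothesis — I would expand $\PP{U\le ck,\,V=k}=\PP{U\le ck}(\beta_k(ck)-\beta_{k-1}(ck))$, which is nonnegative, and invoke super-uniformity $\PP{U\le ck}\le ck$ to get
$$\EE{\frac{\One{U\le cV}}{V+a}}\le c\sum_{k=1}^{n}\frac{k}{k+a}\bigl(\beta_k(ck)-\beta_{k-1}(ck)\bigr).$$

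The crux — and the step I expect to be the main obstacle — is converting the staggered differences $\beta_k(ck)-\beta_{k-1}(ck)$ into a genuine telescoping sum; this is precisely where positive dependence enters. Since $t\mapsto\beta_{k-1}(t)$ is non-decreasing and $ck\ge c(k-1)$, we have $\beta_{k-1}(ck)\ge\beta_{k-1}(c(k-1))$ for $k\ge2$; together with the trivial $\beta_0(c)\ge 0$, setting $\gamma_k\defn\beta_k(ck)$ and $\gamma_0\defn 0$ gives $\beta_k(ck)-\beta_{k-1}(ck)\le\gamma_k-\gamma_{k-1}$ for every $k$. Hence
$$\EE{\frac{\One{U\le cV}}{V+a}}\le c\sum_{k=1}^{n}\frac{k}{k+a}(\gamma_k-\gamma_{k-1}),$$
and an Abel summation — using that $k\mapsto\frac{k}{k+a}$ is non-decreasing and each $\gamma_k\in[0,1]$ — bounds the right-hand side by $\frac{n}{n+a}\,\gamma_n$. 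Finally $\gamma_n=\PPst{V\le n}{U\le cn}=1$ because $V\le n$ almost surely, which yields the claimed bound $\frac{cn}{n+a}$.

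Two small technical points would each need a line. The conditional probabilities $\beta_r(ck)$ are defined only when $\PP{U\le ck}>0$, but any $k$ with $\PP{U\le ck}=0$ contributes a zero term, so one may assume $\PP{U\le cn}>0$ (the bound being trivial otherwise) and restrict all sums accordingly. Also, the hypothesis only asserts monotonicity of $t\mapsto\beta_r(t)$ on $(0,1]$, which suffices here since $\beta_r(t)=\PP{V\le r}$ for all $t\ge 1$; neither point affects the structure of the argument.
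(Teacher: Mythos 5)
Your proposal is correct and follows essentially the same route as the paper's proof: decompose the expectation over the values of $V$, apply super-uniformity of $U$, use the monotonicity hypothesis to replace $\PPst{V\le k-1}{U\le ck}$ by $\PPst{V\le k-1}{U\le c(k-1)}$, and finish by telescoping. The only cosmetic difference is that the paper pulls out the maximal weight $\tfrac{cn}{n+a}$ before telescoping (while the differences are still nonnegative), whereas you keep the weights $\tfrac{ck}{k+a}$ and conclude via Abel summation; your handling of the degenerate cases $\PP{U\le ck}=0$ and of arguments $ck>1$ is a welcome extra bit of care.
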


\begin{proof}
The proof when $a=0$ is given by \citet{blanchard2008two} (Lemma 3.2), so in what follows we assume $a>0$.

We expand the expectation as follows:
\begin{align*}
    \EE{\frac{\One{U\leq cV}}{V+a}} &= \sum_{i=0}^n \frac{1}{i + a} \PP{U\leq ci, V = i}\\
    &= \sum_{i=0}^n \frac{\PP{U\leq ci}}{i + a} \frac{\PP{U\leq ci, V = i}}{\PP{U\leq ci}}\\
    &\leq \sum_{i=0}^n \frac{ci}{i + a} \PPst{V = i}{U\leq ci}\\
    &= \sum_{i=0}^n \frac{ci}{i + a} \left(\PPst{V \leq i}{U\leq ci} - \PPst{V \leq i-1}{U\leq ci}\right)\\
    &= \frac{cn}{n + a}  \sum_{i=0}^n\left(\PPst{V \leq i}{U\leq ci} - \PPst{V \leq i-1}{U\leq ci}\right),
\end{align*}
where the inequality follows by the super-uniformity assumption on $U$. By the second assumption of the lemma, $\PPst{V \leq i-1}{U\leq ci}\geq \PPst{V \leq i-1}{U\leq c(i-1)}$, hence
\begin{align*}
    &\frac{cn}{n + a}  \sum_{i=0}^n\left(\PPst{V \leq i}{U\leq ci} - \PPst{V \leq i-1}{U\leq ci}\right) \\
    \leq~ &\frac{cn}{n + a}  \sum_{i=0}^n\left(\PPst{V \leq i}{U\leq ci} - \PPst{V \leq i-1}{U\leq c(i-1)}\right)\\
    \leq~ &\frac{cn}{n + a},
\end{align*}
which follows by a telescoping sum argument.
\end{proof}

\subsection{Proof of \thmref{fdrPRDS}}

Fix the number of tested batches $t$, and suppose $\fdphat_{\batchbh^{\text{PRDS}}}(t)\leq \alpha$. We prove that this implies $\fdr(t)\leq\alpha$. Starting by definition,
\begin{align*}
    \fdr(t) &= \EE{\frac{\sum_{r\leq t} |\cR_r \cap \nulls_r|}{1\vee \sum_{s\leq t} R_s}}\\
    &= \sum_{r \leq t} \sum_{i\in \nulls_r} \EE{\frac{\One{P_{r,i} \leq \frac{\alpha_r}{n_r} R_r}}{1\vee \sum_{s\leq t} R_s}}\\
    &\leq \sum_{r \leq t} \sum_{i\in \nulls_r} \EE{\frac{\One{P_{r,i} \leq \frac{\alpha_r}{n_r} R_r}}{1\vee (R_r + \sum_{s<r} R_s)}},
\end{align*}
where the second equality follows by definition of the BH procedure and the inequality follows by ignoring all rejections in the denominator after the $r$-th batch.

We now condition on $\F^{r-1}$ to obtain
\begin{align*}
    \sum_{r \leq t} \sum_{i\in \nulls_r} \EE{\frac{\One{P_{r,i} \leq \frac{\alpha_r}{n_r} R_r}}{1\vee (R_r + \sum_{s<r} R_s)}} &= \sum_{r \leq t} \sum_{i\in \nulls_r} \EE{\EEst{\frac{\One{P_{r,i} \leq \frac{\alpha_r}{n_r} R_r}}{1\vee (R_r + \sum_{s<r} R_s)}}{\F^{r-1}}}\\
    &\leq \sum_{r \leq t} \sum_{i\in \nulls_r} \EE{\frac{\frac{\alpha_r}{n_r}n_r}{1\vee (n_r + \sum_{s<r} R_s)}}\\
    &= \sum_{r \leq t} \EE{\frac{n_r\alpha_r}{1\vee (n_r + \sum_{s<r} R_s)}},
\end{align*}
where the inequality applies \lemref{PRDSlemma} and the fact that $\alpha_r$ and $\{R_s, s< r\}$ are measurable with respect to the conditioning, and the final equality uses the fact that $|\nulls_r|\leq n_r$.

Since the final expression is equal to $\EE{\fdphat_{\batchbh^{\text{PRDS}}}}$, we can conclude that
 $$\fdr(t)\leq \EE{\fdphat_{\batchbh^{\text{PRDS}}}} \leq \alpha,$$
 as desired.


\section{Proof of \factref{defaultbatchbhvalid}}

The control of the estimate follows by observing
\begin{align*}
	\fdphat_{\batchbh}(t+1) &\leq \sum_{s\leq t} \alpha_s \frac{R_s^+}{R_s^+ + \sum_{r\leq t+1, r\neq s} R_r} + \alpha_{t+1} \frac{n_{t+1}}{n_{t+1} + \sum_{r\leq t} R_r}\\
	&= \sum_{s\leq t}\gamma_s \alpha - \sum_{s\leq t} \alpha_s \frac{R_s^+}{R_s^+ + \sum_{r\leq t, r\neq s} R_r} + \sum_{s\leq t} \alpha_s \frac{R_s^+}{R_s^+ + \sum_{r\leq t+1, r\neq s} R_r}\\
	&\leq \alpha,
\end{align*}
where the second step follows by replacing $\alpha_{t+1}$ with the update rule from Algorithm \ref{alg:defaultbatchbh}, and the final inequality follows by the assumption that $\sum_{j=1}^\infty \gamma_j = 1$.

\section{Additional Monotone Update Rules}

In this section we provide several monotone updates for $\batchbh$ and $\batchsbh$, which control the FDR for arbitrary, possibly adversarially chosen $p$-value distributions.

\subsection{\texorpdfstring{$\batchbh$}{Batch BH} Rules}

\begin{algorithm}[H]
\SetAlgoLined
\SetKwInOut{Input}{input}
\Input{FDR level $\alpha$, non-increasing sequence $\{\gamma_t\}_{t=1}^\infty$ summing to 1, initial wealth $W_0\leq\alpha$}
Set $\alpha_1 = \gamma_1 \frac{W_0}{n_1}$\newline
 \For{$t=1,2,\dots$}{
 Run the BH procedure under level $\alpha_t$ on batch $\mathbf{P}_t$\newline
  Set $\alpha_{t+1} = \gamma_{t+1} \frac{W_0}{n_{t+1}} + \frac{\alpha}{n_{t+1}} \sum_{s=1}^t \gamma_{t+1 - s} R_s - \frac{W_0}{n_{t+1}} \sum_{s=1}^t\gamma_{t+1 - s}\One{s = \tau_1}$,\\
  where $\tau_1 = \min\{s \geq 1: R_s>0\}$
 }
 \caption{One version of the $\batchbh$ algorithm}
 \label{alg:algo1}
\end{algorithm}

\begin{fact}
\label{fact:batchbhvalid}
The algorithm given in Algorithm \ref{alg:algo1} is monotone and guarantees $\fdphat_{\batchbh}(t)\leq\alpha$.
\end{fact}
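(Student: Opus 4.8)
The plan is to verify the two claims of \factref{batchbhvalid} separately---that the prescribed update keeps $\fdphat_{\batchbh}(t)\le\alpha$ for every $t$, and that the resulting procedure is monotone---using as a common device the auxiliary sequence $\beta_s\defn n_s\alpha_s$. Reading the update of Algorithm~\ref{alg:algo1} in terms of $\beta$, one finds the recursion
$$
\beta_r \;=\; \gamma_r W_0 \;+\; \alpha\sum_{u<r}\gamma_{r-u}R_u \;-\; W_0\sum_{u<r}\gamma_{r-u}\One{u=\tau_1},
\qquad \tau_1\defn\min\{u\ge 1:R_u>0\},
$$
which is precisely the LORD\texttt{++} ``wealth'' recursion of \cite{RYWJ17} with initial wealth $W_0$ (indeed, for batches of size one, $\beta_s=\alpha_s$ and the algorithm reduces to LORD\texttt{++}), except that the increments $R_u$ may exceed one. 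I would first record that $\beta_r\ge 0$: the only negative term, $W_0\gamma_{r-\tau_1}\One{\tau_1<r}$, is dominated by the summand $\alpha\gamma_{r-\tau_1}R_{\tau_1}\ge\alpha\gamma_{r-\tau_1}\ge W_0\gamma_{r-\tau_1}$ already present (using $R_{\tau_1}\ge 1$ and $W_0\le\alpha$); and that $R_s\le R_s^+\le n_s$ with $R_s^+\ge 1$, since hallucinating one $p$-value to $0$ never decreases, and always produces at least one, rejection of BH.

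\emph{FDP estimate control.} Write $N_t\defn\sum_{r\le t}R_r$. Because $R_s^+\ge R_s$, the $s$-th denominator satisfies $R_s^++\sum_{r\le t,r\ne s}R_r\ge N_t$, and it is also $\ge R_s^+\ge 1$, hence $\ge\max(1,N_t)$; combined with $R_s^+\le n_s$ and $\beta_s\ge 0$,
$$
\fdphat_{\batchbh}(t)=\sum_{s\le t}\alpha_s\frac{R_s^+}{R_s^++\sum_{r\le t,r\ne s}R_r}\;\le\;\frac{1}{\max(1,N_t)}\sum_{s\le t}\frac{\beta_s R_s^+}{n_s}\;\le\;\frac{\sum_{s\le t}\beta_s}{\max(1,N_t)}.
$$
So it remains to show $\sum_{s\le t}\beta_s\le\alpha\max(1,N_t)$. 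Summing the recursion and interchanging the finite double sums, with $\Gamma_m\defn\sum_{k=1}^m\gamma_k\in[0,1]$,
$$
\sum_{s\le t}\beta_s \;=\; W_0\Gamma_t \;+\; \alpha\sum_{u<t}R_u\Gamma_{t-u} \;-\; W_0\,\Gamma_{t-\tau_1}\One{\tau_1<t}.
$$
If $\tau_1\ge t$ (no rejection before batch $t$, so $N_{t-1}=0$) the last two terms vanish and $\sum_{s\le t}\beta_s=W_0\Gamma_t\le W_0\le\alpha$. If $\tau_1<t$, then using $R_{\tau_1}\ge 1$, $W_0\le\alpha$ and $\Gamma_t\le 1$,
$$
W_0(\Gamma_t-\Gamma_{t-\tau_1})\;\le\;\alpha(1-\Gamma_{t-\tau_1})\;\le\;\alpha R_{\tau_1}(1-\Gamma_{t-\tau_1})\;\le\;\alpha\sum_{u<t}R_u(1-\Gamma_{t-u}),
$$
and adding $\alpha\sum_{u<t}R_u\Gamma_{t-u}$ to both sides yields $\sum_{s\le t}\beta_s\le\alpha\sum_{u<t}R_u=\alpha N_{t-1}\le\alpha N_t$. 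Either way the bound holds, so $\fdphat_{\batchbh}(t)\le\alpha$.

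\emph{Monotonicity.} Fix a coordinate $(s,i)$ and a perturbation with $\tilde P_{s,i}<P_{s,i}$ and all other $p$-values unchanged; mark perturbed quantities with tildes. I would show by induction on $r\ge s$ that $\tilde\beta_r\ge\beta_r$ and $\tilde R_r\ge R_r$. For $r=s$: $\tilde\alpha_s=\alpha_s$ (it depends only on batches $<s$) and $\tilde R_s\ge R_s$ since BH is monotone in its $p$-values. For $r>s$, the induction hypothesis gives $\tilde R_u\ge R_u$ for all $u<r$ (equality for $u<s$), hence $\tilde\tau_1\le\tau_1$, and from the recursion
$$
\tilde\beta_r-\beta_r \;=\; \alpha\sum_{u<r}\gamma_{r-u}(\tilde R_u-R_u)\;-\;W_0\bigl(\gamma_{r-\tilde\tau_1}\One{\tilde\tau_1<r}-\gamma_{r-\tau_1}\One{\tau_1<r}\bigr).
$$
If $\tilde\tau_1=\tau_1$ the bracket is zero and the remaining sum is $\ge 0$. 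Otherwise $\tilde\tau_1<\tau_1$, which forces $\tilde\tau_1\ge s$ and $R_{\tilde\tau_1}=0<\tilde R_{\tilde\tau_1}$; dropping the nonnegative term $W_0\gamma_{r-\tau_1}\One{\tau_1<r}$ and keeping only the $u=\tilde\tau_1$ summand gives $\tilde\beta_r-\beta_r\ge(\alpha-W_0)\gamma_{r-\tilde\tau_1}\ge 0$ (the case $\tilde\tau_1\ge r$ being trivial). Thus $\tilde\beta_r\ge\beta_r$, hence $\tilde\alpha_r\ge\alpha_r$, hence $\tilde R_r\ge R_r$ because the $r$-th batch of $p$-values is unchanged and BH is monotone in its level. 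Summing $\tilde R_r\ge R_r$ over $r\in\{s+1,\dots,t\}$ gives the monotonicity inequality.

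The step that requires the most care is the handling of the ``payback'' term $-W_0\sum_{u<r}\gamma_{r-u}\One{u=\tau_1}$, which is exactly the feature distinguishing Algorithm~\ref{alg:algo1} from a plain alpha-investing rule. In the FDP proof one must use $R_{\tau_1}\ge 1$ (not merely $R_{\tau_1}>0$) so that the rejection-driven credit at $\tau_1$ genuinely absorbs the initial-wealth debit; in the monotonicity proof one must control the possible jump of $\tau_1$ to an earlier index when a new rejection appears, which is precisely where $W_0\le\alpha$ (so that $(\alpha-W_0)\gamma_{\cdot}\ge 0$) enters. Everything else is routine: interchanging finite sums, the bound $\Gamma_m\le 1$ from $\sum_k\gamma_k=1$, and the two monotonicities of BH (in its $p$-values and in its level).
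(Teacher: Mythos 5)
Your proposal is correct and follows essentially the same route as the paper's proof: the FDP bound via $R_s^+\le n_s$ and the denominator bound $\ge 1\vee\sum_r R_r$, followed by summing the wealth recursion and absorbing the $W_0$ debit using $R_{\tau_1}\ge 1$ and $W_0\le\alpha$. Your monotonicity argument is in fact more careful than the paper's, which merely asserts that the update is increasing in $(R_1,\dots,R_t)$ and does not explicitly address the possible shift of $\tau_1$ to an earlier batch; your induction with the $(\alpha-W_0)\gamma_{r-\tilde\tau_1}\ge 0$ step closes that gap cleanly.
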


\begin{proof}
	First we prove that the algorithm guarantees $\fdphat_{\batchbh}(t)\leq\alpha$. Starting by definition, we have
\begin{align*}
    \fdphat_{\batchbh}(t) &= \sum_{j=1}^t \alpha_j \frac{R_j^+}{R_j^+ + \sum_{k\leq t, k\not=j} R_k}\\
    &\leq \sum_{j=1}^t \alpha_j \frac{n_j}{n_j + \sum_{k\leq t, k\not=j} R_k} \\
    &\leq \frac{\sum_{j=1}^t \alpha_j n_j}{1 \vee \sum_{k=1}^t R_k} \\
    &= \frac{W_0 \sum_{j=1}^t \gamma_j + \alpha \sum_{j=1}^t \sum_{l=1}^{j-1} \gamma_{j-l} R_l - W_0 \sum_{j=1}^t \sum_{l=1}^{j-1} \gamma_{j-l} \One{l=\tau_1}}{1\vee \sum_{k=1}^t R_k} \\
    &= \frac{W_0 \sum_{j=1}^t \gamma_j + \alpha \sum_{j=1}^t \sum_{l=1}^{j-1} \gamma_{j-l} R_l - W_0 \sum_{j=\tau_1+1}^t \gamma_{j-\tau_1}}{1\vee \sum_{k=1}^t R_k},
\end{align*}
where the first inequality follows because $R_j^+\leq n_j$, the second inequality follows because $n_j \geq R_j \vee 1$, the second equality follows by the definition of $\alpha_j$, as given in Algorithm \ref{alg:algo1}, and the third equality is obtained by removing the summation terms where $l\not=\tau_1$.

If $t<\tau_1$, then $R_1=R_2=\dotsb=R_t=0$ by the definition of $\tau_1$, so the bound above evaluates to $W_0\sum_{j=1}^t \gamma_j\leq W_0\leq\alpha$, which is the desired conclusion. Thus, for the remainder of the proof, we assume that $t\geq\tau_1$.

Since $R_i=0$ for $i<\tau_1$, we can remove such terms from consideration, leaving us with
\begin{align*}
    \fdphat_{\batchbh}(t)
    &\leq \frac{W_0 \sum_{j=1}^t \gamma_j + \alpha \sum_{j=\tau_1+1}^t \sum_{l=\tau_1}^{j-1} \gamma_{j-l} R_l - W_0 \sum_{j=\tau_1+1}^t \gamma_{j-\tau_1}}{\sum_{k=\tau_1}^t R_k} \\
    &= \frac{W_0 \sum_{j=1}^t \gamma_j + \alpha \sum_{j=\tau_1+2}^t \sum_{l=\tau_1+1}^{j-1} \gamma_{j-l} R_l + (\alpha R_{\tau_1} - W_0) \sum_{j=\tau_1+1}^t \gamma_{j-\tau_1}}{\sum_{k=\tau_1}^t R_k}.
\end{align*}

Since $\{\gamma_t\}_{t=1}^\infty$ is defined to be a non-negative sequence summing to $1$, then $W_0 \sum_{j=1}^t \gamma_j\leq W_0$ and $(\alpha R_{\tau_1} - W_0) \sum_{j=\tau_1+1}^t \gamma_{j-\tau_1}\leq \alpha R_{\tau_1} - W_0$. We apply this observation to obtain
\begin{align*}
    \fdphat_{\batchbh}(t) &\leq \alpha \frac{R_{\tau_1} + \sum_{j=\tau_1+2}^t \sum_{l=\tau_1+1}^{j-1} \gamma_{j-l} R_l}{\sum_{k=\tau_1}^t R_k} \\
    &\leq \alpha \frac{R_{\tau_1} + \sum_{l=\tau_1+1}^{t-1} R_l}{\sum_{k=\tau_1}^t R_k} \leq \alpha,
\end{align*}
where we again use the fact that the sequence $\{\gamma_t\}_{t=1}^\infty$ sums to one. The final inequality concludes the proof that $\fdphat_{\batchbh}(t)$ is controlled.

We now prove that the update rule is monotone. We restate the test level update rule in a more convenient form:
\begin{equation} \label{eq:lemma1proofalpha}
    \alpha_{t+1} = \gamma_{t+1} \frac{W_0}{n_{t+1}} + \frac{1}{n_{t+1}} \sum_{j=1}^t \gamma_{t+1-j}(\alpha R_j - W_0 \One{j=\tau_1}).
\end{equation}

Suppose we have two sequences of $p$-values $(P_{1,1},P_{1,2},\dotsc,P_{t,n_t})$ and $(\tilde P_{1,1}, \tilde P_{1,2}, \dotsc, \tilde P_{t,n_t})$ such that $(P_{1,1},P_{1,2},\dotsc,P_{t,n_t}) \leq (\tilde P_{1,1}, \tilde P_{1,2}, \dotsc, \tilde P_{t,n_t})$ coordinate-wise. Denote all relevant $\batchbh$ quantities on these two sequences using a similar notation.

If $\alpha_t \geq \tilde \alpha_t$, then $R_t \geq \tilde R_t$ by the definition of the $\bh$ procedure. The final observation is that the above update is monotonically increasing in $(R_1, R_2,\dotsc,R_t)$, which concludes the proof.
\end{proof}

If we know that all batches are of size at least $M$, we can also derive the following rule which is expected to be more powerful than the one above, when the batch sizes do not vary too much. Moreover, when all batches are of the same size $n_j \equiv M$, the rule is strictly more powerful.

\begin{algorithm}[H]
\SetAlgoLined
\SetKwInOut{Input}{input}
\Input{FDR level $\alpha$, non-increasing sequence $\{\gamma_t\}_{t=1}^\infty$ summing to 1, initial wealth $W_0\leq\alpha$}
Set $\alpha_1 = \gamma_1 \frac{M}{n_1}\alpha$\newline
 \For{$t=1,2,\dots$}{
 Run the BH procedure under level $\alpha_t$ on batch $\mathbf{P}_t$\newline
  Set $\alpha_{t+1} = \gamma_{t+1} \frac{M}{n_{t+1}} \alpha + \frac{\alpha}{n_{t+1}} \sum_{s=1}^t \gamma_{t+1 - s} R_s^{\text{add}}$, where $R_s^{\text{add}} = \min\{R_s,\max\{R_r:r<s\}\}$  
 }
 \caption{One version of the $\batchbh$ algorithm when $n_s\geq M$ for all $s$}
 \label{alg:bhM}
\end{algorithm}

\begin{fact}
\label{fact:batchbhMvalid}
If $n_s\geq M$ for all $s\in\N$, the algorithm given in Algorithm \ref{alg:bhM} is monotone and guarantees $\fdphat_{\batchbh}(t)\leq\alpha$.
\end{fact}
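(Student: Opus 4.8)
The plan is to follow the template of the proof of \factref{batchbhvalid}: first establish the deterministic inequality $\fdphat_{\batchbh}(t)\le\alpha$, and then verify monotonicity by a cascade argument. Throughout I adopt the convention $\max\varnothing=0$, so that $R_s^{\text{add}}=0$ whenever $R_r=0$ for all $r<s$; in this way $R_s^{\text{add}}$ plays exactly the role the refund term $W_0\One{s=\tau_1}$ played in Algorithm~\ref{alg:algo1}. The combinatorial fact I will use repeatedly is that, since $R_s-R_s^{\text{add}}=(R_s-\max_{r<s}R_r)^+$ equals the increment of the running maximum at step $s$, the sum telescopes: $\sum_{s=1}^{u}R_s^{\text{add}}=\sum_{s=1}^{u}R_s-\max_{r\le u}R_r$ for every $u$.

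For the FDP bound, fix $t$ and write $D_t\defn M+\sum_{s=1}^{t-1}R_s^{\text{add}}$. Since $R_s^+\le n_s$ and $x\mapsto x/(x+a)$ is non-decreasing for $a\ge0$, we get $\fdphat_{\batchbh}(t)\le\sum_{j\le t}\alpha_j\,\frac{n_j}{n_j+\sum_{k\le t,\,k\ne j}R_k}$. The key claim is that each denominator here is at least $D_t$: using $n_j\ge M$ and the telescoping identity, the inequality $n_j+\sum_{k\le t,\,k\ne j}R_k\ge D_t$ reduces to $R_t\ge R_j-\max_{r\le t-1}R_r$, which is immediate for $j\le t-1$ (the right-hand side is $\le0$) and for $j=t$ (the right-hand side is $\le R_t$). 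Hence $\fdphat_{\batchbh}(t)\le D_t^{-1}\sum_{j\le t}\alpha_j n_j$. Substituting $\alpha_j n_j=\gamma_j M\alpha+\alpha\sum_{s<j}\gamma_{j-s}R_s^{\text{add}}$ (which is the test-level update of Algorithm~\ref{alg:bhM} multiplied by $n_j$), interchanging the order of summation, and using $\sum_{s\ge1}\gamma_s=1$ yields $\sum_{j\le t}\alpha_j n_j\le M\alpha+\alpha\sum_{s=1}^{t-1}R_s^{\text{add}}=\alpha D_t$, so $\fdphat_{\batchbh}(t)\le\alpha$.

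For monotonicity, I will first note that the test-level update $\alpha_{t+1}$ is a coordinate-wise non-decreasing function of $(R_1,\dots,R_t)$, because $R_s^{\text{add}}=\min\{R_s,\max_{r<s}R_r\}$ is non-decreasing in each coordinate $R_\ell$ while the remaining factors ($\gamma_{t+1-s}\ge0$, $M$, $\alpha$, $n_{t+1}$) are fixed constants. Now take two $p$-value sequences that agree except for a single entry with $\tilde P_{s,i}<P_{s,i}$. For $r<s$ the quantities $\alpha_r,R_r$ are unchanged, $\alpha_s$ is unchanged, and $\tilde R_s\ge R_s$ since BH at a fixed level yields at least as many rejections when a $p$-value is decreased. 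Proceeding by induction on $r>s$: if $\tilde R_k\ge R_k$ for all $k<r$, then the monotone update gives $\tilde\alpha_r\ge\alpha_r$, and hence $\tilde R_r\ge R_r$ since BH is monotone in its level. Thus $\tilde R_r\ge R_r$ for every $r>s$, so $\sum_{r=s+1}^t R_r\le\sum_{r=s+1}^t\tilde R_r$, which is the required monotonicity; together with the FDP bound, \thmref{fdrbatchbh} then implies anytime FDR control.

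The only step requiring genuine care is the denominator claim $n_j+\sum_{k\le t,\,k\ne j}R_k\ge D_t$. This is where the apparently extravagant ``initial budget'' $M\alpha$ is reconciled with feasibility: it is paid for jointly by the batch-size slack $n_j\ge M$ and by the capping built into $R_s^{\text{add}}$, and the bookkeeping has to be handled carefully for the most recent batch $j=t$, whose rejections have not yet been folded into any test level.
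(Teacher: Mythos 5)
Your proof is correct and follows essentially the same route as the paper's: bound $R_j^+$ by $n_j$, use $n_j\ge M$ to get a common lower bound on the denominators, identify $\sum_{s\le t-1}R_s^{\text{add}}$ with $\min_i\sum_{k<t,k\ne i}R_k$ (your running-max telescoping is just an explicit form of this), substitute the update rule, and inherit monotonicity from the induction used for Algorithm~\ref{alg:algo1}. Your direct verification that every denominator dominates $D_t$ (including the $j=t$ case) is a slightly cleaner bookkeeping of the step the paper handles via the $\min_i$ comparison, but the substance is identical.
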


\begin{proof}

We first prove that the algorithm guarantees $\fdphat_{\batchbh}(t)\leq\alpha$. Starting by definition, we have
\begin{align*}
    \fdphat_{\batchbh}(t) &= \sum_{j=1}^t \alpha_j \frac{R_j^+}{R_j^+ + \sum_{k\leq t, k\not=j} R_k} \leq \sum_{j=1}^t \alpha_j \frac{n_j}{n_j + \sum_{k\leq t, k\not=j} R_k} \\
    &\leq \sum_{j=1}^t \alpha_j \frac{n_j}{M + \sum_{k\leq t, k\not=j} R_k} \leq \frac{\sum_{j=1}^t \alpha_j n_j}{M + \min_i\sum_{k\leq t, k\not=i} R_k},
\end{align*}
where the first inequality follows because $R_j^+\leq n_j$ and second inequality follows by the assumption that $n_j\geq M$ for all $j$. Substituting in the update rule from Algorithm \ref{alg:bhM}, we obtain
\begin{align*}
    \fdphat_{\batchbh}(t) \leq \alpha \frac{M \sum_{j=1}^t \gamma_j + \sum_{j=1}^t \sum_{l=1}^{j-1} \gamma_{j-l} R_l^{\text{add}}}{M + \min_i\sum_{k\leq t, k\not=i} R_k} \leq \alpha \frac{M + \sum_{l=1}^{t-1} R_l^{\text{add}}}{M + \min_i\sum_{k\leq t, k\not=i} R_k},
\end{align*}
where we use the fact that the sequence $\{\gamma_t\}_{t=1}^\infty$ is defined to be non-negative and summing to one. Since $\sum_{l=1}^{t-1} R_l^{\text{add}} = \min_i\sum_{k< t, k\not=i} R_k$ by the definition of $R_l^{\text{add}}$, we can conclude $\fdphat_{\batchbh}(t) \leq \alpha$, as desired. 

Monotonicity follows by the same steps as in the proof of \factref{batchbhvalid}, thus completing the proof.

\end{proof}

%

Below we give one more monotone $\batchbh$ update, based on a different idea.

\begin{algorithm}[H]
\SetAlgoLined
\SetKwInOut{Input}{input}
\Input{FDR level $\alpha$, sequence $\{\gamma_t\}_{t=1}^\infty$ summing to 1 such that $\gamma_2\geq \gamma_1$}
Set $\alpha_1 = \gamma_1 \alpha$\newline
Run the BH procedure under level $\alpha_1$ on batch $\mathbf{P}_1$\newline
Set $\alpha_2 = (\gamma_2\alpha - \alpha_1)\frac{R_1 + n_2}{n_2}$\newline
 \For{$t=2,3,\dots$}{
 Run the BH procedure under level $\alpha_t$ on batch $\mathbf{P}_t$\newline
  Set $\alpha_{t+1} = \left(\frac{R_t \left(\sum_{i=1}^{t-1}\alpha_i n_i\right)}{(\sum_{j=1}^{t-1} n_j)(\sum_{k=1}^{t-1} n_k + R_t)} + \gamma_{t+1}\alpha\right) \frac{\sum_{l=1}^t R_l + n_{t+1}}{n_{t+1}}$
 }
 \caption{One version of the $\batchbh$ algorithm}
 \label{alg:algomonotone}
\end{algorithm}

\begin{fact}
\label{lem:algomonotoneproof}
The update given in Algorithm \ref{alg:algomonotone} controls $\fdphat_{\batchbh}$ and is monotone.
\end{fact}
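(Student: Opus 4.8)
The plan is to prove the two halves of the statement separately: that Algorithm~\ref{alg:algomonotone} keeps $\fdphat_{\batchbh}(t)\le\alpha$ for all $t$ (by induction on $t$), and that it is monotone (by a structural argument paralleling \factref{batchbhvalid}). Write $N_t=\sum_{i\le t}n_i$, $A_t=\sum_{i\le t}\alpha_i n_i$, $S_t=\sum_{l\le t}R_l$, and note that all $\alpha_i\ge 0$ by induction, using $\gamma_i\ge 0$ and $\gamma_2\ge\gamma_1$. The base cases $t=1,2$ are direct: $\fdphat_{\batchbh}(1)=\alpha_1=\gamma_1\alpha$, and since $\alpha_2$ is calibrated so that $\alpha_2\frac{n_2}{n_2+R_1}=\gamma_2\alpha-\alpha_1=(\gamma_2-\gamma_1)\alpha\ge 0$ while $R_2^+\le n_2$, summing gives $\fdphat_{\batchbh}(2)\le\alpha_1+(\gamma_2-\gamma_1)\alpha=\gamma_2\alpha$. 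For the inductive step I would decompose $\fdphat_{\batchbh}(t+1)$ into the contributions of batches $1,\dots,t$ and that of batch $t+1$. The former is at most $\fdphat_{\batchbh}(t)$, since inserting batch $t+1$ enlarges each denominator $R_s^++\sum_{r\le t+1,\,r\ne s}R_r$ by $R_{t+1}$; let $\sigma_{t+1}\ge 0$ denote this total decrease (the ``slack''). The latter is at most $\alpha_{t+1}\frac{n_{t+1}}{n_{t+1}+S_t}$ (from $R_{t+1}^+\le n_{t+1}$ and the fact that the denominator already contains all $S_t$ previous rejections), which by the update rule equals $\frac{R_tA_{t-1}}{N_{t-1}(N_{t-1}+R_t)}+\gamma_{t+1}\alpha$.

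Unlike \factref{defaultbatchbhvalid}, where $\alpha_{t+1}$ is chosen to saturate the estimate, here the update adds the extra ``reclaimed'' term $\frac{R_tA_{t-1}}{N_{t-1}(N_{t-1}+R_t)}$, built from the past average test level $A_{t-1}/N_{t-1}$, so I would carry the stronger inductive hypothesis $\fdphat_{\batchbh}(t)+\frac{R_tA_{t-1}}{N_{t-1}(N_{t-1}+R_t)}\le\alpha\sum_{j\le t}\gamma_j$. With this hypothesis and the decomposition above, the step reduces to a single inequality --- that the mass reclaimed at step $t+1$, namely $\frac{R_{t+1}A_t}{N_t(N_t+R_{t+1})}$, is dominated by the slack $\sigma_{t+1}=\sum_{s\le t}\alpha_sR_s^+\big(\frac{1}{D_s}-\frac{1}{D_s+R_{t+1}}\big)$ with $D_s=R_s^++\sum_{r\le t,\,r\ne s}R_r$. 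This is the part I expect to be the main obstacle. It should follow from $R_s^+\le n_s$ and $R_r\le n_r$ (which force $D_s\le N_t$ for every $s$) together with monotonicity/convexity of $x\mapsto\frac{1}{x(x+R_{t+1})}$, but the inequality holds only after summing over $s$ --- it can fail term-by-term when batch sizes are unequal and a large batch produces no rejections --- so some rearrangement or Abel-summation over $s$ is needed rather than a per-batch comparison. Once it is in place, the hypothesis propagates, and since $\sum_{j\le t}\gamma_j\le\sum_j\gamma_j=1$ we obtain $\fdphat_{\batchbh}(t)\le\alpha$.

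Monotonicity is the routine half and follows the pattern of \factref{batchbhvalid}. First, by strong induction on $t$, $\alpha_t$ is a coordinatewise non-decreasing function of the realized counts $(R_1,\dots,R_{t-1})$: in the formula for $\alpha_{t+1}$ the bracket $\frac{R_tA_{t-1}}{N_{t-1}(N_{t-1}+R_t)}+\gamma_{t+1}\alpha$ is increasing in $R_t$ (since $x\mapsto x/(N_{t-1}+x)$ is) and in each $\alpha_i$, $i<t$ (the coefficient $n_i$ being positive), while the factor $(S_t+n_{t+1})/n_{t+1}$ is increasing in the $R_l$; composing with the induction hypothesis for $\alpha_1,\dots,\alpha_{t-1}$ gives the claim. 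Then, given two $p$-value streams agreeing everywhere except $\tilde P_{s,i}<P_{s,i}$, coordinatewise monotonicity of $\bh$ yields $\tilde R_s\ge R_s$ (with $\tilde\alpha_s=\alpha_s$), and monotonicity of $\bh$ in its level together with the monotone update propagates $\tilde\alpha_r\ge\alpha_r$ and $\tilde R_r\ge R_r$ for all $r>s$, giving $\sum_{r=s+1}^t R_r\le\sum_{r=s+1}^t\tilde R_r$, which is exactly the required monotonicity.
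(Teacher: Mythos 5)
Your decomposition of $\fdphat_{\batchbh}(t+1)$ into old and new contributions is sound, and your monotonicity argument matches the paper's, but the induction you propose cannot be completed: the inequality you flag as ``the main obstacle'' and defer --- that the reclaimed mass $\frac{R_{t+1}A_t}{N_t(N_t+R_{t+1})}$ is dominated by the slack $\sigma_{t+1}=\sum_{s\le t}\alpha_s R_s^+\bigl(\frac{1}{D_s}-\frac{1}{D_s+R_{t+1}}\bigr)$ --- is false in general, even after summing over $s$. Using the paper's identity \eqnref{identity}, the reclaimed mass equals $\sum_{s\le t}\alpha_s n_s\bigl(\frac{1}{N_t}-\frac{1}{N_t+R_{t+1}}\bigr)$ (in your notation $N_t=\sum_{i\le t}n_i$, $A_t=\sum_{i\le t}\alpha_i n_i$), so the comparison pits $\alpha_s n_s$ against $\alpha_s R_s^+$, and a batch with large $n_s$ and large $\alpha_s$ but $R_s^+\ll n_s$ contributes far more to the left side than to the right. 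Concretely: take $t=2$, $n_1=n_2=100$, $\gamma_1=0.4$, $\gamma_2=0.41$, batch~1 entirely unrejected ($R_1=0$, $R_1^+=1$), batch~2 entirely rejected ($R_2=R_2^+=100$), and $R_3=1$. Then $\alpha_1=0.4\alpha$, $\alpha_2=(\gamma_2\alpha-\alpha_1)\frac{R_1+n_2}{n_2}=0.01\alpha$, the reclaimed mass is $\frac{41\alpha}{200\cdot 201}\approx 1.0\times10^{-3}\alpha$, while $\sigma_3=0.4\alpha\cdot\frac{1}{101\cdot102}+\alpha\cdot\frac{1}{100\cdot101}\approx 1.4\times10^{-4}\alpha$. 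The inequality fails by a factor of about seven, so your strengthened inductive hypothesis does not propagate, and no Abel-type rearrangement over $s$ can rescue an inequality whose summed sides are ordered the wrong way.

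The paper sidesteps this by never calibrating the reclaimed term against $\fdphat_{\batchbh}$ itself. It inducts instead on the larger surrogate $G(t)=\sum_{i\le t}\alpha_i\frac{n_i}{n_i+\sum_{j<t,j\ne i}R_j}$, which dominates $\fdphat_{\batchbh}(t)$ because $R_i^+\le n_i$ and the denominators omit $R_t$. For $G$ the numerators are already $\alpha_i n_i$, so the identity-based compensation reduces to a \emph{term-by-term} comparison of $\frac{1}{x}-\frac{1}{x+R_t}$ evaluated at $x=N_{t-1}$ versus at $x=n_i+\sum_{j\le t-1,j\ne i}R_j\le N_{t-1}$, which holds since this function is decreasing in $x$ and $R_j\le n_j$. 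To salvage your outline, replace your invariant $\fdphat_{\batchbh}(t)+\mathrm{recl}(t)\le\alpha\sum_{j\le t}\gamma_j$ by $G(t)\le\alpha\sum_{j\le t}\gamma_j$ and redo the bookkeeping; that is precisely the paper's proof.
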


\begin{proof}

First we use induction to prove that for every $t\in\N$ the update controls $\sum_{i=1}^t \alpha_i \frac{n_i}{n_i + \sum_{j<t,j\neq i} R_j}$ under $\sum_{i=1}^t \gamma_i\alpha$. Then, since $\sum_{i=1}^t \alpha_i \frac{n_i}{n_i + \sum_{j<t,j\neq i} R_j} \geq \sum_{i=1}^t \alpha_i \frac{R_i^+}{R_i^+ + \sum_{j<t,j\neq i} R_j} \geq \fdphat_{\batchbh}(t)$, the first claim in the fact immediately follows.

This statement is clearly true for the two special cases when $t\in\{1,2\}$, and now assume $\sum_{i=1}^t \alpha_i \frac{n_i}{n_i + \sum_{j<t,j\neq i} R_j} \leq \sum_{i=1}^t \gamma_i \alpha$.

We can write
\begin{equation}
\label{eqn:identity}
    \frac{R_t\left(\sum_{i=1}^{t-1}\alpha_i n_i\right)}{(\sum_{j=1}^{t-1} n_j)(\sum_{k=1}^{t-1} n_k + R_t)} = \sum_{i=1}^{t-1} \alpha_i \left(\frac{n_i}{\sum_{j=1}^{t-1} n_j} - \frac{n_i}{\sum_{k=1}^{t-1} n_k + R_t} \right).
\end{equation}
We use this to rewrite $\sum_{i=1}^{t+1} \alpha_i \frac{n_i}{n_i + \sum_{j<t+1,j\neq i} R_j}$ as
\begin{align*}
    \sum_{i=1}^{t+1} \alpha_i \frac{n_i}{n_i + \sum_{j<t+1,j\neq i} R_j} &= \sum_{i=1}^{t} \alpha_i \frac{n_i}{n_i + \sum_{j\leq t,j\neq i} R_j} + \sum_{i=1}^{t-1} \alpha_i \left(\frac{n_i}{\sum_{j=1}^{t-1} n_j} - \frac{n_i}{\sum_{k=1}^{t-1} n_k + R_t} \right) + \gamma_{t+1}\alpha,\\
\end{align*}
where we apply the test level update given in Algorithm \ref{alg:algomonotone}. Further, we have
\begin{align*}
    \sum_{i=1}^{t+1} \alpha_i \frac{n_i}{n_i + \sum_{j<t+1,j\neq i} R_j}&\leq \sum_{i=1}^{t} \alpha_i \frac{n_i}{n_i + \sum_{j\leq t,j\neq i} R_j} + \sum_{i=1}^{t-1} \alpha_i \left(\frac{n_i}{ n_i + \sum_{j\leq t-1,j\neq i} R_j} - \frac{n_i}{n_i + \sum_{k\leq t,k \neq i} R_k} \right)\\
     &+ \gamma_{t+1}\alpha\\
    &= \sum_{i=1}^{t} \alpha_i \frac{n_i}{n_i + \sum_{j < t,j\neq i} R_j} + \gamma_{t+1}\alpha\\
    &\leq \sum_{i=1}^{t+1}\gamma_{i}\alpha,
\end{align*}
where the last step follows by the induction hypothesis. This completes the proof that $\fdphat_{\batchbh}$ is controlled.

Monotonicity now follows by observing that the test levels updates are increasing in the rejection counts $R_i$, as well as previous test levels. The only exception is $\alpha_2$ which is decreasing in $\alpha_1$, however because $\alpha_1$ is non-random, this does not hurt monotonicity.
\end{proof}


\subsection{\texorpdfstring{$\batchsbh$}{Batch St-BH} Rules}

\begin{algorithm}[H]
\SetAlgoLined
\SetKwInOut{Input}{input}
\Input{FDR level $\alpha$, sequence of constants $\{\lambda_t\}_{t=1}^\infty$, non-increasing sequences $\{\gamma_t\}_{t=1}^\infty$ and $\{\gamma_t'\}_{t=1}^\infty$ summing to 1, initial wealth $W_0\leq \alpha$}
Set $\alpha_1 = \gamma_1 \frac{W_0}{n_1}$\newline
\For{$t=1,2,\dots$}{
 Run the Storey-BH procedure under level $\alpha_t$ with parameter $\lambda_t$ on batch $\mathbf{P}_t$\newline
 Set $\alpha_{t+1} =  \frac{1}{n_{t+1}}\left(\gamma_{t+1} W_0 + \alpha \sum_{s=1}^t \gamma_{t+1-s} R_s - W_0 \sum_{s=1}^t \gamma_{t+1-s} \One{s = \tau_1} + \sum_{s=1}^t \gamma_{t+1-s}'\One{P_{s,\max_s}\leq\lambda_s}n_s\alpha_s\right)$,\\
 where $\tau_1 = \min\{s\geq 1: R_s > 0\}$
 }
 \caption{One version of the $\batchsbh$ algorithm}
 \label{alg:algosbh1}
\end{algorithm}

Below we prove that the update rule of Algorithm \ref{alg:algosbh1} is monotone and satisfies \defref{sbh}.

\begin{fact}
\label{fact:batchsbhvalid}
The algorithm given in Algorithm \ref{alg:algosbh1} is monotone and guarantees $\fdphat_{\batchsbh}(t)\leq\alpha$.
\end{fact}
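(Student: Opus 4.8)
The plan is to follow the proof of \factref{batchbhvalid} step by step, inserting at each point the modifications needed to accommodate (i) the extra factor $\One{P_{s,\max_s}>\lambda_s}$ appearing in $\fdphat_{\batchsbh}$ and (ii) the extra summand $\sum_{s\le t}\gamma'_{t+1-s}\One{P_{s,\max_s}\le\lambda_s}n_s\alpha_s$ appearing in the update rule of Algorithm~\ref{alg:algosbh1}. There are two things to establish: (a) $\fdphat_{\batchsbh}(t)\le\alpha$ for all $t$, and (b) monotonicity in the sense of \secref{notation}.

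For (a), I would start from the definition, bound $R_s^+\le n_s$, and then use $n_s\ge R_s\vee 1$ together with the fact that $x\mapsto x/(x+a)$ is nondecreasing, exactly as in \factref{batchbhvalid}, to obtain
\[
\fdphat_{\batchsbh}(t)\ \le\ \frac{\sum_{j\le t}\alpha_j n_j\One{P_{j,\max_j}>\lambda_j}}{1\vee\sum_{k\le t}R_k}.
\]
The key observation is to write $\alpha_j n_j\One{P_{j,\max_j}>\lambda_j}=\alpha_j n_j-\alpha_j n_j\One{P_{j,\max_j}\le\lambda_j}$, substitute the update rule for $\alpha_j n_j$, and notice that the ``$\gamma'$-contribution'' collapses:
\[
\sum_{j\le t}\sum_{s<j}\gamma'_{j-s}\One{P_{s,\max_s}\le\lambda_s}n_s\alpha_s=\sum_{s\le t-1}\One{P_{s,\max_s}\le\lambda_s}n_s\alpha_s\sum_{m=1}^{t-s}\gamma'_m\le\sum_{s\le t}\One{P_{s,\max_s}\le\lambda_s}n_s\alpha_s,
\]
using $\gamma'_m\ge0$ and $\sum_m\gamma'_m=1$. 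This upper bound on the $\gamma'$-term cancels the subtracted quantity $\sum_{j\le t}\alpha_j n_j\One{P_{j,\max_j}\le\lambda_j}$, and what remains in the numerator is \emph{precisely} the expression $W_0\sum_{j\le t}\gamma_j+\alpha\sum_{j\le t}\sum_{l<j}\gamma_{j-l}R_l-W_0\sum_{j\le t}\sum_{l<j}\gamma_{j-l}\One{l=\tau_1}$ that appears in the proof of \factref{batchbhvalid}. From this point the remaining argument of \factref{batchbhvalid} applies verbatim: if $t<\tau_1$ the bound is $W_0\sum_{j\le t}\gamma_j\le W_0\le\alpha$; otherwise discard the terms with index below $\tau_1$, peel off the $\tau_1$-term using $\alpha R_{\tau_1}-W_0\ge0$, and invoke $\sum_m\gamma_m=1$ twice to conclude $\fdphat_{\batchsbh}(t)\le\alpha$.

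For (b), I would restate the update as $n_{t+1}\alpha_{t+1}=\gamma_{t+1}W_0+\sum_{s\le t}\gamma_{t+1-s}\bigl(\alpha R_s-W_0\One{s=\tau_1}\bigr)+\sum_{s\le t}\gamma'_{t+1-s}\One{P_{s,\max_s}\le\lambda_s}n_s\alpha_s$ and show $\alpha_{t+1}$ is nondecreasing along the order ``lower one $p$-value.'' For the first two terms this is identical to \factref{batchbhvalid}: since $R_s=0$ for $s<\tau_1$, one has $\sum_{s\le t}\gamma_{t+1-s}(\alpha R_s-W_0\One{s=\tau_1})=\alpha\sum_{s\le t}\gamma_{t+1-s}R_s-W_0\gamma_{t+1-\tau_1}\One{\tau_1\le t}$, which only increases when any $R_s$ increases (lowering a $p$-value weakly increases every $R_s$, Storey-BH being monotone within a batch, and can only move $\tau_1$ earlier; since $\{\gamma_t\}$ is nonincreasing this relaxes the $-W_0\gamma$ subtraction, and the single event in which $\tau_1$ enters $\{1,\dots,t\}$ from above adds at least $\alpha\gamma_{t+1-\tau_1}R_{\tau_1}\ge W_0\gamma_{t+1-\tau_1}$ to compensate). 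The new term is handled by noting that lowering any $p$-value in batch $s$ weakly decreases $P_{s,\max_s}$, hence makes $\One{P_{s,\max_s}\le\lambda_s}$ nondecreasing, while $n_s\alpha_s\ge0$ is nondecreasing by the induction hypothesis and $\gamma'_{t+1-s}\ge0$; so the whole term is nondecreasing. Chaining: because $\alpha_s$ is $\F^{s-1}$-measurable it is unaffected by a perturbation $\tilde P_{s_0,i_0}<P_{s_0,i_0}$ within batch $s_0$, so such a perturbation only weakly increases $R_{s_0}$ and $\One{P_{s_0,\max_{s_0}}\le\lambda_{s_0}}$; then by induction on $r>s_0$, using the monotone update just established together with the fact that Storey-BH is monotone in its level, every subsequent $\alpha_r$ and every subsequent $R_r$ weakly increases, which is exactly $\sum_{r=s_0+1}^{t}R_r\le\sum_{r=s_0+1}^{t}\tilde R_r$.

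The main obstacle is the bookkeeping in part (b): one must simultaneously track how lowering a single $p$-value propagates through (i) the within-batch rejection count and the candidate indicator $\One{P_{s_0,\max_{s_0}}\le\lambda_{s_0}}$, (ii) the first-discovery time $\tau_1$ and the sign of $\alpha R_s-W_0\One{s=\tau_1}$, and (iii) the recursively defined $\alpha_s$ that now sits inside the $\gamma'$-term, and one must organize the induction so that the monotonicity of $\alpha_s$ needed for (iii) is already available. Part (a), by contrast, is largely mechanical once the collapse of the $\gamma'$-term is identified as the mechanism that cancels the subtracted indicator mass.
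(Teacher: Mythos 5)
Your proposal is correct and follows essentially the same route as the paper's proof: both reduce $\fdphat_{\batchsbh}(t)$ to $\sum_j n_j\alpha_j\One{P_{j,\max_j}>\lambda_j}$ over $1\vee\sum_k R_k$, show this numerator is at most the sum of the ``$\batchbh$ part'' $s(j)$ of the update by cancelling the $\gamma'$-carryover term against the subtracted indicator mass (you do this via a summation swap, the paper via a term-by-term recursive peeling --- the same cancellation), invoke the \factref{batchbhvalid} computation, and obtain monotonicity by combining the \factref{batchbhvalid} argument with the observation that lowering $p$-values can only decrease $\One{P_{s,\max_s}>\lambda_s}$ while the update is non-increasing in these indicators. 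Your treatment of the $\tau_1$ bookkeeping in the monotonicity step is in fact more explicit than the paper's.
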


\begin{proof}
To simplify notation, throughout the proof we denote $k_j := \One{P_{j,\max_j}>\lambda_j}$.

First we prove that the algorithm guarantees $\fdphat_{\batchsbh}(t)\leq \alpha$. It is not hard to see that $\fdphat_{\batchsbh}(t)\leq \frac{\sum_{j=1}^t n_j \alpha_j k_j}{1\vee \sum_{j=1}^t R_j}$. Therefore, it suffices to prove $\sum_{j=1}^t n_j \alpha_j k_j \leq \alpha (1\vee \sum_{j=1}^t R_j)$ for all $t$.

For all $t$, define $s(t) \defn \gamma_t W_0 + \alpha \sum_{j=1}^{t-1} \gamma_{t-j} R_j - W_0 \sum_{j=1}^{t-1}\gamma_{t-j} \One{j=\tau_1}$. With this, the test levels are equal to $n_t\alpha_t = s(t) + \sum_{j=1}^{t-1} \gamma_{t-j}' (1-k_j)n_j\alpha_j$. In \factref{batchbhvalid} we have proved that $\sum_{j=1}^t s(j)\leq \alpha (1\vee \sum_{j=1}^t R_j)$, so it suffices to prove $\sum_{j=1}^t n_j \alpha_j k_j \leq \sum_{j=1}^t s(j)$. We do so by peeling terms off one by one:
\begin{align*}
	\sum_{j=1}^t n_j\alpha_j k_j &\leq \sum_{j=1}^{t-1} n_j\alpha_j k_j + n_t \alpha_t\\
	&= \sum_{j=1}^{t-1} n_j\alpha_j k_j + s(t) + \sum_{j=1}^{t-1} \gamma_{t-j}' (1-k_j)n_j\alpha_j\\
	&\leq \sum_{j=1}^{t-2} n_j \alpha_j k_j + s(t) + \sum_{j=1}^{t-2} \gamma_{t-j}' (1-k_j)n_j\alpha_j + n_{t-1}\alpha_{t-1}.
\end{align*}
By repeating a similar argument recursively we obtain
$$\sum_{j=1}^t n_j\alpha_j k_j \leq \sum_{j=1}^t s(j).$$
Invoking \factref{batchbhvalid} now completes the proof that $\fdphat_{\batchsbh}(t)\leq \alpha$ for all $t\in\N$.

Now we prove that the update rule is additionally monotone. Take two sequences of $p$-values such that $(P_{1,1},P_{1,2},\dots,P_{t,n_t})\geq (\tilde P_{1,1},\tilde P_{1,2},\dots,\tilde P_{t,n_t})$ coordinate-wise. Denote all relevant $\batchsbh$ quantities on these two sequences using a similar notation, for example we distinguish between $k_i$ on $(P_{1,1},P_{1,2},\dots,P_{t,n_t})$ and $\tilde k_i$ on $(\tilde P_{1,1},\tilde P_{1,2},\dots,\tilde P_{t,n_t})$.

The first observation is that $\tilde k_i \leq k_i$, for all $i\in[t]$. This follows because $\One{\tilde P_{i,j} > \lambda_i}\leq \One{ P_{i,j} > \lambda_i}$ and so $\One{\tilde P_{i,\tilde \max_i} > \lambda_i}\leq \One{ P_{i, \max_i} > \lambda_i}$.
The rest of the proof follows by combining the monotonicity proof in \factref{batchbhvalid} and the fact that the update for $\alpha_t$ is non-increasing in $(k_1,\dots,k_{t-1})$.
\end{proof}

As for the $\batchbh$ family of algorithms, we also propose a rule with provable guarantees when $n_j \geq M$ for all $j$, which is expected to be more powerful when the batch sizes are roughly of the same order; moreover, if $n_j\equiv M$, the rule is strictly more powerful than the one above.

\begin{algorithm}[H]
\SetAlgoLined
\SetKwInOut{Input}{input}
\Input{FDR level $\alpha$, non-increasing sequences $\{\gamma_t\}_{t=1}^\infty$ and $\{\gamma_t'\}_{t=1}^\infty$ summing to one}
Set $\alpha_1 = \gamma_1 \frac{M}{n_1}\alpha$\newline
\For{$t=1,2,\dots$}{
 Run the Storey-BH procedure under level $\alpha_t$ with parameter $\lambda_t$ on batch $\mathbf{P}_t$\newline
 Set $\alpha_{t+1} =  \frac{M}{n_{t+1}} \gamma_{t+1} \alpha + \frac{\alpha}{n_{t+1}} \sum_{s=1}^t \gamma_{t+1-s} R^{\text{add}}_s + \frac{1}{n_{t+1}}\sum_{s=1}^t \gamma_{t+1-s}'\One{P_{s,\max_s}\leq \lambda_s}n_s\alpha_s$,\\
  where $R^{\text{add}}_s = \min\{R_s, \max\{R_r:r<s\}\}$
 }
 \caption{One version of the $\batchsbh$ algorithm when $n_s\geq M$ for all $s$}
 \label{alg:stbhM}
\end{algorithm}

\begin{fact}
\label{fact:batchsbhMvalid}
The algorithm given in Algorithm \ref{alg:stbhM} is monotone and guarantees $\fdphat_{\batchsbh}(t)\leq\alpha$.
\end{fact}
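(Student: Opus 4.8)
The statement is the ``$n_s\ge M$'' analogue for $\batchsbh$ of Algorithm~\ref{alg:bhM}: Algorithm~\ref{alg:stbhM} differs from Algorithm~\ref{alg:bhM} only by the extra Storey-type correction $\sum_{s\le t}\gamma'_{t+1-s}\One{P_{s,\max_s}\le\lambda_s}n_s\alpha_s$, exactly as Algorithm~\ref{alg:algosbh1} differs from Algorithm~\ref{alg:algo1}. So the plan is to splice the two appendix proofs: the bookkeeping from \factref{batchbhMvalid} controls a ``pre-Storey'' budget, and the peeling recursion from \factref{batchsbhvalid} absorbs the adaptive indicators into that budget. Write $k_j\defn\One{P_{j,\max_j}>\lambda_j}$.

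I would first reduce to the non-adaptive estimate. Using $R_j^+\le n_j$, the hypothesis $n_j\ge M$, and monotonicity of $x\mapsto x/(x+a)$,
\[
\fdphat_{\batchsbh}(t)=\sum_{j\le t}\alpha_j k_j\frac{R_j^+}{R_j^+ + \sum_{r\le t,\,r\ne j}R_r}\ \le\ \frac{\sum_{j\le t}n_j\alpha_j k_j}{M+\min_i\sum_{r\le t,\,r\ne i}R_r},
\]
so it suffices to show $\sum_{j\le t}n_j\alpha_j k_j\le\alpha\bigl(M+\min_i\sum_{r\le t,\,r\ne i}R_r\bigr)$. Set $s(t)\defn M\gamma_t\alpha+\alpha\sum_{j<t}\gamma_{t-j}R_j^{\text{add}}$, so that the Algorithm~\ref{alg:stbhM} update reads $n_t\alpha_t=s(t)+\sum_{j<t}\gamma'_{t-j}(1-k_j)n_j\alpha_j$. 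The telescoping computation of \factref{batchbhMvalid} (using $\sum_j\gamma_j\le1$ together with the identity $\sum_{l=1}^{t-1}R_l^{\text{add}}=\min_i\sum_{r<t,\,r\ne i}R_r$, which follows from $R_l^{\text{add}}=R_l-(M_l-M_{l-1})$ for the running maxima $M_l$) gives $\sum_{j\le t}s(j)\le\alpha\bigl(M+\min_i\sum_{r<t,\,r\ne i}R_r\bigr)\le\alpha\bigl(M+\min_i\sum_{r\le t,\,r\ne i}R_r\bigr)$. It then remains to show $\sum_{j\le t}n_j\alpha_j k_j\le\sum_{j\le t}s(j)$, which I would obtain by the peeling recursion of \factref{batchsbhvalid}: repeatedly replace the largest-index factor $n_m\alpha_m$ by its update formula and combine its $k_m$-term with the accumulated correction coefficient $\sum_{i\le t-m}\gamma'_i$ via $k_m+(1-k_m)\sum_{i\le t-m}\gamma'_i\le1$; after $t-1$ steps only $\sum_{j\le t}s(j)$ survives. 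Hence $\fdphat_{\batchsbh}(t)\le\alpha$.

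For monotonicity I would run an induction on $t$, following \factref{batchsbhvalid}. Given two $p$-value sequences with $\tilde P\le P$ coordinatewise, I would show $\tilde\alpha_t\ge\alpha_t$ and $\sum_{r>s}\tilde R_r\ge\sum_{r>s}R_r$, using that (i) the Storey-BH rejection count is non-decreasing in the test level, (ii) $\tilde k_i\le k_i$ since shrinking $P_{i,\max_i}$ can only make $\{P_{i,\max_i}>\lambda_i\}$ fail, (iii) $R_s^{\text{add}}=\min\{R_s,\max_{r<s}R_r\}$ is coordinatewise non-decreasing in $(R_1,\dots,R_{t-1})$, and (iv) the Algorithm~\ref{alg:stbhM} update for $\alpha_{t+1}$ is non-decreasing in the $R^{\text{add}}$'s and in $(\alpha_1,\dots,\alpha_t)$ and non-increasing in $(k_1,\dots,k_t)$.

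The one delicate point, as already in \factref{batchsbhvalid}, is checking that the coefficients accumulated on the $(1-k_j)n_j\alpha_j$ correction terms during the peeling recursion never exceed one; this holds because $\{\gamma'_t\}$ is non-negative and sums to one, so every partial sum $\sum_{i=1}^{m}\gamma'_i\le1$. Given that this bookkeeping is already done for Algorithm~\ref{alg:algosbh1}, the present argument is essentially a mechanical merge of the two earlier proofs, with $M\gamma_t\alpha+\alpha\sum_{j<t}\gamma_{t-j}R_j^{\text{add}}$ playing the role that $\gamma_t W_0+\alpha\sum_{j<t}\gamma_{t-j}R_j-W_0\sum_{j<t}\gamma_{t-j}\One{j=\tau_1}$ played for $\batchsbh$ before.
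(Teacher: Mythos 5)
Your proof is correct and follows essentially the same route as the paper's: the paper likewise reduces to $\sum_{j\le t} n_j\alpha_j k_j \le \alpha\bigl(M+\sum_{j\le t}R_j^{\text{add}}\bigr)$, defines the same $s(t)$, invokes the peeling recursion of \factref{batchsbhvalid} to get $\sum_{j\le t} n_j\alpha_j k_j\le\sum_{j\le t}s(j)$, bounds $\sum_{j\le t}s(j)$ via the steps of \factref{batchbhMvalid}, and obtains monotonicity by combining the two earlier monotonicity arguments. Your write-up just makes the splice more explicit (e.g., the identity $\sum_{l<t}R_l^{\text{add}}=\min_i\sum_{k<t,k\ne i}R_k$), which matches the paper's intent.
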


\begin{proof}
To simplify notation, we again denote $k_j := \One{P_{j,\max_j}>\lambda_j}$.

First we prove that it guarantees $\fdphat_{\batchsbh}(t)\leq \alpha$. It is not hard to see that $\fdphat_{\batchsbh}(t)\leq \frac{\sum_{j=1}^t n_j \alpha_j k_j}{M + \sum_{j=1}^t R_j^{\text{add}}}$ (recall that $R_1^{\text{add}}=0$ by definition). Therefore, it suffices to prove $\sum_{j=1}^t n_j \alpha_j k_j \leq \alpha (M + \sum_{j=1}^t R_j^{\text{add}})$.

If we denote $s(t)\defn M \gamma_{t}\alpha + \alpha \sum_{j=1}^{t-1} \gamma_{t-j} R_j^{\text{add}}$, by the same argument as in \factref{batchsbhvalid} we can conclude that
$$\sum_{j=1}^t n_j \alpha_j k_j \leq \sum_{j=1}^t s(j).$$

Following the steps of \factref{batchbhMvalid}, we can also show that $\sum_{j=1}^t s(j)\leq \alpha (M + \sum_{j=1}^t R_j^{\text{add}})$, which completes the proof that $\fdphat_{\batchsbh}(t)\leq \alpha$ for all $t\in\N$. 

The proof that the update rule is additionally monotone combines the monotonicity proofs of \factref{batchbhMvalid} and \factref{batchsbhvalid}.
\end{proof}


\section{Additional Simulation Results Against Time}

For the experiments presented in \secref{experiments}, we plot the power and FDR as functions of time, for $\pi_1\in\{0.1,0.5\}$. We observe both power and FDR to be stable across time for our default algorithms.

\begin{figure}[h]
  \centerline{
  \includegraphics[width=0.5\columnwidth]{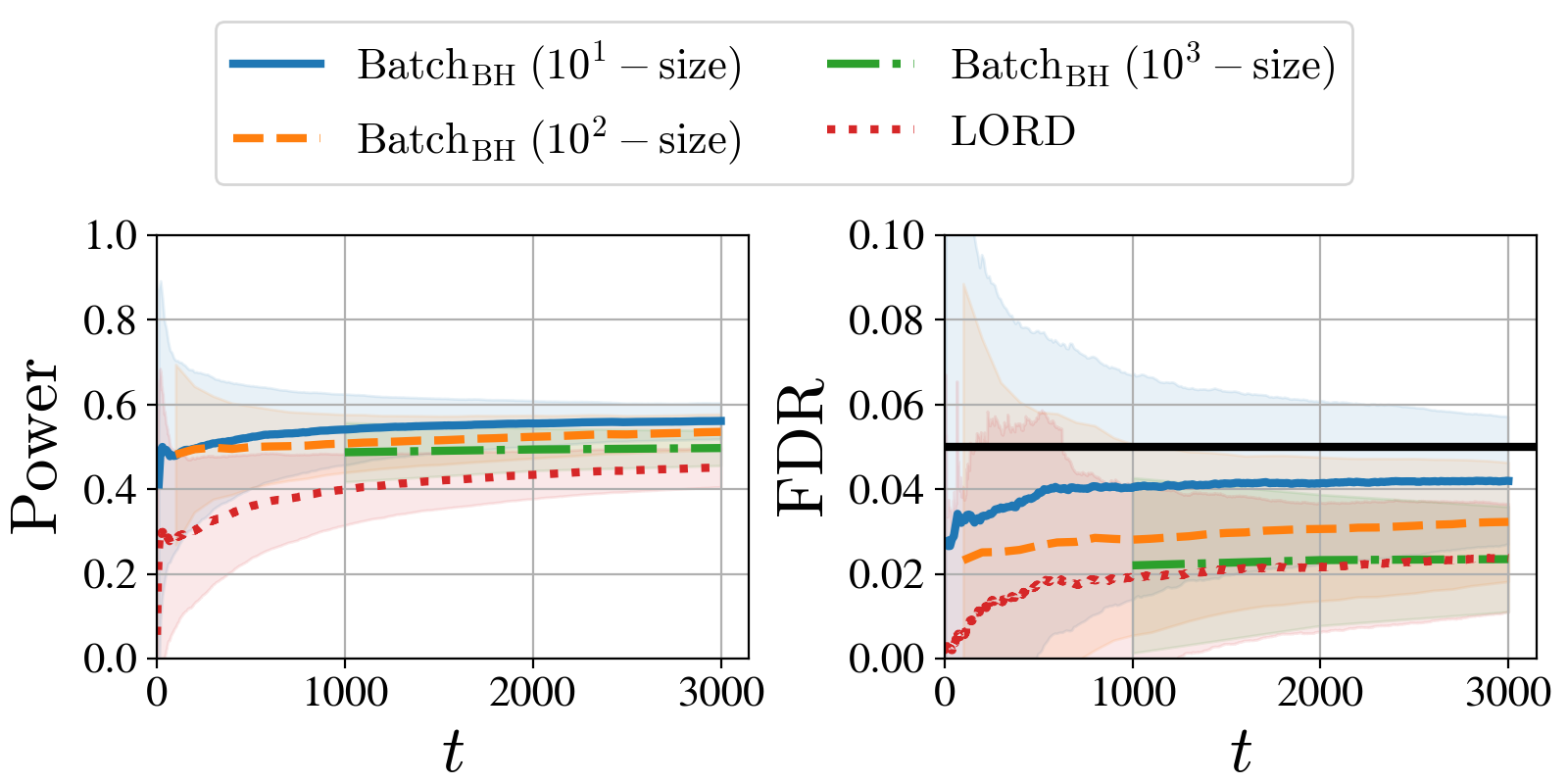}
   \includegraphics[width=0.5\columnwidth]{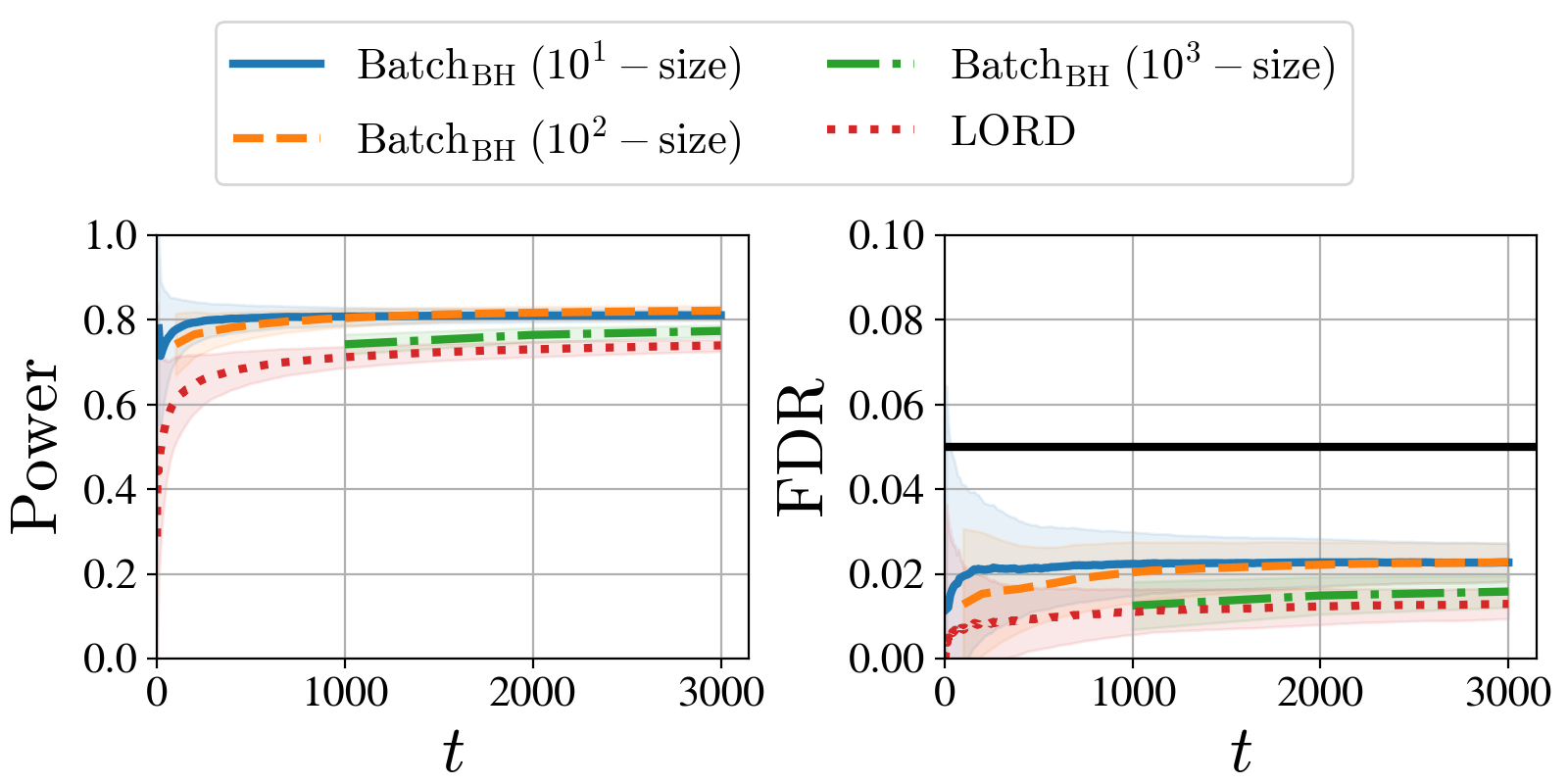}}
  \caption{Statistical power and FDR versus number of hypotheses seen $t$ for $\batchbh$ (at batch sizes 10, 100, and 1000) and LORD. We choose the probability of a non-null hypothesis to be $\pi_1=0.1$ (left) and $\pi_1=0.5$ (right). The observations under the null are $N(0,1)$, and the observations under the alternative are $N(3,1)$.}
  \label{fig:mean3_bbh_pi1_15}
\end{figure}


\begin{figure}[h]
  \centerline{
  \includegraphics[width=0.5\columnwidth]{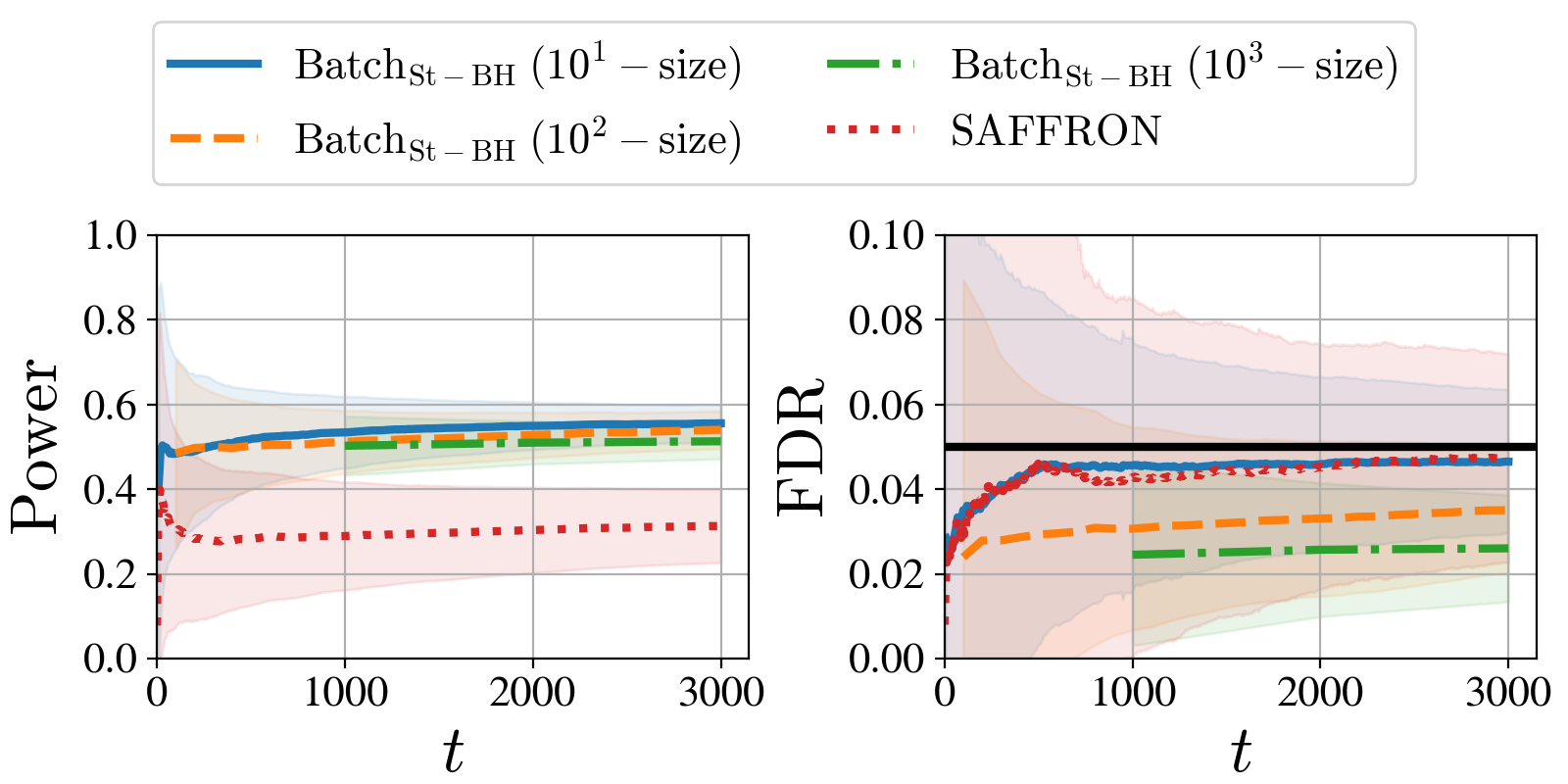}
  \includegraphics[width=0.5\columnwidth]{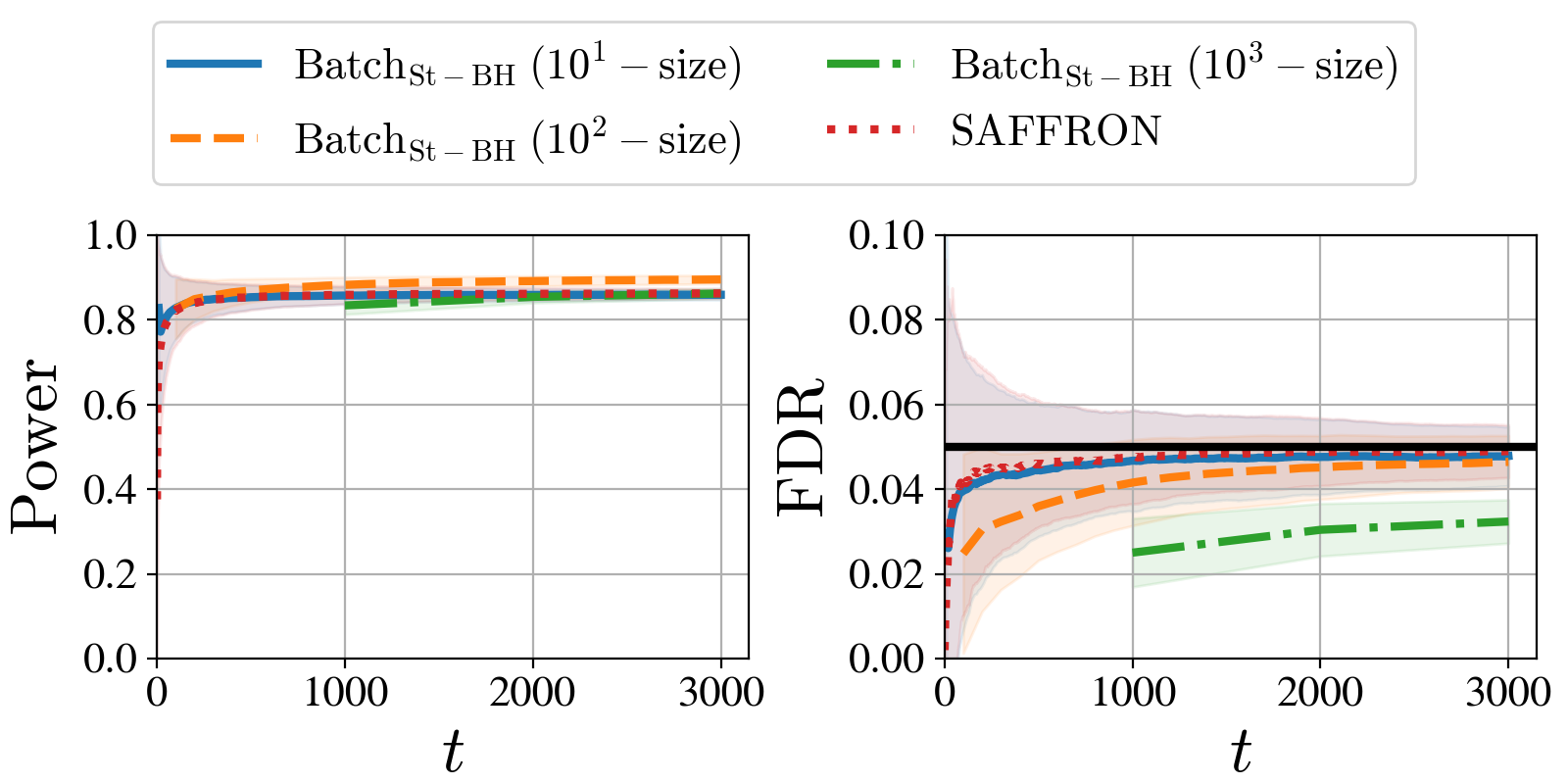}}
  \caption{Statistical power and FDR versus number of hypotheses seen $t$ for $\batchsbh$ (at batch sizes 10, 100, and 1000) and SAFFRON. We choose the probability of a non-null hypothesis to be $\pi_1=0.1$ (left) and $\pi_1=0.5$ (right). The observations under the null are $N(0,1)$, and the observations under the alternative are $N(3,1)$.}
  \label{fig:mean3_bsbh_pi1_15}
\end{figure}


\begin{figure}[h]
  \centerline{
  \includegraphics[width=0.5\columnwidth]{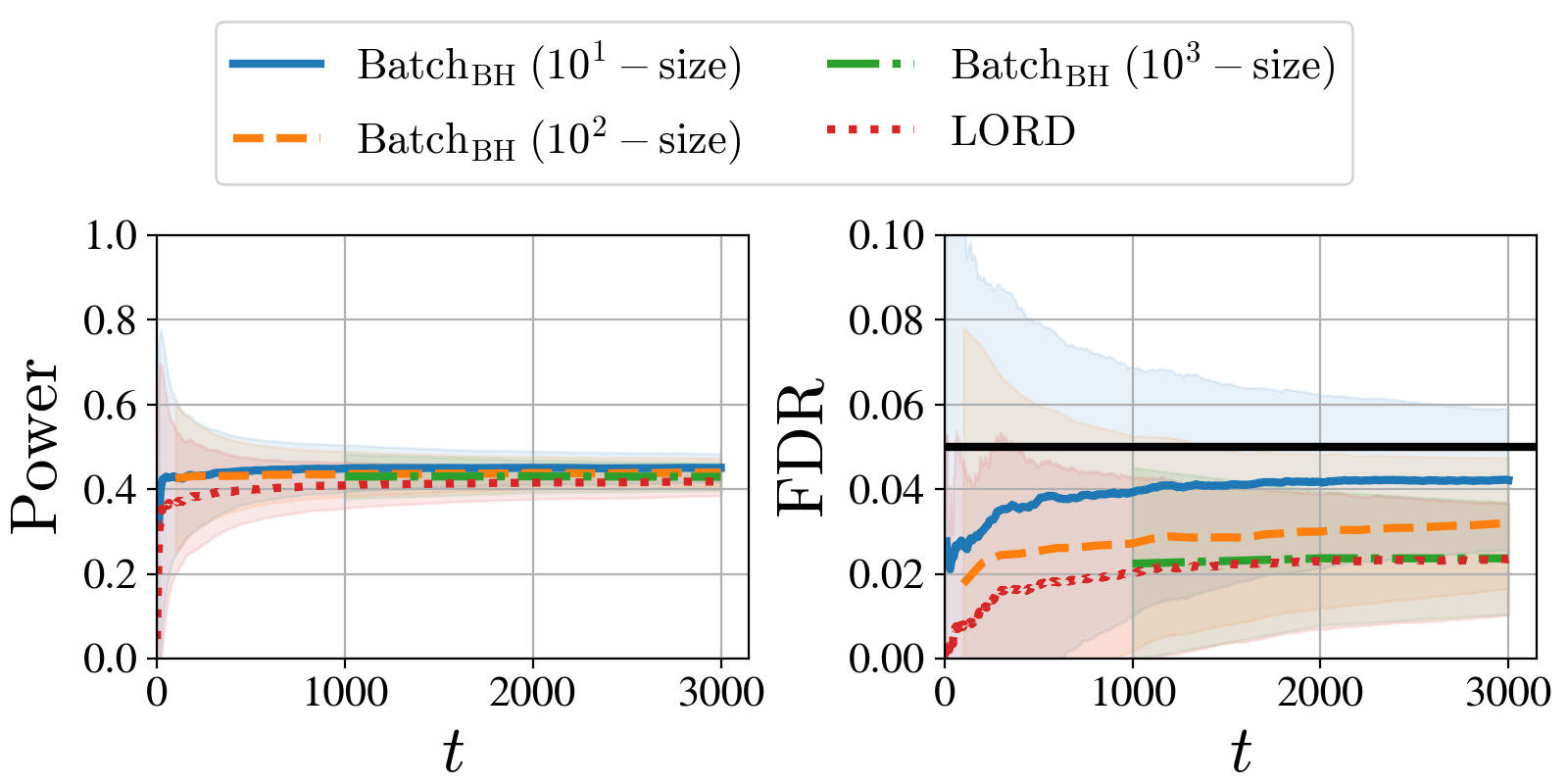}
  \includegraphics[width=0.5\columnwidth]{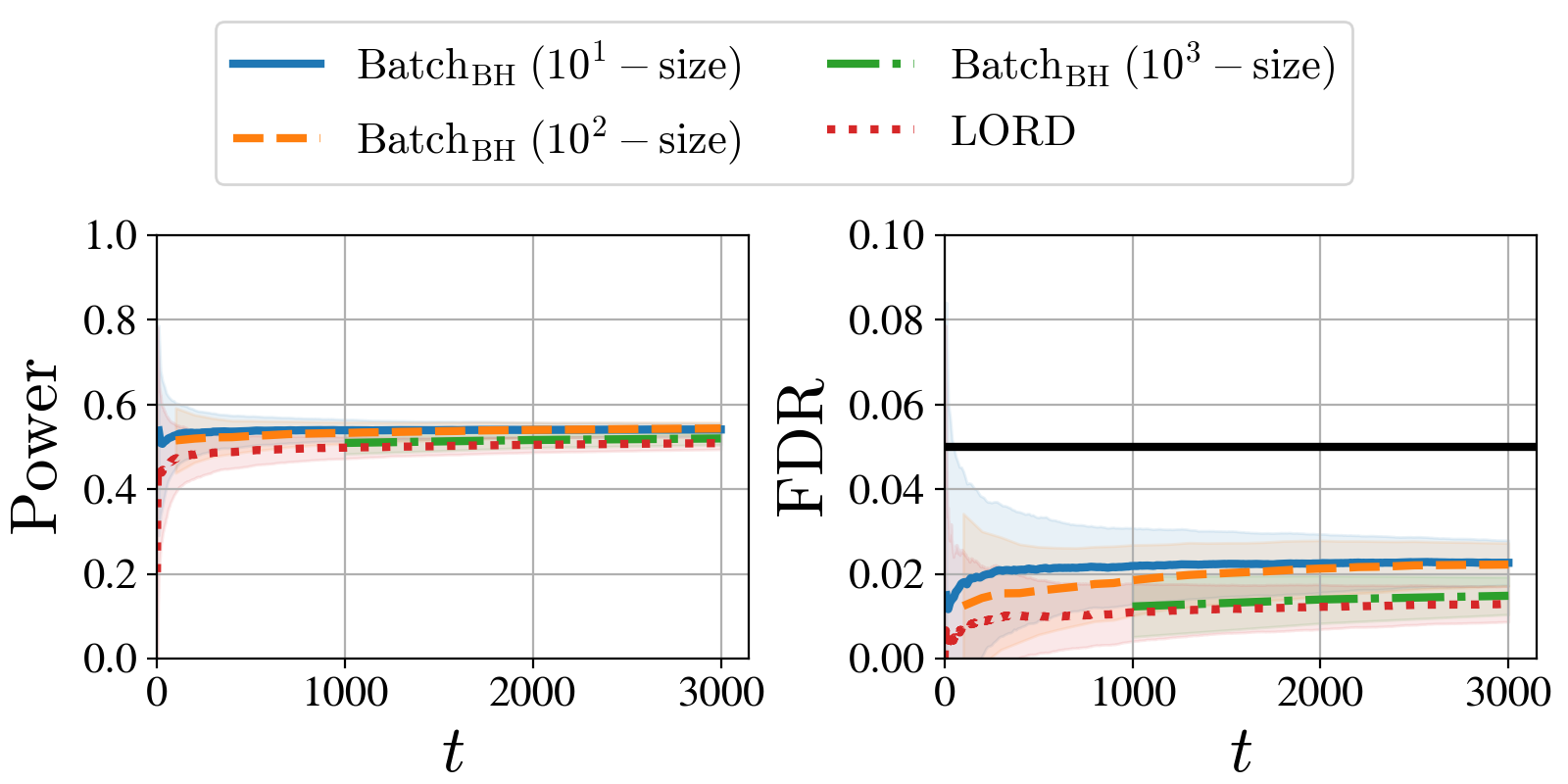}}
  \caption{Statistical power and FDR versus number of hypotheses seen $t$ for $\batchbh$ (at batch sizes 10, 100, and 1000) and LORD. We choose the probability of a non-null hypothesis to be $\pi_1=0.1$ (left) and $\pi_1=0.5$ (right). The observations under the null are $N(0,1)$, and the observations under the alternative are $N(\mu_1,1)$ where $\mu_1\sim N(0,2\log T)$.}
  \label{fig:mean0_bbh_pi1_15}
\end{figure}

\begin{figure}
  \centerline{
  \includegraphics[width=0.5\columnwidth]{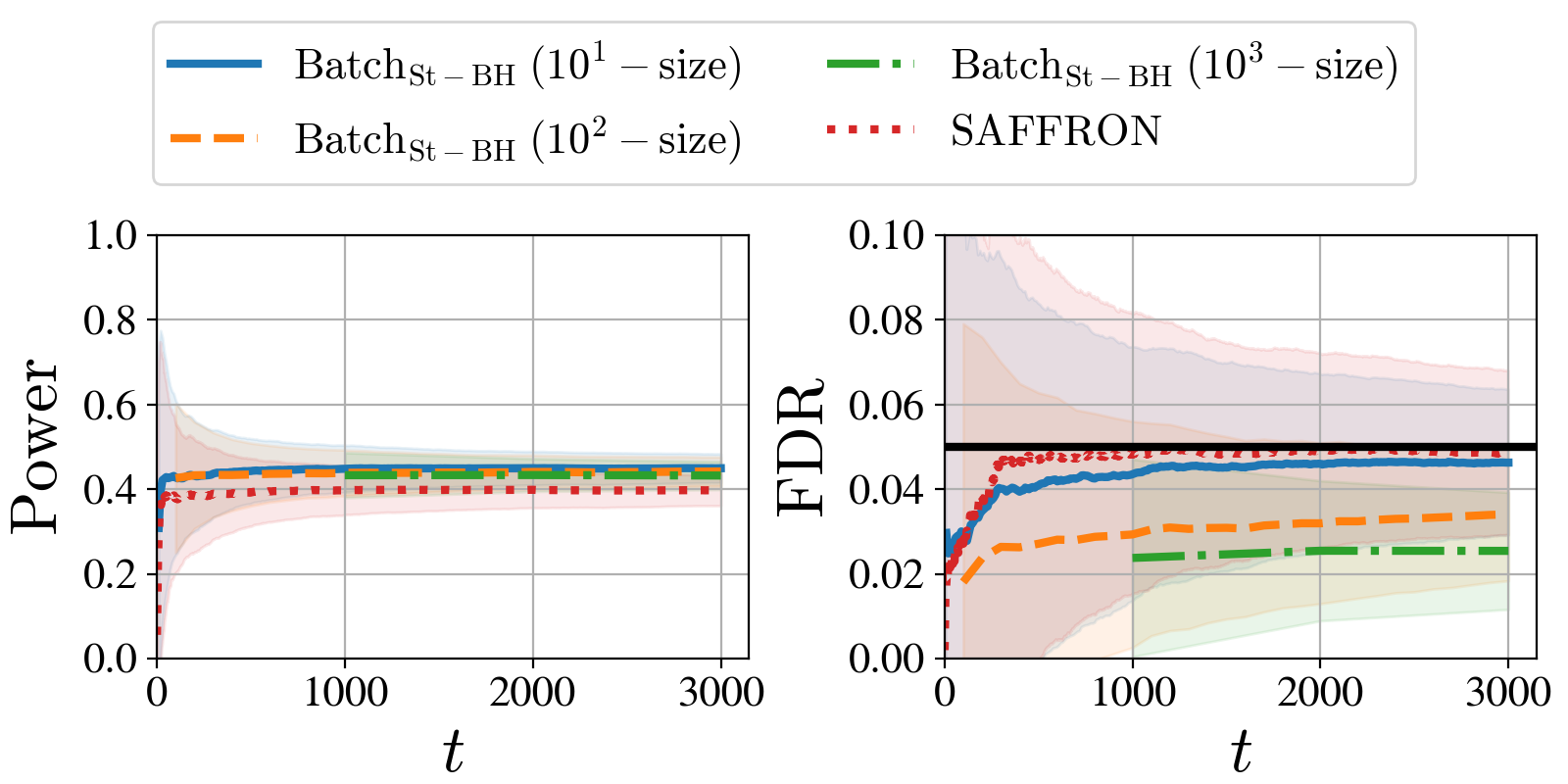}
  \includegraphics[width=0.5\columnwidth]{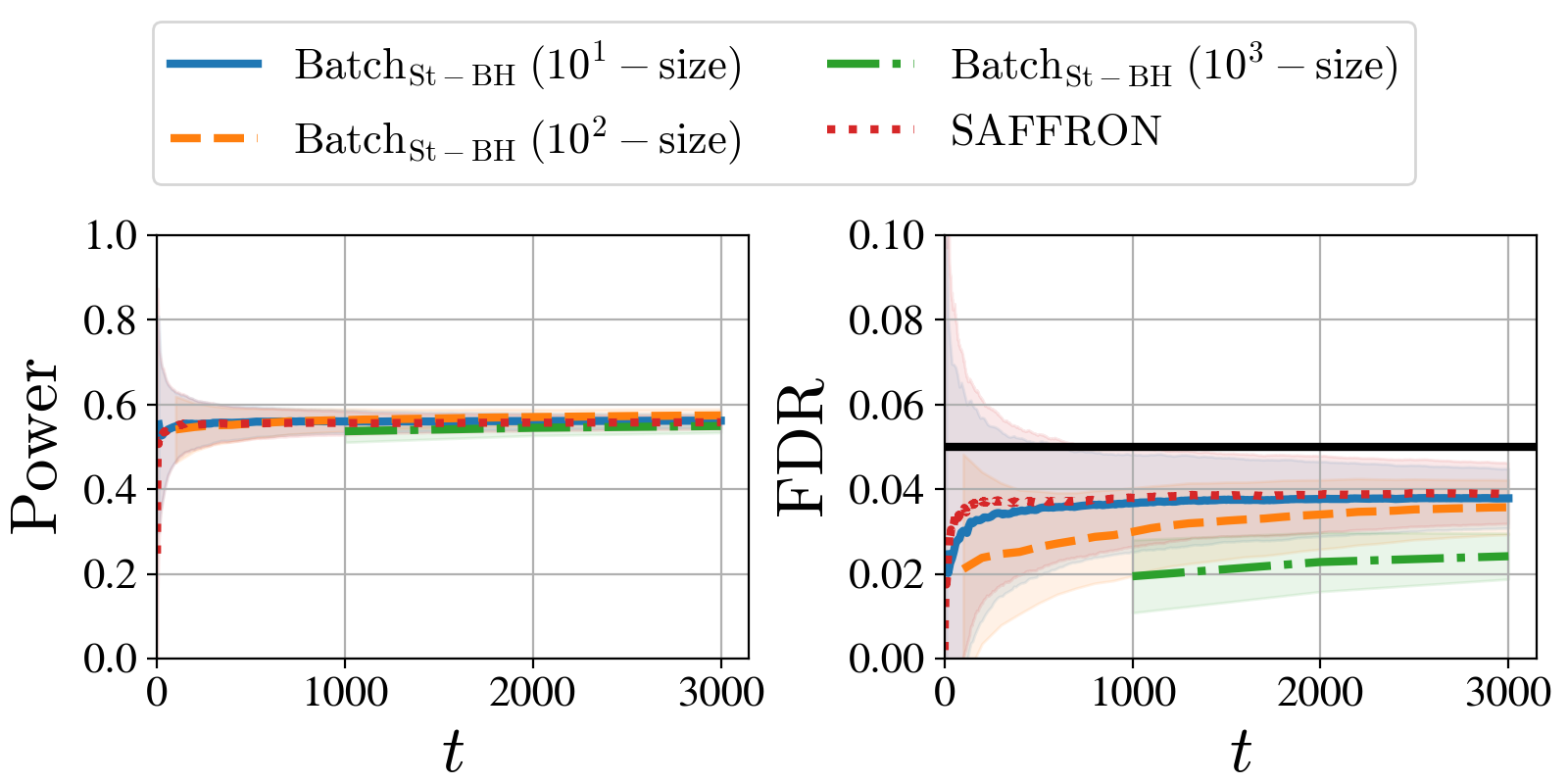}}
  \caption{Statistical power and FDR versus number of hypotheses seen $t$ for $\batchsbh$ (at batch sizes 10, 100, and 1000) and SAFFRON. We choose the probability of a non-null hypothesis to be $\pi_1=0.1$ (left) and $\pi_1=0.5$ (right). The observations under the null are $N(0,1)$, and the observations under the alternative are $N(\mu_1,1)$ where $\mu_1\sim N(0,2\log T)$.}
  \label{fig:mean0_bsbh_pi1_1}
\end{figure}

Additionally, in \figref{rdiff} we plot $R_t^+ - R_t$ for a single trial of $\batchbh$ and the first experimental setting of constant Gaussian means, at $\pi_1 = 0.1$. We observe similar behavior for $\batchsbh$ and other problem parameters as well. This experiment shows that $R_t^+ - R_t$ highly concentrates around the value 1, and in our experiments is no larger than 4. Hence, when designing new practical algorithms, it is a reasonable heuristic to assume $R_t^+ = R_t + 1$.

\begin{figure}[h]
  \centerline{
  \includegraphics[width=0.3\columnwidth]{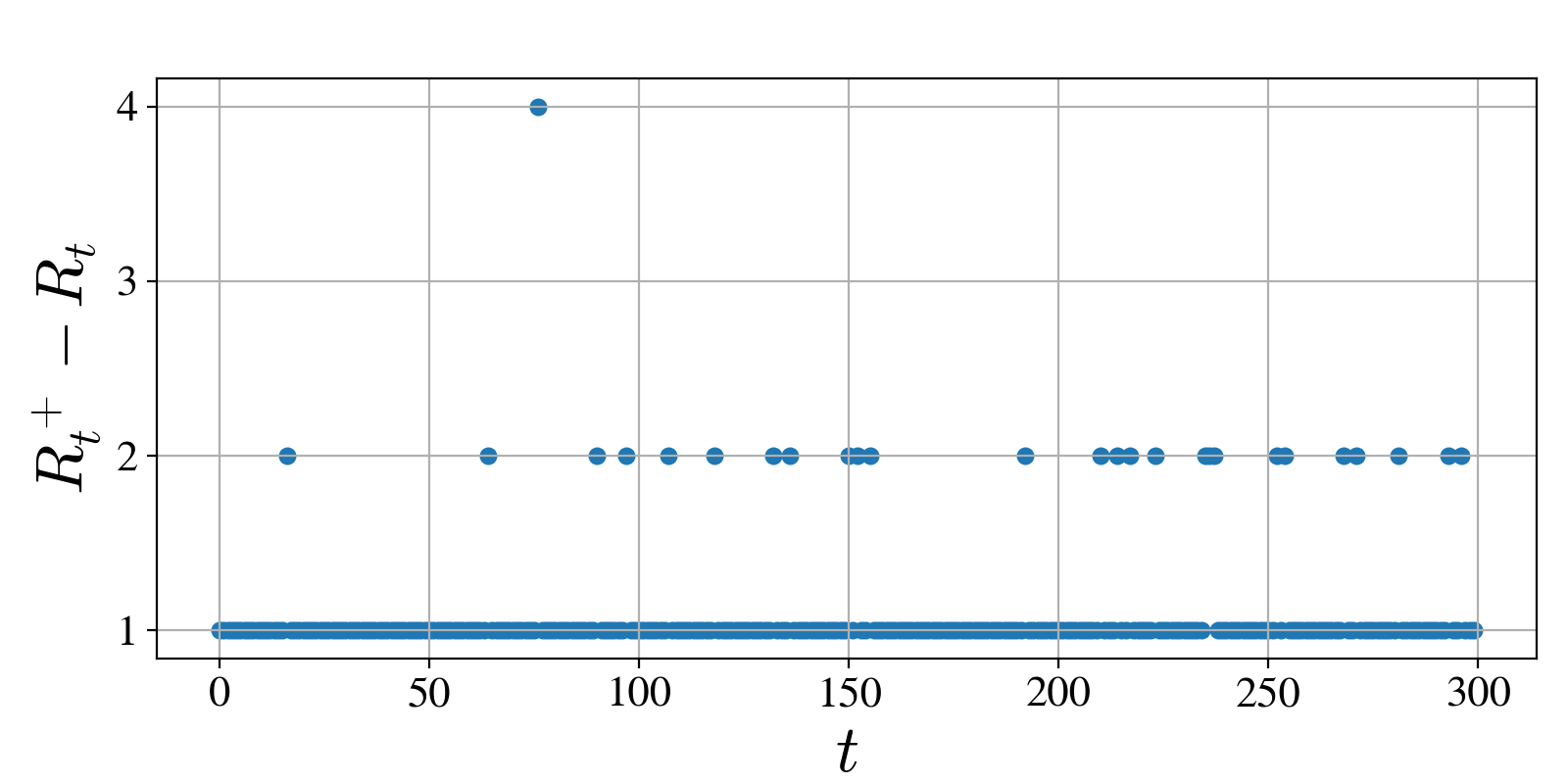}
  \includegraphics[width=0.3\columnwidth]{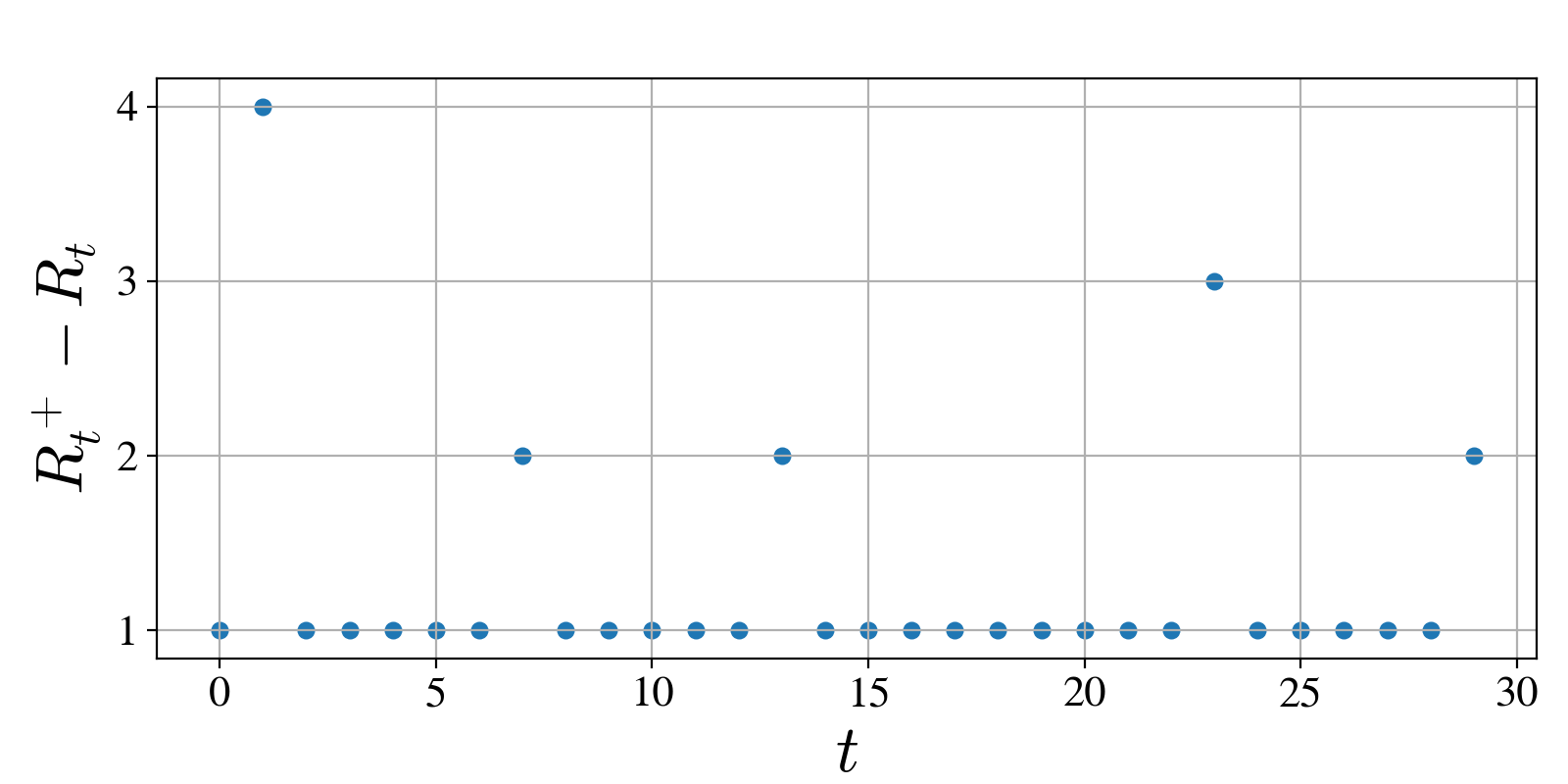}
  \includegraphics[width=0.3\columnwidth]{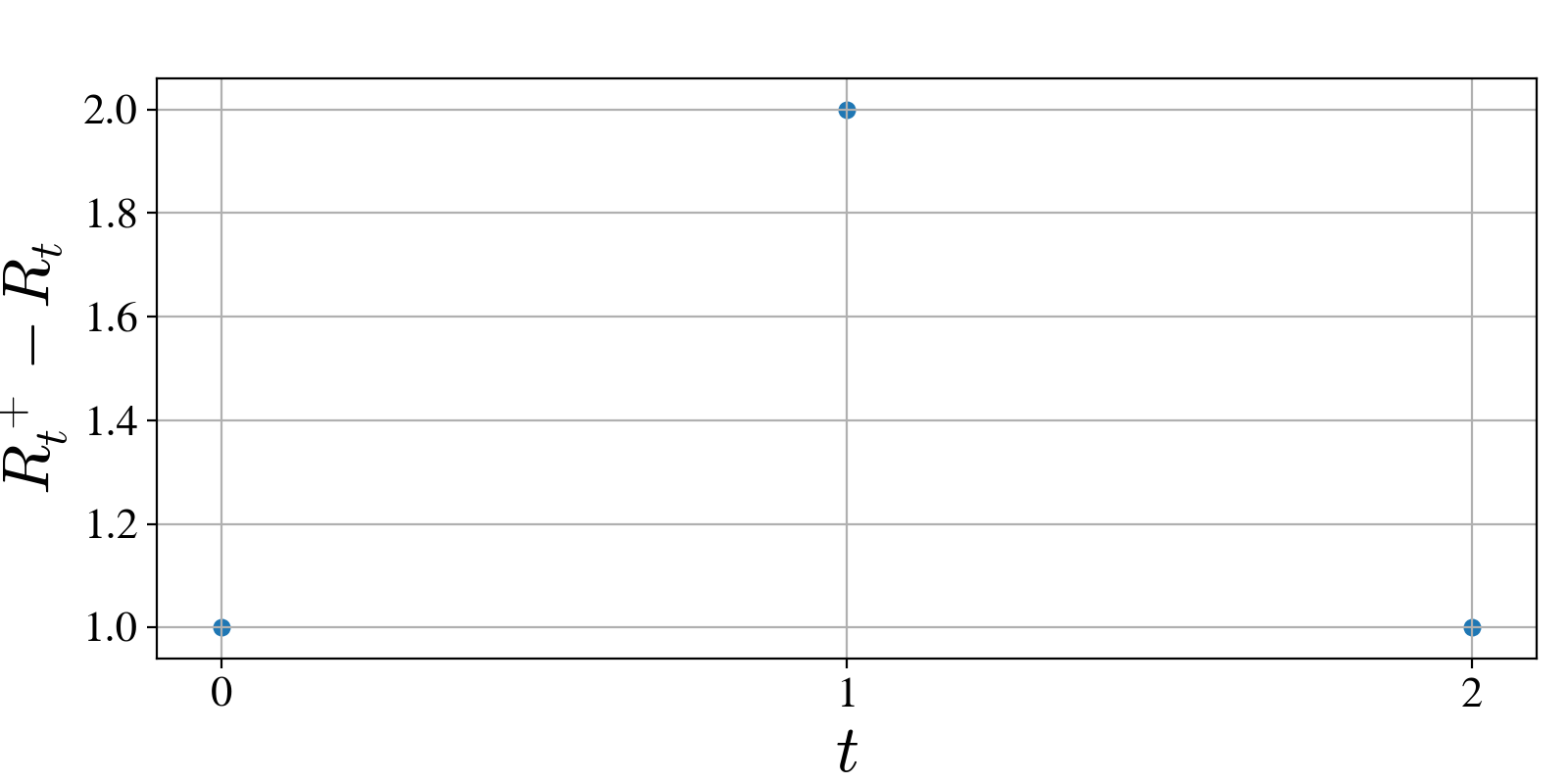}}
  \caption{$R_t^+ - R_t$ versus batch size index $t$ for $\batchsbh$, at batch sizes 10 (left), 100 (middle) and 1000 (right). We choose the probability of a non-null hypothesis to be $\pi_1=0.1$. The observations under the null are $N(0,1)$, and the observations under the alternative are $N(3,1)$.}
  \label{fig:rdiff}
\end{figure}


\section{Monotonicity in Numerical Experiments}

We verify numerically that $\batchbh$ and $\batchsbh$ are monotone with high probability, as required by \thmref{fdrbatchbh} and \thmref{fdrbatchsbh}. Although this is a heuristic way to justify the FDR control of our procedures, we found that both $\batchbh$ and $\batchsbh$ exhibit monotonicity with high probability, as well as FDR control, across various problem settings.

For a given $p$-value sequence, we first run either $\batchbh$ (or $\batchsbh$) as usual. We then randomly pick a batch $i$ and set a random $p$-value in that batch to $0$. Finally, we run $\batchbh$ (or $\batchsbh$) again on the modified $p$-value sequence and check whether the condition $\sum_{j=i+1}^t R_j \leq \sum_{j=i+1}^t \tilde R_j$ holds, where $\tilde R_j$ is the number of rejections in the $j$-th batch of the sequence in which the fixed $p$-value is set to $0$. If we find that the condition holds, then we deem $\batchbh$ (or $\batchsbh$) to be monotone on the given $p$-value sequence.

We do this for every $p$-value sequence created in \secref{mean3} and \secref{mean0}. This means that for each of the experimental settings, we perform this monotonicity check on 500 $p$-value sequences for each $\pi_1$ in $\{0.01,0.02,\dotsc,0.09\}\cup\{0.1,0.2,\dotsc,0.5\}$. For the experimental setting in \secref{mean3}, \figref{monotone_mean3} shows that $\batchbh$ is monotone on at least $97.4\%$ of the sequences, and that $\batchsbh$ is monotone on at least $96.6\%$ of the sequences. For the experimental setting in \secref{mean0}, \figref{monotone_mean0} shows that $\batchbh$ is monotone on at least $99.0\%$ of the sequences, and that $\batchsbh$ is monotone on at least $98.2\%$ of the sequences.

\begin{figure}[H]
	\centerline{
	\includegraphics[width=0.5\columnwidth]{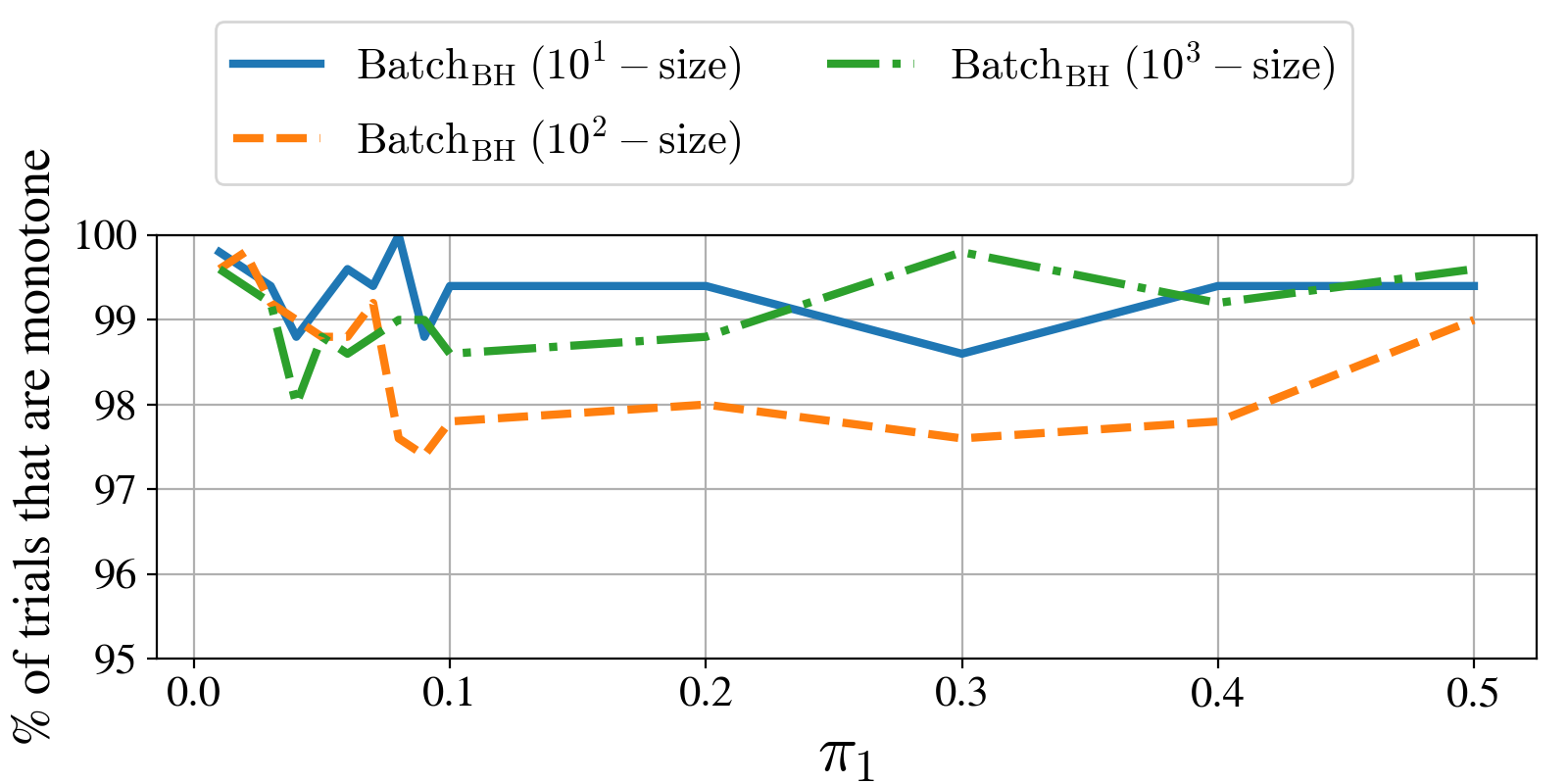}
	\includegraphics[width=0.5\columnwidth]{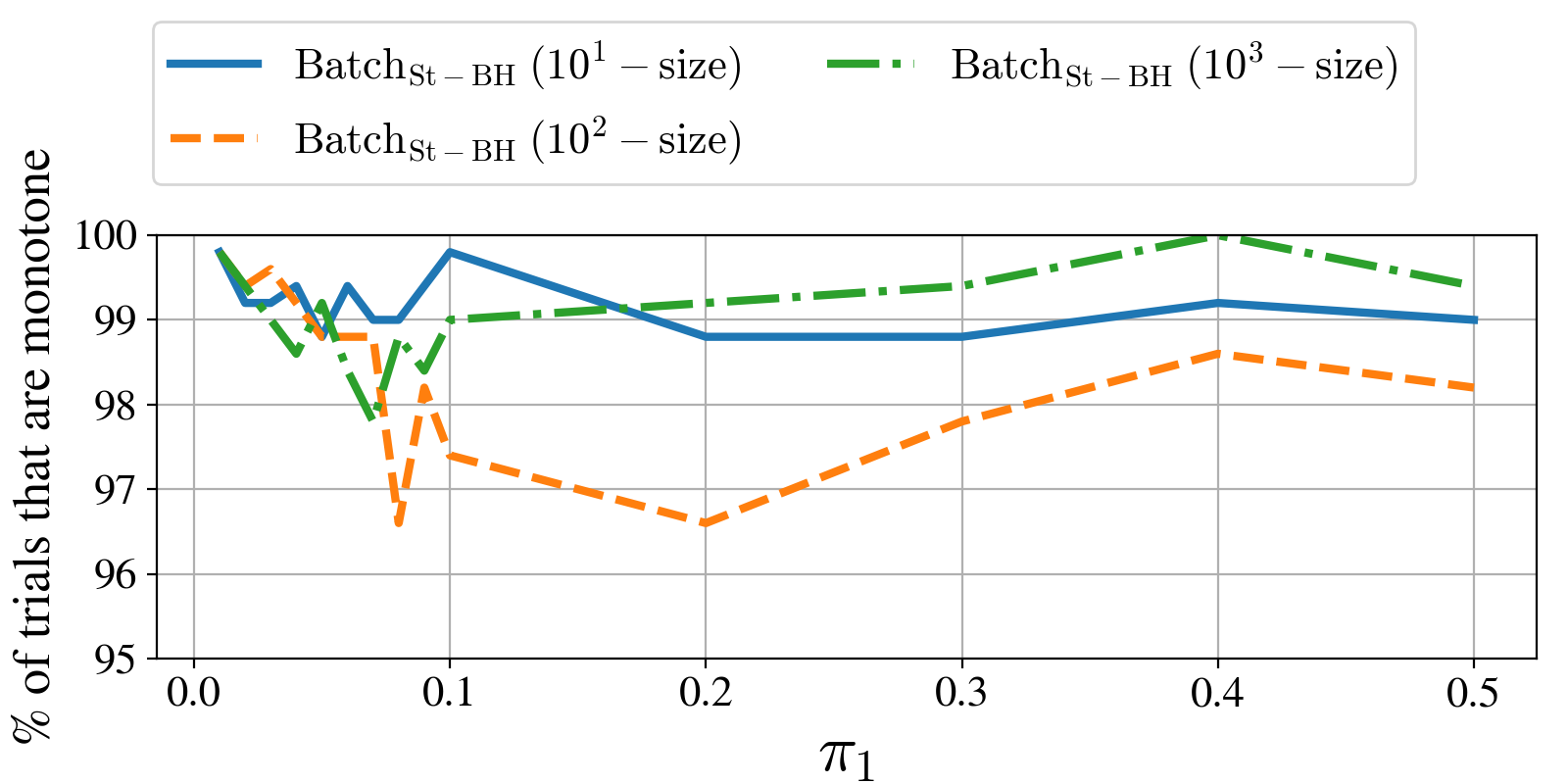}}
	\caption{For $\batchbh$, the minimum is 97.4\%. For $\batchsbh$, the minimum is 96.6\%.}
	\label{fig:monotone_mean3}
\end{figure}


\begin{figure}[H]
	\centerline{
	\includegraphics[width=0.5\columnwidth]{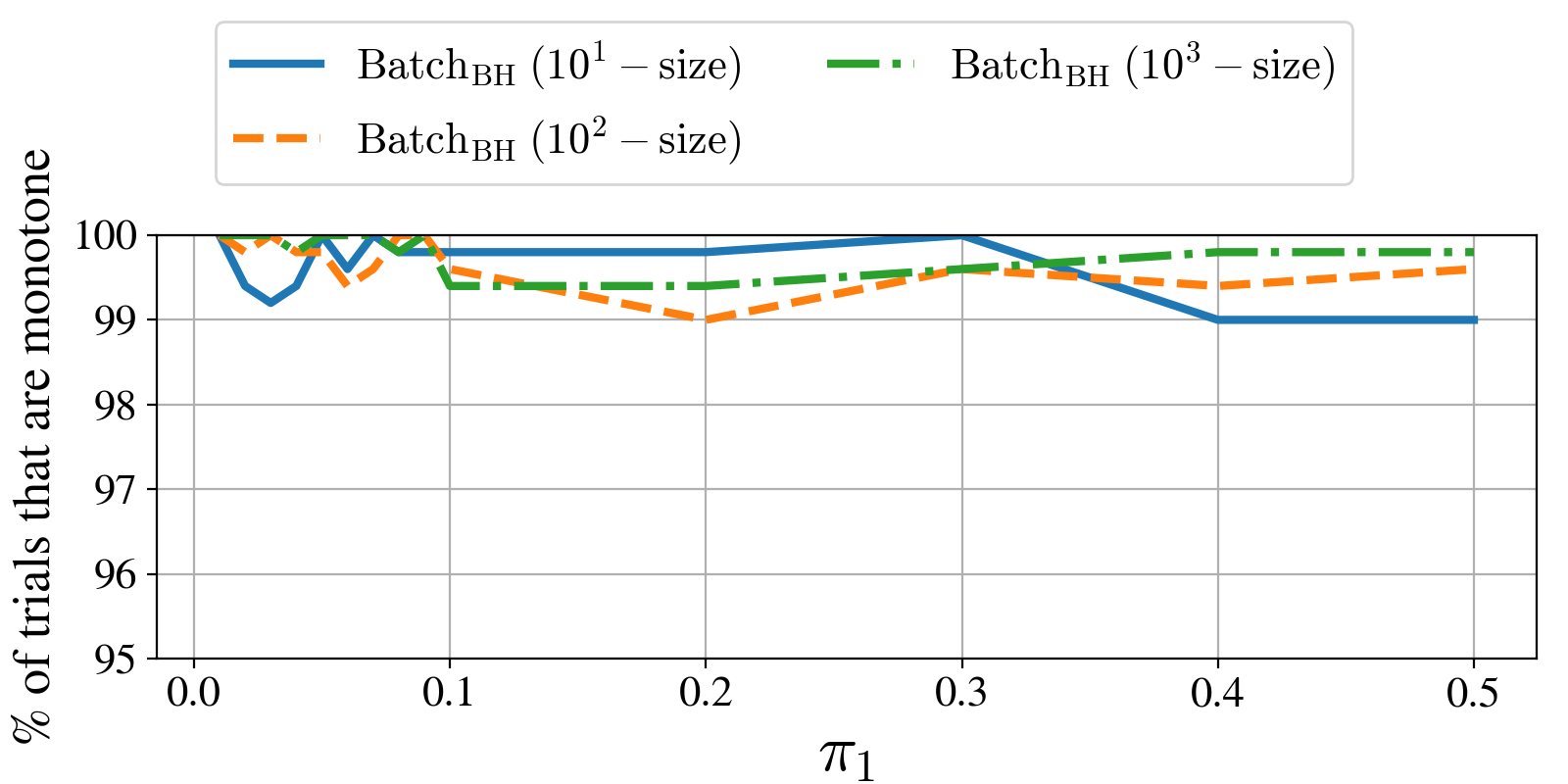}
	\includegraphics[width=0.5\columnwidth]{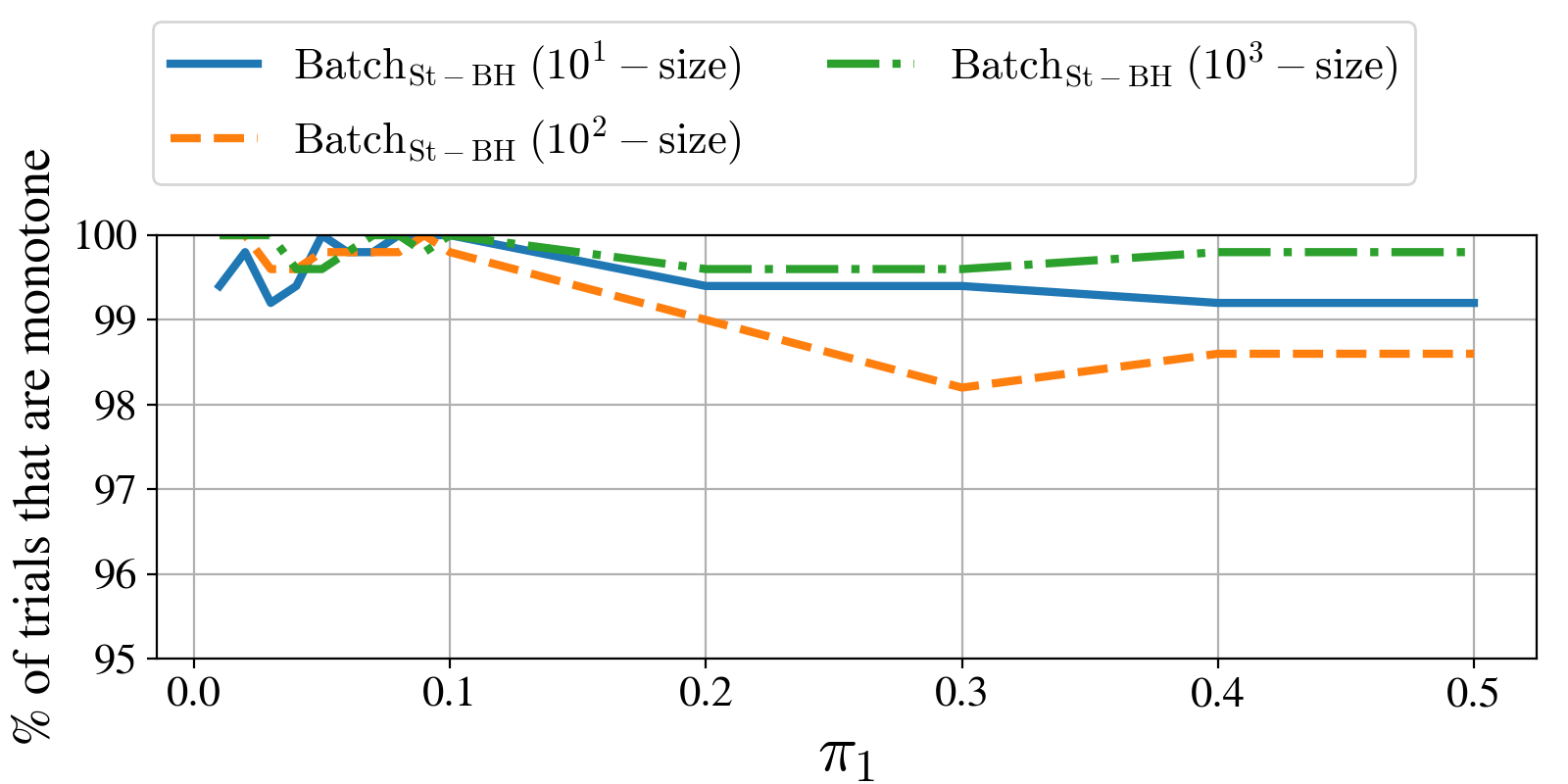}}
	\caption{For $\batchbh$, the minimum is 99.0\%. For $\batchsbh$, the minimum is 98.2\%.}
	\label{fig:monotone_mean0}
\end{figure}


\section{mFDR Control of the BH Procedure}

Recall the definition of \emph{modified}, or \emph{marginal}, false discovery rate up to time $t$:
$$\mfdr(t) = \frac{\EE{\sum_{i=1}^t |\nulls_i \cap \cR_i|}}{\EE{\sum_{i=1}^t |\cR_i|}}.$$

As discussed in \secref{discussion}, $\mfdr$ is a desirable false discovery metric due to its composition properties; ensuring $\mfdr$ control under level $\alpha$ in two disjoint batches of hypotheses guarantees $\mfdr$ at most $\alpha$ when the two batches of results are merged. It is thus natural to analyze $\mfdr$ control of the BH procedure. Unfortunately, it is not difficult to see that the BH algorithm does not imply $\mfdr$ control, although it does provide control asymptotically \citep{genovese2002operating}. Below we present a result of possibly independent interest, which shows that $\mfdr$ can be upper bounded in terms of the stability of the number of rejections. Our result implies favorable properties as the batch size tends to infinity, however it has been noted that the rejection set might be highly unstable for finite batch sizes \citep{gordon2007control}.
%

\begin{proposition}
Let the $p$-values $P_1,\dots,P_n$ be independent. Denote by $\cR$ the set of indices corresponding to the discoveries in the batch, and let $R = |\cR|$. Then, the Benjamini-Hochberg procedure at level $\alpha$ satisfies
$$\mfdr \leq \max\{1,\delta\}\alpha,$$
where $\delta \defn \sup_{i\in\nulls} \frac{\EEst{R}{P_i \in \cR}}{\EEst{R}{P_i \not\in \cR}}$.
\end{proposition}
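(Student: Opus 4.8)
The plan is to control $\mfdr = \frac{\sum_{i\in\nulls}\PP{P_i\in\cR}}{\EE{R}}$ by bounding each term $\PP{P_i\in\cR}$ for a fixed null index $i$. Exactly as in the proof of \thmref{fdrbatchbh}, I would introduce the leave-one-out quantity $R^{(-i)}$, the number of BH rejections at level $\alpha$ when $P_i$ is replaced by $0$ and all other $p$-values are held fixed. The starting point is the standard fact that $\{P_i\in\cR\} = \{P_i\leq\frac{\alpha}{n}R^{(-i)}\}$ and that $R = R^{(-i)}$ on this event; since the $p$-values are independent, $R^{(-i)}$ is a function of $\{P_j : j\neq i\}$ only and hence independent of $P_i$.

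From this I would extract two inequalities, both using the independence of $R^{(-i)}$ and $P_i$. Writing $h(x)\defn\PP{P_i\leq\frac{\alpha}{n}x}$, which is non-decreasing in $x\geq 0$, conditioning on $R^{(-i)}$ gives $\PP{P_i\in\cR} = \EE{h(R^{(-i)})}$, and super-uniformity of the null $p$-value $P_i$ gives $h(x)\leq\frac{\alpha}{n}x$, so $\PP{P_i\in\cR}\leq\frac{\alpha}{n}\EE{R^{(-i)}}$. Separately, since $R = R^{(-i)}$ on $\{P_i\in\cR\}$,
$$\EEst{R}{P_i\in\cR} \;=\; \EEst{R^{(-i)}}{P_i\leq\tfrac{\alpha}{n}R^{(-i)}} \;=\; \frac{\EE{R^{(-i)}h(R^{(-i)})}}{\EE{h(R^{(-i)})}} \;\geq\; \EE{R^{(-i)}},$$
where the last step is the correlation inequality for the two non-decreasing functions $x\mapsto x$ and $x\mapsto h(x)$ of the single random variable $R^{(-i)}$ (a one-dimensional FKG / Chebyshev-type sum inequality). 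Combining the two displays yields the key bound $\PP{P_i\in\cR}\leq\frac{\alpha}{n}\EEst{R}{P_i\in\cR}$.

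It then remains to show $\EEst{R}{P_i\in\cR}\leq\max\{1,\delta\}\EE{R}$. I would decompose $\EE{R} = \EEst{R}{P_i\in\cR}\PP{P_i\in\cR} + \EEst{R}{P_i\notin\cR}\PP{P_i\notin\cR}$ and substitute $\EEst{R}{P_i\notin\cR} = \EEst{R}{P_i\in\cR}/\delta_i$ with $\delta_i\leq\delta$. If $\delta\geq 1$, keeping the factor $1/\delta$ on the second term gives $\EE{R}\geq\EEst{R}{P_i\in\cR}\bigl(\PP{P_i\in\cR} + \PP{P_i\notin\cR}/\delta\bigr)\geq\frac{1}{\delta}\EEst{R}{P_i\in\cR}$; if $\delta<1$, then $\EEst{R}{P_i\notin\cR}\geq\EEst{R}{P_i\in\cR}$ and so $\EE{R}\geq\EEst{R}{P_i\in\cR}$. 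Either way $\EEst{R}{P_i\in\cR}\leq\max\{1,\delta\}\EE{R}$, hence $\PP{P_i\in\cR}\leq\frac{\alpha}{n}\max\{1,\delta\}\EE{R}$; summing over the at most $n$ null indices and dividing by $\EE{R}$ gives $\mfdr\leq\max\{1,\delta\}\alpha$. Degenerate cases ($\EE{R}=0$, or $\PP{P_i\in\cR}=0$, or $P_i$ rejected with probability one) are handled trivially.

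The step I expect to be the main obstacle is the inequality $\EEst{R}{P_i\in\cR}\geq\EE{R^{(-i)}}$: it hinges entirely on the independence of $R^{(-i)}$ and $P_i$ together with the monotonicity of $h$, and it breaks without independence, so a dependent version would need a PRDS-type substitute. A secondary point requiring care is the set identity $\{P_i\in\cR\} = \{P_i\leq\frac{\alpha}{n}R^{(-i)}\}$: the inclusion ``$\subseteq$'' together with ``$R=R^{(-i)}$ on the event'' is precisely what the earlier proofs use, but here I additionally need ``$\supseteq$'', which follows from the coordinatewise monotonicity of the BH rejection count in the $p$-values plus a short counting argument on the order statistics.
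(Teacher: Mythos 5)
Your proof is correct, and it reaches the paper's key intermediate bound $\EE{V}\leq\frac{\alpha}{n}\sum_{i\in\nulls}\EEst{R}{P_i\in\cR}$ and the final $\delta$-comparison by the same overall strategy (leave-one-out plus super-uniformity, then comparing the conditional mean $\EEst{R}{P_i\in\cR}$ to $\EE{R}$), but the middle step is executed by a genuinely different argument. The paper never leaves the conditioning on $\{P_i\in\cR\}$: it decomposes over the value $R=r$, uses the identity $\One{P_i\leq\frac{\alpha}{n}r,R=r}=\One{P_i\leq\frac{\alpha}{n}r,R^{(-i)}=r-1}$, and the bookkeeping $r=(r-1)+1$ turns $\sum_r r\,\PPst{R^{(-i)}=r-1}{P_i\in\cR}$ directly into $\EEst{R^{(-i)}}{P_i\in\cR}+1=\EEst{R}{P_i\in\cR}$, so no correlation inequality is needed. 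You instead pass through the \emph{unconditional} quantity $\EE{R^{(-i)}}$ via $\PP{P_i\in\cR}=\EE{h(R^{(-i)})}\leq\frac{\alpha}{n}\EE{R^{(-i)}}$, and then need the one-dimensional FKG/Chebyshev sum inequality $\EE{R^{(-i)}h(R^{(-i)})}\geq\EE{R^{(-i)}}\EE{h(R^{(-i)})}$ to recover $\EEst{R}{P_i\in\cR}\geq\EE{R^{(-i)}}$. Your route is arguably cleaner (it avoids the sum over $r$ and the slightly delicate conditional-probability manipulations in the paper, which rely on independence of $R^{(-i)}$ from $P_i$ in a conditioned form), at the cost of one extra standard ingredient; both hinge identically on the independence assumption, and your closing case analysis on $\delta\gtrless 1$ is an equivalent, more direct version of the paper's $\epsilon_i$ computation. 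The degenerate cases you flag ($\PP{P_i\in\cR}\in\{0,1\}$, or $\EEst{R}{P_i\notin\cR}=0$ making $\delta$ infinite) are indeed trivial and are glossed over in the paper as well.
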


\begin{proof}
Let $\nulls$ denote the nulls in $[n]$. Let the order statistic corresponding to $\mathbf{P}:=\{P_1,\dots,P_n\}$ be $P_{(1)},\dots, P_{(n)}$. Denote by $\mathbf{P}^{(-i)}$ the set $\mathbf{P}\setminus P_i$, and let $P^{(-i)}_{(j)}$ be the $j$-th order statistic in $\mathbf{P}^{(-i)}$. Define $R^{(-i)}$ to be the number of rejections when running \emph{modified} BH on $\mathbf{P}^{(-i)}$, which rejects the smallest $R^{(-i)}$ $p$-values in $\mathbf{P}^{(-i)}$, where $R^{(-i)} =\max\{1\leq j\leq n-1: P^{(-i)}_{(j)}\leq \frac{\alpha}{n}(j+1)\}$. For any $i,r\in[n]$, we have:
\begin{align*}
    \One{P_i\leq\frac{\alpha}{n}r, R = r} &= \One{P_i\leq\frac{\alpha}{n}r, P_{(r)}\leq \frac{\alpha}{n}r, P_{(r+1)}> \frac{\alpha}{n}(r+1),\dots,P_{(n)}> \frac{\alpha}{n}n}\\
    &= \One{P_i\leq\frac{\alpha}{n}r, P^{(-i)}_{(r-1)}\leq \frac{\alpha}{n}r, P^{(-i)}_{(r)}> \frac{\alpha}{n}(r+1),\dots,P^{(-i)}_{(n-1)}> \frac{\alpha}{n}n}\\
    &= \One{P_i\leq\frac{\alpha}{n}r, R^{(-i)} = r-1}.
\end{align*}
In words, if BH makes $r$ discoveries and a $p$-value $P_i$ is in the rejected set, then the modified BH ran on the set that drops $P_i$ will make \emph{exactly} $r-1$ discoveries.
Denote by $V$ the number of of false discoveries in $\cR$. We can express it as:\begin{align*}
    V &= \sum_{i\in\nulls} \One{P_i \leq \frac{\alpha}{n}R, R > 0}\\
    &= \sum_{i\in\nulls} \sum_{r = 1}^n \One{P_i \leq \frac{\alpha}{n}r, R = r}\\
    &= \sum_{i\in\nulls} \sum_{r = 1}^n \One{P_i \leq \frac{\alpha}{n}r, R = r, P_i\in\cR}\\
    &= \sum_{i\in\nulls} \sum_{r = 1}^n \One{P_i \leq \frac{\alpha}{n}r, R^{(-i)} = r-1, P_i\in\cR}.
\end{align*}
The third equality follows because the event $\{P_i\in\cR\}$ is implied by the event $\{P_i \leq \frac{\alpha}{n}r, R = r\}$, and the last equality just uses the first derivation in this proof. By the super-uniformity of null $p$-values, we have
$$\PP{P_i \leq \frac{\alpha}{n}r, R^{(-i)} = r-1, P_i\in\cR}\leq \frac{\alpha}{n}r \PPst{R^{(-i)} = r-1}{P_i \leq \frac{\alpha}{n}r, P_i\in\cR},$$
where we use the trivial bound $\PPst{P_i\in\cR}{P_i\leq\frac{\alpha}{n}r}\leq 1$. If the $p$-values are independent, then
$$\PPst{R^{(-i)} = r-1}{P_i \leq \frac{\alpha}{n}r, P_i\in\cR} = \PPst{R^{(-i)} = r-1}{P_i\in\cR}.$$
Combining the previous steps, we conclude
\begin{align*}
    \EE{V} & = \sum_{i\in\nulls} \sum_{r = 1}^n \PP{P_i \leq \frac{\alpha}{n}r, R^{(-i)} = r-1, P_i\in\cR}\\
    &\leq \sum_{i\in\nulls} \sum_{r = 1}^n \frac{\alpha}{n}r \PPst{R^{(-i)} = r-1}{P_i\in\cR}\\
    &= \frac{\alpha}{n} \sum_{i\in\nulls} \sum_{r = 1}^n (r-1+1) \PPst{R^{(-i)} = r-1}{P_i\in\cR}\\
    &= \frac{\alpha}{n} \sum_{i\in\nulls} \left( \EEst{R^{(-i)}}{P_i\in\cR} + \sum_{r = 1}^n \PPst{R^{(-i)} = r-1}{P_i\in\cR}\right).
\end{align*}

By the tower property and the first derivation in this proof,
$$\EEst{R^{(-i)}}{P_i\in\cR} = \EE{\EEst{R^{(-i)}}{R, P_i\in\cR}} = \EEst{R-1}{P_i\in\cR}.$$

Also, due to $\sum_{r = 1}^n \PPst{R^{(-i)}= r-1}{P_i\in\cR} = 1$:
$$\EE{V} \leq \frac{\alpha}{n} \sum_{i\in\nulls} \EEst{R}{P_i\in\cR}.$$

Denote by $\epsilon_i \defn \max\{\EEst{R}{P_i\in\cR} - \EEst{R}{P_i\not\in\cR},0\}$. Then
\begin{align*}
    \EE{R} &= \PP{P_i \in \cR}\EEst{R}{P_i\in\cR} + \PP{P_i \not\in \cR}\EEst{R}{P_i\not\in\cR}\\
   	    &\geq \EEst{R}{P_i\in\cR} - \epsilon_i\PP{P_i \not\in \cR}\\
    &\geq \EEst{R}{P_i\in\cR} - \epsilon_i.
\end{align*}

Therefore, $\EEst{R}{P_i\in\cR} \leq \EE{R} + \epsilon_i$, and with this we can conclude
$$\EE{V} \leq \frac{\alpha}{n} \sum_{i\in\nulls} (\EE{R} + \epsilon_i) \leq \alpha (\EE{R} + \max_i \epsilon_i).$$
Define $i^* := \arg\max_i \epsilon_i$. Rearranging the terms in the previous expression, we have
\begin{equation}
\label{eqn:mfdr}
    \frac{\EE{V}}{\EE{R} + \epsilon_{i^*}} = \frac{\EE{V}}{\EE{R}} \left(\frac{1}{1 + \frac{\epsilon_{i^*}}{\EE{R}}}\right) \leq \alpha.
\end{equation}
Now consider the term $\frac{\epsilon_{i^*}}{\EE{R}}$. It is strictly positive if and only if $\EEst{R}{P_{i^*}\in\cR} > \EEst{R}{P_{i^*}\not\in\cR}$, and so the maximizer in $\epsilon_{i^*}=\max\{\EEst{R}{P_{i^*}\in\cR} - \EEst{R}{P_{i^*}\not\in\cR},0\}$ is the first term if and only if $\EEst{R}{P_{i^*}\in\cR} > \EEst{R}{P_{i^*}\not\in\cR}$. Now suppose this indeed holds; then, since $\EE{R}$ is a convex combination of $\EEst{R}{P_{i^*}\in\cR}$ and $\EEst{R}{P_{i^*}\not\in\cR}$, we have
$$\frac{\epsilon_{i^*}}{\EE{R}} \leq \frac{\EEst{R}{P_{i^*}\in\cR} - \EEst{R}{P_{i^*}\not\in\cR}}{\EEst{R}{P_{i^*}\not\in\cR}} = \frac{\EEst{R}{P_{i^*}\in\cR}}{\EEst{R}{P_{i^*}\not\in\cR}} - 1.$$
All previous observations combined, we can conclude that
$$1 + \frac{\epsilon_{i^*}}{\EE{R}} \leq 1 + \max\left\{0,\frac{\EEst{R}{P_{i^*}\in\cR}}{\EEst{R}{P_{i^*}\not\in\cR}} - 1\right\} = \max\left\{1,\frac{\EEst{R}{P_{i^*}\in\cR}}{\EEst{R}{P_{i^*}\not\in\cR}}\right\}:= \max\{1,\delta\},$$
where we define $\delta \defn \sup_{i\in\nulls} \frac{\EEst{R}{P_i \in \cR}}{\EEst{R}{P_i \not\in \cR}}$.
Going back to equation \eqnref{mfdr}, this implies
$$\frac{\EE{V}}{\EE{R}} \leq \max\{1,\delta\} \alpha,$$
as desired.
\end{proof}

\end{document}